\newtheorem{theorem}{Theorem}
\newtheorem{proposition}[theorem]{Proposition}
\newtheorem{corollary}[theorem]{Corollary}
{\theorembodyfont{\rmfamily}%
  \newtheorem{example}[theorem]{Example}
   }
\newenvironment{proof}{\noindent\textit{Proof.}}
{\QED\vskip\theorempostskipamount} 
\newenvironment{proofof}[1]{\noindent\textit{Proof
    \protect{#1}.}}
                       {\QED\vskip\theorempostskipamount}
\def\petitcarre{\vrule height4pt width 4pt depth0pt}
\def\QED{\relax\ifmmode\eqno{\hbox{\petitcarre}}\else{%
  \unskip\nobreak\hfil\penalty50\hskip2em\hbox{}\nobreak\hfil
  \petitcarre
  \parfillskip=0pt \finalhyphendemerits=0\par\smallskip}
  \fi}
\newcommand\A{\mathcal{A}}
\newcommand\E{\mathcal{E}}
\newcommand\T{\mathcal{T}}
\newcommand\C{\mathcal{C}}
\newcommand\G{\mathcal{G}}
\newcommand\LL{\mathcal{L}}
\newcommand\RR{\mathcal{R}}
\newcommand\CR{\mathcal{CR}}
\newcommand\MR{\mathcal{MR}}
\newcommand{\Z}{\mathbb{Z}}
\newcommand{\R}{\mathbb{R}}
\def\un(#1){\underline{#1}\,}
\DeclareMathOperator{\Card}{Card}
\definecolor{ivoire}{rgb}{0.99,0.99,0.8}
\def\id{{\rm id}}
\definecolor{light-gray}{gray}{0.7}
\definecolor{forestgreen}{RGB}{34,139,34}
\definecolor{golden}{rgb}{1.0, 0.84, 0.0}
\numberwithin{theorem}{section}
\numberwithin{equation}{section}
\numberwithin{figure}{section}
\numberwithin{table}{section}
\begin{document}
\begin{frontmatter}

\title{Specular sets}

\author{Val\'erie Berth\'e$^1$, Clelia De Felice$^2$, Vincent Delecroix$^3$, \\ Francesco Dolce$^4$, Julien Leroy$^5$, Dominique Perrin$^4$, \\ Christophe Reutenauer$^6$, Giuseppina Rindone$^4$ \\ $\,$\\ $^1$ CNRS, Universit\'e Paris Diderot \\ $^2$ Universit\`a degli Studi di Salerno \\ $^3$ CNRS, Universit\'e de Bordeaux \\ $^4$ Universit\'e Paris-Est \\ $^5$ Universit\'e de Li\`ege \\ $^6$ Universit\'e du Qu\'ebec \`a Montr\'eal}

\begin{abstract}
We introduce specular sets.
These are subsets of groups which form a natural generalization of free groups.
These sets of words are an abstract generalization of the natural codings of interval exchanges and of linear involutions.
We consider two important families of sets contained in specular sets: sets of return words and maximal bifix codes.
For both families we prove several cardinality results as well as results concerning the subgroup generated by these sets.
\end{abstract}

\begin{keyword}
Tree sets; Return words; Bifix codes; Linear involutions; Specular groups; Free group.
\end{keyword}

\end{frontmatter}

\tableofcontents

\section{Introduction}

We have studied in a series of papers initiated in~\cite{BerstelDeFelicePerrinReutenauerRindone2012} the links between uniformly recurrent languages, subgroups of free groups and bifix codes.
In this paper, we continue this investigation in a situation which involves groups which are not free anymore.
These groups, named here specular, are free products of a free group and of a finite number of cyclic groups of order two.
These groups are close to free groups and, in particular, the notion of a basis in such groups is clearly defined.
It follows from the Kurosh subgroup theorem that any subgroup of a specular group is specular.
A specular set is a subset of such a group which generalizes the natural codings of linear involutions studied in~\cite{BertheDelecroixDolcePerrinReutenauerRindone2014}.
It is a set of words stable by taking the inverse and defined in terms of restrictions on the extensions of its elements.

The main results of this paper are Theorems~\ref{theoremReturns} and~\ref{theoremFiniteIndex}, referred to as the First Return Theorem and the Finite Index Basis Theorem.
The first one asserts that the set of return words to a given word in a recurrent specular set is a basis of a subgroup of index 2, called the even subgroup.
The last one characterizes the symmetric bases of subgroups of finite index of specular groups contained in a specular set $S$ as the finite $S$-maximal symmetric bifix codes contained in $S$. This generalizes the analogous result proved initially in~\cite{BerstelDeFelicePerrinReutenauerRindone2012} for Sturmian sets and extended in~\cite{BertheDeFeliceDolceLeroyPerrinReutenauerRindone2013b} to a more general class of sets, containing both Sturmian sets and interval exchange sets.

There are two interesting features of the subject of this paper.

In the first place, some of the statements concerning the natural codings of linear involutions can be proved using geometric methods, as shown in a separate paper~\cite{BertheDelecroixDolcePerrinReutenauerRindone2014}.
This provides an interesting interpretation of the groups playing a role in the natural codings (these groups are generated either by return words or by maximal bifix codes) as fundamental groups of some surfaces.
The methods used here are, however, purely combinatorial.

In the second place, the abstract notion of specular set gives rise to groups called here specular.
These groups are natural generalizations of free groups, and are free products of a finite number of copies of $\Z$ and of $\Z/2\Z$.
They are called \emph{free-like} in~\cite{Bartholdi2014} and appear at several places in~\cite{Harpe2000}.

The idea of considering recurrent sets of reduced words invariant by taking inverses is connected with the notion of $G$-full words of~\cite{PelantovaStarosta2014} (see Section~\ref{sec:palindromes}).

The paper is organized as follows.
In Section~\ref{sectionPreliminaries}, we recall some notions concerning words, extension graphs and bifix codes.
We define the notion of characteristic which is the Euler characteristic of the extension graph of the empty word.
We consider tree sets of characteristic $1$ or $2$ (tree sets of characteristic $1$ are introduced in~\cite{BertheDeFeliceDolceLeroyPerrinReutenauerRindone2013a}, while the case of arbitrary characteristic is treated in~\cite{DolcePerrin2016}).

In Section~\ref{sectionSpecularGroups}, we introduce specular groups, which form a family with properties very close to free groups.
We deduce from the Kurosh subgroup theorem that any subgroup of a specular group is specular (Theorem~\ref{theoremKurosh}).

In Section~\ref{sectionSpecular} we introduce specular sets.
We recall several results from~\cite{DolcePerrin2015} and~\cite{DolcePerrin2016} concerning the cardinality of some sets included in neutral sets, namely bifix codes (Theorems~\ref{theoremCardinality} and~\ref{corollaryCardinality}).
We give a construction which allows to build specular sets from a tree set of characteristic $1$ using a transducer called doubling transducer (Theorem~\ref{theoremDoubling}).
We make a connection with the notion of $G$-full words introduced in~\cite{PelantovaStarosta2014} and related to the palindromic complexity of~\cite{DroubayJustinPirillo2001}.

In Section~\ref{sectionInvolutions} we recall the definition of a linear involution introduced in~\cite{DanthonyNogueira1988} and we show that the natural coding of a linear involution without connections is a specular set (Theorem~\ref{theoremInvolutionSpecular}).

In Section~\ref{sec:return} we introduce three variants of the notion of set of return words.
We prove several cardinality results concerning these sets (Theorems~\ref{theoremCardRightReturns}, \ref{theoremCardReturns}, \ref{theoremCardFirstMixed}).
We prove that the set of return words to a given word forms a basis of the even subgroup (Theorem~\ref{theoremReturns} referred to as the First Return Theorem) and that the mixed return words form a monoidal basis of the specular group (Theorem \ref{theoremFirstMixed}).

In Section~\ref{sec:groups} we prove several results concerning subgroups generated by bifix codes.
We prove that a set closed by taking inverses is acyclic if and only if any symmetric bifix code is free (Theorem~\ref{theoremFreeness}).
Moreover, we prove that in such a set, for any finite symmetric bifix code $X$, the free monoid $X^*$ and the free subgroup $\langle X \rangle$ have the same intersection with $S$ (Theorem~\ref{theoremSaturation}).

Finally, in Section~\ref{sectionFiniteIndex}, we prove the Finite Index Basis Theorem (Theorem~\ref{theoremFiniteIndex}) and a converse (Theorem~\ref{propositionConverseFIB}).

\paragraph{Acknowledgments}
This paper is an extended version of a conference paper~\cite{words2015}.
The authors thank  Laurent Bartholdi and Pierre de la Harpe for useful indications.
This work was supported by grants from R\'egion \^{I}le-de-France, the ANR projects Dyna3S and Eqinocs.

\section{Preliminaries}
\label{sectionPreliminaries}
In this section, we first recall some notions on sets of words including recurrent, uniformly recurrent and tree sets.
We also recall some definitions and properties concerning bifix codes.

\subsection{Extension graphs}
Let $A$ be a finite alphabet.
We denote by $A^*$ the free monoid on $A$.
We denote by $\varepsilon$ the empty word.
The \emph{reversal} of a word $w=a_1a_2\cdots a_n$ with $a_i\in A$ is the word $\tilde{w} = a_n\cdots a_2a_1$.
A word $w$ is said to be a \emph{palindrome} if $w = \tilde{w}$.

A set of words on the alphabet $A$ is said to be \emph{factorial} if it contains the alphabet $A$ and all the factors of its elements. 

An \emph{internal factor} of a word $x$ is a word $v$ such that $x=uvw$ with $u,w$ nonempty.

Let $S$ be a  set of words on the alphabet $A$.
For $w\in S$, we denote
\begin{eqnarray*}
L_S(w) &=& \{a\in A\mid aw\in S\}\\
R_S(w) &=& \{a\in A\mid wa\in S\}\\
B_S(w) &=& \{(a,b)\in A\times A \mid awb\in S\}
\end{eqnarray*}
and further
\begin{displaymath}
\ell_S(w)=\Card(L_S(w)),\quad r_S(w)=\Card(R_S(w)),\quad b_S(w)=\Card(B_S(w)).
\end{displaymath}
We omit the subscript $S$ when it is clear from the context.
A word $w$ is \emph{right-extendable} if $r(w)>0$,
\emph{left-extendable} if $\ell(w)>0$ and \emph{biextendable} if
$b(w)>0$. A 
factorial set
$S$ is called \emph{right-extendable}
(resp. \emph{left-extendable}, resp. \emph{biextendable}) if every word in $S$ is
right-extendable (resp. left-extendable, resp. biextendable).

A word $w$ is called \emph{right-special} if $r(w)\ge 2$. 
It is called \emph{left-special} if $\ell(w)\ge 2$. 
It is called \emph{bispecial} if it is
both left-special and right-special.

For $w\in S$, we denote 
\begin{displaymath}
m_S(w) = b_S(w)-\ell_S(w)-r_S(w)+1.
\end{displaymath}
The word $w$ is called \emph{weak} if $m_S(w)<0$, \emph{neutral} if $m_S(w)=0$ and \emph{strong} if $m_S(w)>0$.

We say that a factorial set $S$ is \emph{neutral} if every nonempty word in $S$ is neutral.
The \emph{characteristic} of $S$ is the integer $\chi(S)=1-m_S(\varepsilon)$.
Thus a neutral set of characteristic $1$ is such that all words (including the empty word) are neutral. This what is called a neutral
set in~\cite{BertheDeFeliceDolceLeroyPerrinReutenauerRindone2013a}.

A set of words $S\ne\{\varepsilon\}$ is \emph{recurrent} if it is factorial and if for any $u,w \in S$, there is a $v \in S$ such that $uvw \in S$.
An infinite factorial set is said to be \emph{uniformly recurrent} if for any word $u\in S$ there is an integer $n \geq 1$ such that $u$ is a factor of any word of $S$ of length $n$.
A uniformly recurrent set is recurrent.

In~\cite{DolcePerrin2016} it is proved that the converse is true for neutral sets.
As all sets we will deal with are neutral, we usually omit the term ``uniformly'' and just mention whenever we suppose them to be recurrent.

\begin{theorem}[\cite{DolcePerrin2016}]
\label{theo:recurrentur}
A recurrent neutral set is uniformly recurrent
\end{theorem}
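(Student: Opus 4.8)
The plan is to show that a recurrent neutral set $S$ is necessarily infinite (unless trivial) and then to exploit neutrality to control how the number of "new" factors grows with length, forcing uniform recurrence. First I would dispose of the degenerate case: if $S$ is finite then, being recurrent and factorial, a pumping argument shows $S=\{\varepsilon\}$ or else $S$ is infinite; so we may assume $S$ is infinite, factorial, biextendable, and recurrent, with $m_S(w)=0$ for all nonempty $w$ and $m_S(\varepsilon)=1-\chi(S)$. The key quantitative tool is that neutrality makes the extension defect telescope: for a factorial biextendable set, summing $m_S(w)$ over all $w$ of a given length $n$ relates the factor-complexity function $p_S(n)=\Card(S\cap A^n)$ at consecutive lengths. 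Concretely, since $\sum_{|w|=n} m_S(w)$ collapses (each $b_S(w)$ counts pairs, each $\ell_S,r_S$ counts single extensions), neutrality of all nonempty words yields that $p_S(n+1)-p_S(n)$ is constant for $n\ge 1$, equal to $\chi(S)-1$ plus a correction from the empty word; in particular $p_S$ grows at most linearly and the number of right-special words of each length is bounded.

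Next I would use recurrence in its strong form. Fix $u\in S$. Because $S$ is recurrent, for every $w\in S$ there is $v$ with $uvw\in S$, so in particular $u$ occurs in $S$ and, more importantly, every long enough word of $S$ that we care about can be continued to contain $u$. The goal is to bound the gaps between consecutive occurrences of $u$. Here the linear (indeed bounded-difference) behavior of $p_S$ is decisive: if some word $z\in S$ of length $n$ had no occurrence of $u$, then all its factors avoid $u$, and one shows that the set $S_u$ of words of $S$ avoiding $u$ is itself factorial with bounded complexity; if $S_u$ were infinite it would contain arbitrarily long words, and a König-type argument would produce an infinite word all of whose factors lie in $S$ but avoid $u$ — contradicting recurrence, which forces $u$ to reappear. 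Hence $S_u$ is finite, so there is $N$ with every word of $S$ of length $\ge N$ containing $u$, which is exactly uniform recurrence.

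The main obstacle is making the passage from "bounded complexity / neutral" to "$S_u$ finite" rigorous: one must check that avoiding the single word $u$ still leaves a factorial set, that recurrence genuinely prohibits an infinite $u$-free limit word (this uses that from any infinite sequence of $u$-free words of increasing length one extracts, by compactness of $A^{\N}$, a one-sided infinite word whose factors all lie in $S$, and then recurrence applied to $u$ and any prefix of that word gives a return to $u$, a contradiction), and finally that finiteness of $S_u$ yields a uniform bound $N$. The neutrality hypothesis enters precisely to guarantee the complexity control (via Theorem behavior of $m_S$) that prevents $S_u$ from being infinite with unbounded-length words escaping to infinity; without it, weak or strong words could inflate complexity and the argument would break. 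I expect the write-up to invoke the complexity identity for neutral sets and a standard compactness/König lemma, with the rest being routine bookkeeping.
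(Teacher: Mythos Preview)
The paper does not prove this statement itself; it is quoted from \cite{DolcePerrin2016}, so there is no in-paper argument to compare against. Assessing your plan on its own merits, there is a genuine gap at the compactness step. You extract, assuming $S_u$ is infinite, an infinite word $x$ whose factors all lie in $S$ and avoid $u$, and then claim that ``recurrence applied to $u$ and any prefix of that word gives a return to $u$, a contradiction.'' But recurrence only says that for a prefix $w$ of $x$ there is some $v$ with $uvw\in S$; this places $u$ in the longer word $uvw$, not in $w$ or in $x$. A concrete witness that this step cannot work as written: the full shift $S=\{a,b\}^*$ is recurrent, and $b^\omega$ has all its factors in $S$ while avoiding $u=a$. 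The full shift is not neutral (every word is strong), so the theorem itself survives, but your contradiction step never actually invokes neutrality and therefore cannot be valid.

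The use you make of neutrality is also too weak. You say it gives linear complexity and that this ``prevents $S_u$ from being infinite'', but $S_u$ is a factorial subset of $S$ and so automatically inherits at most linear complexity; that does not force finiteness. What neutrality genuinely provides, via Proposition~\ref{propCANT}, is that $\sum_{|w|=n}(r(w)-1)$ is constant for $n\ge 1$, so the number of right-special words of each length is uniformly bounded by $\Card(A)-\chi(S)$. The argument in the cited reference exploits this structural constraint, together with recurrence, to bound the return words to any fixed $w$ directly (compare Theorem~\ref{propositionNew} here for the resulting cardinality), from which uniform recurrence follows. To repair your plan you must replace the K\"onig contradiction by a finiteness argument for the return words that actually uses this bounded branching.
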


The \emph{factor complexity} of a factorial set $S$ of words on an alphabet $A$ is the sequence $p_n=\Card(S\cap A^n)$.
Let $s_n=p_{n+1}-p_n$ and $t_n=s_{n+1}-s_n$ be respectively the first and second order differences sequences of the sequence $p_n$. 

The following result is from~\cite{Cassaigne1997} (see also \cite{BertheRigo2010}, Theorem 4.5.4).

\begin{proposition}
\label{propCANT}
Let $S$ be a factorial set on the alphabet $A$.
One has $t_n = \sum_{w\in S\cap A^n}m(w)$ and $s_n = \sum_{w\in S\cap A^n}(r(w)-1)$ for all $n\ge 0$.
\end{proposition}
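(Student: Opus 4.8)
The plan is to prove both identities by a counting argument based on grouping the words of length $n+1$ (resp. the pairs counted by $p_{n+1}$) according to their length-$n$ prefix, and then telescoping.

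\textbf{First identity.} Each word $u \in S \cap A^{n+1}$ can be written uniquely as $u = wa$ with $w \in S \cap A^n$ and $a \in R_S(w)$; conversely every such pair gives a word of $S$ since $S$ is factorial (so $wa \in S$ iff $a \in R_S(w)$). Hence
\begin{displaymath}
p_{n+1} = \sum_{w \in S \cap A^n} r(w),
\end{displaymath}
and subtracting $p_n = \sum_{w \in S \cap A^n} 1$ gives $s_n = p_{n+1} - p_n = \sum_{w \in S \cap A^n}(r(w) - 1)$. Symmetrically, $p_{n+1} = \sum_{w \in S \cap A^n} \ell(w)$ by grouping on the length-$n$ suffix.

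\textbf{Second identity.} Apply the first identity one step further: $s_{n+1} = \sum_{v \in S \cap A^{n+1}}(r(v) - 1)$. Now write each $v \in S \cap A^{n+1}$ as $v = aw$ with $w \in S \cap A^n$ and $a \in L_S(w)$, and observe that for such $v = aw$ we have $r_S(v) = \Card\{b : awb \in S\} = \Card\{b : (a,b) \in B_S(w)\}$. Summing first over $a \in L_S(w)$ for fixed $w$, the total $\sum_{a \in L_S(w)} r_S(aw)$ counts exactly the pairs $(a,b) \in B_S(w)$, i.e. equals $b(w)$, while $\sum_{a \in L_S(w)} 1 = \ell(w)$. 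Therefore
\begin{displaymath}
s_{n+1} = \sum_{w \in S \cap A^n}\bigl(b(w) - \ell(w)\bigr).
\end{displaymath}
Combining with $s_n = \sum_{w \in S \cap A^n}(r(w) - 1)$ yields
\begin{displaymath}
t_n = s_{n+1} - s_n = \sum_{w \in S \cap A^n}\bigl(b(w) - \ell(w) - r(w) + 1\bigr) = \sum_{w \in S \cap A^n} m(w).
\end{displaymath}

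The argument is essentially bookkeeping with finite sums, so there is no serious obstacle; the only point requiring a little care is the interchange of summation in the second identity, namely checking that summing $r_S(aw)$ over $a \in L_S(w)$ really reconstructs $b_S(w)$, which is immediate from the definition $B_S(w) = \{(a,b) : awb \in S\}$ together with factoriality (both $aw$ and $awb$ lie in $S$ whenever $awb$ does). One should also note that all these sums are finite since $A$ is finite and $S \cap A^n$ is finite, so the rearrangements are unconditionally valid.
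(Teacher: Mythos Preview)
Your proof is correct and is exactly the standard double-counting argument one finds in the references. Note, however, that the paper does not actually prove this proposition: it is quoted as a known result from Cassaigne~\cite{Cassaigne1997} (see also~\cite{BertheRigo2010}, Theorem~4.5.4), so there is no in-paper proof to compare against --- your argument simply supplies the omitted standard proof.
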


Let $S$ be a biextendable set of words.
For $w\in S$, we consider define the undirected graph $\E_S(w)$, or simply $\E(w)$ when $S$ is clear from the context, having as set of vertices the disjoint union of $L(w)$ and $R(w)$ and edges the pairs $(a,b)\in B(w)$.
This graph is called the \emph{extension graph} of $w$.
We sometimes denote $1\otimes L(w)$ and $R(w)\otimes 1$ the copies of $L(w)$ and $R(w)$ used to define the set of vertices of $\E(w)$.
We note that, since $\E(w)$ has $\ell(w)+r(w)$ vertices and $e(w)$ edges, the number $1-m(w)$ is the Euler characteristic of the graph $\E(w)$\footnote{We consider here graphs as 1-dimensional complexes and thus they have no faces.}.

If the extension graph $\E(w)$ is acyclic, then $m(w)\le 0$.
Thus $w$ is weak or neutral.
More precisely, one has in this case that $c=1-m(w)$ is the number of connected components of the graph $\E(w)$.

A biextendable set $S$ is called \emph{acyclic} if for every $w\in S$, the graph $\E(w)$ is acyclic.

A biextendable set $S$ is called a \emph{tree set} of characteristic $c$ if for any nonempty $w \in S$, the graph $\E(w)$ is a tree and if $\E(\varepsilon)$ is a union of $c$ trees (the definition of tree set in~\cite{BertheDeFeliceDolceLeroyPerrinReutenauerRindone2013a} corresponds to a tree set of characteristic $1$).
Note that a tree set of characteristic $c$ is a neutral set of characteristic $c$. We focus here on characteristic $1$ or $2$ (specular sets, that we will introduce in Section~\ref{sectionSpecular}, are tree sets of characteristic $2$ with some symmetric properties).

An infinite word is \emph{episturmian} if the set of its factors is closed under reversal and if it contains for each $n$ at most one word of length $n$ which is right-special.
It is a \emph{strict episturmian} word if the set of its factors has exactly one right-special word of each length and moreover each of these words $u$ is such that $r(u)=\Card(A)$ (see~\cite{BerstelDeFelicePerrinReutenauerRindone2012}).

A \emph{Sturmian set} is the set of factors of a strict episturmian word.
Any Sturmian set is a recurrent tree set of characteristic $1$ (see~\cite{BertheDeFeliceDolceLeroyPerrinReutenauerRindone2013a}). 

\begin{example}
\label{exampleFibo}
Let $A=\{a,b\}$. The \emph{Fibonacci morphism} is the morphism $f:A^*\rightarrow A^*$ defined by $f(a)=ab$ and $f(b)=a$.
The \emph{Fibonacci word} is the fixed-point $f^\omega(a)$ of the Fibonacci morphism.
Its set of factors is a Sturmian set (see~\cite{Lothaire2002}).
\end{example}

\subsection{Bifix codes}
\label{sectionBifix}

A \emph{prefix code} is a set of nonempty words which does not contain any
proper prefix of its elements. A \emph{suffix code} is defined symmetrically.
A \emph{bifix code} is a set which is both a prefix code and a suffix code
(see~\cite{BerstelPerrinReutenauer2009} for a more detailed introduction).

A \emph{coding morphism} for a prefix code $X$ on the alphabet $A$
is a morphism $f:B^*\rightarrow A^*$ which maps bijectively $B$ onto $X$.

Let $S$ be a recurrent set.
A prefix (resp. bifix) code $X\subset S$ is $S$-maximal if it is not properly contained
in a prefix (resp. bifix) code $Y\subset S$. 
Since $S$ is recurrent, a finite $S$-maximal bifix code is also an
$S$-maximal prefix code 
(see~\cite{BerstelDeFelicePerrinReutenauerRindone2012}, Theorem 4.2.2).

For example, for any $n\ge 1$, the set $X=S\cap A^n$ is an $S$-maximal
bifix code.

Let $X$ be a bifix code. Let $Q$ be the set of words without any suffix in $X$
and let $P$ be the set of words without any prefix in $X$.
A \emph{parse} of a word $w$ with respect to a bifix code $X$
is a triple $(q,x,p)\in Q\times X^*\times P$ such that $w=qxp$. 
We denote by $d_X(w)$ the number of parses of a word $w$
with respect to $X$. The $S$-degree of $X$, denoted $d_X(S)$, is
the maximal number of parses with respect to $X$ of a word of $S$.

For example, the set $X=S\cap A^n$
has $S$-degree $n$.

Let $S$ be a recurrent set and let $X$ be a finite bifix code. By Theorem 4.2.8
in~\cite{BerstelDeFelicePerrinReutenauerRindone2012}, $X$ is $S$-maximal
if and only if its $S$-degree is finite. Moreover, in this case,
a word $w\in S$ is such that $d_X(w)<d_X(S)$ if and only if it is an internal factor of a word of $X$. 

The \emph{kernel} of a bifix code $X$ is the set of words of $X$ which
are internal factors of $X$.

We will use bifix codes in relation with
a more general version of extension graphs (see
\cite{BertheDeFeliceDolceLeroyPerrinReutenauerRindone2013a}). For two sets of words $X,Y$ and a word $w \in S$, we denote $L_S^X(w) = \{x \in X \mid xw \in S \}, \ R_S^Y(w) = \{y \in Y \mid wy \in S\}, \ B_S^{X,Y}(w) = \{ (x,y) \in X \times Y \mid xwy \in S \}$.
We also define $\E_S^{X,Y}(w)$
as the undirected graph on the set of vertices which is the disjoint union of $L_S^X(w)$ and $R_S^{Y}(w)$ and edges in $B_S^{X,Y}(w)$.
Set further
$$
 \ell_S^X(w) = \Card(L_S^X(w)),\ r_S^Y(w) = \Card(R_S^Y(w)),\
b_S^{X,Y}(w) = \Card(B_S^{X,Y}(w)).
$$
Finally, for a word $w$, we denote $m_S^{X,Y}(w) = b_S^{X,Y}(w) - \ell_S^X(w) - r_S^Y(w)+1$.
Note that $\E_S^{A,A}(w) = \E_S(w)$, $m_S^{A,A}(w) = m_S(w)$, and so on.

We will use below the following result.

\begin{proposition}
\label{propositionExtended}
Let $S$ be a recurrent set, let $X\subset S$ be a finite $S$-maximal suffix code and let $Y\subset S$ be a finite $S$-maximal prefix code.
\begin{enumerate}
\item If $\E_S(x)$ is acyclic, then  $\E_S^{X,Y}(x)$ is acyclic.
\item If $S$ is neutral, then  $m_S^{X,Y}(w) = m_S(w)$ for every $w\in S$.
\end{enumerate}
\end{proposition}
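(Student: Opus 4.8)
Here is the plan. I would treat both items at once by reducing to a one-sided situation and then building the code $X$ up from the one-letter code by elementary steps, keeping track of how the extension graph changes.

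\textbf{Reduction to one side.} Passing to reversals, $\widetilde S$ is recurrent, and neutral iff $S$ is; $\widetilde X$ is an $\widetilde S$-maximal prefix code and $\widetilde Y$ an $\widetilde S$-maximal suffix code; and $\E_{\widetilde S}^{\widetilde Y,\widetilde X}(\widetilde w)$ is the graph $\E_S^{X,Y}(w)$ with its two sides exchanged, so it has the same cycles and $m_{\widetilde S}^{\widetilde Y,\widetilde X}(\widetilde w)=m_S^{X,Y}(w)$. Hence it suffices to show: if $X$ is a finite $S$-maximal suffix code then, for every finite $S$-maximal prefix code $Y$ and every $w\in S$, replacing $X$ by $A$ on the left does not affect acyclicity of $\E_S^{\,\cdot\,,Y}(w)$ (under the acyclicity hypotheses in force) and does not change the value $m_S^{\,\cdot\,,Y}(w)$ when $S$ is neutral. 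Applying this once to pass from $(X,Y)$ to $(A,Y)$, and once more after reversal to pass from $(A,Y)$ to $(A,A)$, together with $\E_S^{A,A}(w)=\E_S(w)$ and $m_S^{A,A}(w)=m_S(w)$, gives the proposition.

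\textbf{Building up $X$.} Since $S$ is recurrent, a finite suffix code $X\subseteq S$ is $S$-maximal iff every word of $S$ has a suffix in $X$ or is a proper suffix of a word of $X$; equivalently, the tree of suffixes of $X$ is finite and each of its internal nodes $v$ has exactly the sons $\{av\mid a\in L_S(v)\}$. Consequently $X$ is reached from $A$ by a finite sequence $A=X_0,X_1,\dots,X_n=X$ of finite $S$-maximal suffix codes, where $X_{j+1}$ is obtained from $X_j$ by picking $u\in X_j$ and replacing it by $\{au\mid a\in L_S(u)\}$; one checks directly that each such step preserves being a finite $S$-maximal suffix code. Fix $w\in S$ and a step $X_j\to X_{j+1}$. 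If $uw\notin S$ the two graphs coincide. If $uw\in S$, then in $\E_S^{X_j,Y}(w)$ the vertex $u$ is adjacent exactly to the vertices of $R_S^Y(uw)$, and $\E_S^{X_{j+1},Y}(w)$ is obtained from $\E_S^{X_j,Y}(w)$ by deleting $u$ together with its incident edges and gluing in a copy of $\E_S^{A,Y}(uw)$ along its right vertices $R_S^Y(uw)$, identified with the former neighbours of $u$ (the new left vertices $\{au\mid a\in L_S(uw)\}$ being the left vertices of $\E_S^{A,Y}(uw)$, and the edges matching since $auwy\in S$ is the adjacency in both).

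\textbf{Conclusion by induction on $n$}, the case $n=0$ being trivial. For item 2, assume $S$ neutral. The gluing removes $1$ vertex and $r_S^Y(uw)$ edges, then adds $\ell_S(uw)$ vertices and $b_S^{A,Y}(uw)$ edges, so the Euler characteristic changes by $(r_S^Y(uw)-1)+\ell_S(uw)-b_S^{A,Y}(uw)=-m_S^{A,Y}(uw)$, and $m_S^{A,Y}(uw)=m_S(uw)=0$ because $u\neq\varepsilon$ forces $uw\neq\varepsilon$. One proves first the case $Y=A$ (where the inductive step invokes only $m_S^{A,A}=m_S$), then the case $X=A$ by reversal, then the general case, so there is no circularity; since the Euler characteristic equals $1-m_S^{\,\cdot\,,Y}(w)$, the value $m_S^{X,Y}(w)$ equals $m_S^{A,Y}(w)$, hence $m_S(w)$. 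For item 1, use that a graph is a forest iff deleting any vertex leaves its former neighbours in pairwise distinct components: starting from the forest $\E_S^{X_j,Y}(w)$, deleting $u$ separates $R_S^Y(uw)$ into distinct components, so gluing in the forest $\E_S^{A,Y}(uw)$ along exactly these vertices creates no cycle. The acyclicity of the glued-in graphs $\E_S^{A,Y}(uw)$ is the case $X=A$, again obtained by reversal from the case $Y=A$, whose inductive step needs acyclicity of the extension graphs $\E_S(\cdot)$ at the relevant words — which holds in the setting where the proposition is used (e.g. $S$ acyclic).

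\textbf{Main obstacle.} The real content is the structural statement that every finite $S$-maximal suffix code is obtained from $A$ by the elementary expansions above: this rests on the suffix-comparability characterization of $S$-maximality (which uses recurrence) and on checking that each expansion preserves the property. Once that is in place, the description of $\E_S^{X_{j+1},Y}(w)$ as a gluing — in particular the fact that the neighbours of $u$ in $\E_S^{X_j,Y}(w)$ are precisely $R_S^Y(uw)$ and that the new vertices and edges reproduce $\E_S^{A,Y}(uw)$ — is a routine unwinding of the definitions, and both items then follow from the two elementary graph facts used above (invariance of the Euler characteristic under the gluing, and separation of a deleted vertex's neighbours in a forest).
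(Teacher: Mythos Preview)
Your argument is essentially the approach of the two references the paper cites: build the maximal suffix code from $A$ by successive elementary expansions $u\mapsto\{au\mid a\in L_S(u)\}$, describe the effect on $\E_S^{\,\cdot,Y}(w)$ as deleting the vertex $u$ and gluing in a copy of $\E_S^{A,Y}(uw)$, and read off the two assertions from the Euler-characteristic bookkeeping (item~2) and from the ``deleting a vertex of a forest separates its neighbours'' principle (item~1). The reversal trick to reduce both sides to the left side, and the three-step order (first $Y=A$, then $X=A$ by reversal, then general $(X,Y)$) to avoid circularity in item~2, are exactly how the cited proofs are organised.

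One point you flag yourself deserves emphasis. Your induction for item~1 does not prove the literal single-word statement ``$\E_S(x)$ acyclic $\Rightarrow$ $\E_S^{X,Y}(x)$ acyclic'': the gluing step needs $\E_S^{A,Y}(uw)$ to be a forest, which in turn needs $\E_S(uw y)$ acyclic for the various words $uwy$ with $u$ a proper nonempty suffix of an element of $X$ and $y$ a proper prefix of an element of $Y$. The cited Proposition~3.7 is in fact stated for acyclic \emph{sets} $S$ (acyclicity at every word), and every application in the present paper is to tree sets, so nothing is lost; but as written the hypothesis in item~1 is a touch weaker than what the method establishes, and your closing remark is right to signal this rather than sweep it under the rug.
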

\begin{proof}
Statement 1 follows from Proposition 3.7 in~\cite{BertheDeFeliceDolceLeroyPerrinReutenauerRindone2013a}. 
Statement 2 is Proposition 6.2 in~\cite{DolcePerrin2016}.
\end{proof}

Observe that the condition that $X$ (resp. $Y$) is an $S$-maximal suffix (resp. prefix) code is only necessary for Assertion 2
(for Assertion 1, $X$ (resp. $Y$) may be an arbitrary suffix (resp. prefix)
code). Observe also
 that this condition can be replaced by the condition that $X$ (resp. $Y$) is an
$Sw^{-1}$-maximal suffix code (resp. a $w^{-1}S$-maximal prefix code),
where $Sw^{-1}=\{u\in S\mid uw\in S\}$ and symmetrically $w^{-1}S=\{u\in S\mid
wu\in S\}$.

\section{Specular groups}
\label{sectionSpecularGroups}
In this section, we introduce specular groups and we prove some properties of this family of groups.
In particular, using the Kurosh subgroup theorem, we prove that any subgroup of a specular group is specular (Theorem~\ref{theoremKurosh}).

\subsection{Definitions}
\label{subsec:speculardefinitions}
We consider an alphabet $A$ with an involution $\theta:A\rightarrow A$, possibly with some fixed points.
We also consider the group $G_\theta$ generated by $A$ with the relations $a\theta(a)=\varepsilon$ for every $a\in A$.
Thus $\theta(a)=a^{-1}$ for $a\in A$.
The set $A$ is called a \emph{natural} set of generators of $G_\theta$. 

When $\theta$ has no fixed point, we can set $A=B\cup B^{-1}$
by choosing a set of representatives of the orbits of $\theta$
for the set $B$. The group $G_\theta$ is then the free
group on $B$, denoted $F_B$. In general, the group $G_\theta$ is a free product of
a free group and a finite number of copies of $\Z/2\Z$, that
is $G_\theta=\Z^{*i}*(\Z/2\Z)^{*j}$ where $i$  is the number
of orbits of $\theta$ with two elements  and $j$ the number of its fixed points. 
Such a group will be called a \emph{specular group} of type $(i,j)$.
These groups are very close to free groups, as we will see.
The integer $\Card(A)=2i+j$ is called the \emph{symmetric rank} of the specular group $\Z^{*i}*(\Z/2\Z)^{*j}$.

\begin{proposition}
\label{pro:isospecular}
Two specular groups are isomorphic if and only if they have the same type.
\end{proposition}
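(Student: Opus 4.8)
The plan is to show that the type $(i,j)$ of a specular group $G = \Z^{*i}*(\Z/2\Z)^{*j}$ is determined by two algebraic invariants: the torsion behaviour and the abelianization. The "only if" direction is trivial — isomorphic groups have isomorphic everything — so the content is the "if" direction, which amounts to recovering $i$ and $j$ from $G$ alone. First I would recover $j$ (the number of $\Z/2\Z$ factors): by the Kurosh subgroup theorem, or directly by the structure of free products, the conjugacy classes of elements of finite order in a free product $\Z^{*i}*(\Z/2\Z)^{*j}$ are exactly the conjugacy classes of nontrivial elements in the free factors of finite order, and each $\Z/2\Z$ factor contributes exactly one such class (its generator). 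Hence $j$ equals the number of conjugacy classes of involutions in $G$, an isomorphism invariant. (One must check two distinct $\Z/2\Z$ factors give non-conjugate involutions, which follows from the normal form theorem for free products.)

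Next I would recover $i$. Having pinned down $j$, consider the abelianization $G^{\mathrm{ab}} \cong \Z^{i} \times (\Z/2\Z)^{j}$. Its torsion-free rank is $i$, and this is an isomorphism invariant of $G$. So $i$ is determined. Combining, the pair $(i,j)$ is an invariant of the isomorphism type of $G$, which is exactly the claim. Alternatively, and perhaps more in the spirit of the paper, one can phrase everything through the symmetric rank $\Card(A) = 2i+j$ together with $j$: knowing $j$ and, say, the abelianized rank $i$ suffices, and $2i+j$ then follows as well.

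The main obstacle — though a mild one — is the bookkeeping around conjugacy of torsion elements in a free product: one needs the fact that every finite-order element of $\Z^{*i}*(\Z/2\Z)^{*j}$ is conjugate into one of the $\Z/2\Z$ factors, and that the generators of two different such factors are never conjugate. Both are standard consequences of the normal form / cyclic-reduction theory for free products (equivalently, of the action on the associated Bass–Serre tree, where torsion elements must fix a vertex), so I would simply cite this rather than reprove it. Everything else is immediate: the abelianization computation is a one-line functorial calculation, and the "only if" direction needs no argument.
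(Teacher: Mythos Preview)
Your argument is correct, but it is more elaborate than the paper's. The paper dispatches the ``only if'' direction in one line: the abelianization of a specular group of type $(i,j)$ is $\Z^{i}\times(\Z/2\Z)^{j}$, and the fundamental theorem of finitely generated abelian groups already determines \emph{both} $i$ and $j$ from this. There is no need to invoke Kurosh, normal forms, or conjugacy classes of torsion elements to isolate $j$ separately; the torsion subgroup of the abelianization is $(\Z/2\Z)^{j}$, and its $\mathbb{F}_2$-dimension is $j$.

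Your route---counting conjugacy classes of involutions for $j$, then reading off $i$ as the torsion-free rank of $G^{\mathrm{ab}}$---is a genuine alternative. It buys you a finer invariant (the conjugacy picture inside $G$ itself rather than just in its abelian quotient), and indeed the paper later records exactly this fact in Section~\ref{subsec:sub} as a consequence of Kurosh. But for the bare statement of Proposition~\ref{pro:isospecular} the abelianization alone is enough and avoids the bookkeeping about non-conjugacy of generators of distinct $\Z/2\Z$ factors that you flagged as the main obstacle.
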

\begin{proof}
The commutative image of a group of type $(i,j)$ is $\Z^i \times (\Z/2\Z)^j$ and the uniqueness of $i,j$ follows from the fundamental theorem of finitely generated Abelian groups. 
\end{proof}

\begin{example}
\label{exampleSpecularGroup}
Let $A=\{a,b,c,d\}$ and let $\theta$ be the involution which exchanges $b,d$
and fixes $a,c$. Then $G_\theta=\Z*(\Z/2\Z)^2$ is a specular group of symmetric
rank
$4$.
\end{example}
The Cayley graph of a specular group $G_\theta$ with respect to the set
of natural generators $A$ is a regular tree where each vertex has degree
$\Card(A)$.
The specular groups are actually
characterized by this property (see~\cite{Harpe2000}).

\subsection{Subgroups}
\label{subsec:sub}
By the Kurosh subgroup theorem, any subgroup of a free product $G_1*G_2*\cdots *G_n$ is itself a free product of a free group and of groups conjugate to subgroups of the $G_i$ (see~\cite{MagnusKarrassSolitar2004}).
Thus, we have, replacing  the Nielsen-Schreier Theorem of free groups, the following result.

\begin{theorem}
\label{theoremKurosh}
Any subgroup of a specular group is specular.
\end{theorem}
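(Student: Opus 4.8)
The plan is to derive this as an essentially immediate consequence of the Kurosh subgroup theorem together with Proposition~\ref{pro:isospecular}. Recall that a specular group is by definition a free product $\Z^{*i}*(\Z/2\Z)^{*j}$, and the Kurosh subgroup theorem (as cited from~\cite{MagnusKarrassSolitar2004}) states that any subgroup $H$ of a free product $G_1*G_2*\cdots*G_n$ decomposes as $H=F*(*_k\, g_k H_k g_k^{-1})$, where $F$ is a free group and each $H_k$ is a subgroup of one of the factors $G_i$ (and $g_k$ is a suitable conjugating element). So first I would apply this with the $G_i$ being the $\Z$ and $\Z/2\Z$ factors of the ambient specular group.

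Next I would analyze the possible conjugate factors $g_k H_k g_k^{-1}$. A conjugate of a subgroup is isomorphic to that subgroup, so it suffices to identify the subgroups of the factors. The subgroups of $\Z$ are $\{1\}$ and the copies $n\Z\cong\Z$; the subgroups of $\Z/2\Z$ are $\{1\}$ and $\Z/2\Z$ itself. Discarding the trivial factors (which contribute nothing to a free product), every nontrivial conjugate factor is isomorphic either to $\Z$ or to $\Z/2\Z$. Hence $H$ is a free product of a free group $F$ (itself a free product of copies of $\Z$) with finitely many — one must check finiteness here — copies of $\Z$ and $\Z/2\Z$; combining the free group $F$ with the $\Z$-factors, $H\cong\Z^{*i'}*(\Z/2\Z)^{*j'}$ for some $i',j'$, which is precisely a specular group.

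The one genuine subtlety is to guarantee that the number of free factors appearing is \emph{finite}, so that $H$ is a specular group in the sense defined here (free product of a \emph{finite} number of cyclic groups) rather than an infinite free product. For finitely generated $H$ this is automatic, but for arbitrary subgroups one needs the refinement of the Kurosh theorem that applies when the ambient group is finitely generated and virtually free: a specular group is virtually free (the kernel of the map to $(\Z/2\Z)^j$ killing torsion, or the even subgroup, is free of finite rank), and subgroups of finitely generated virtually free groups are again finitely generated virtually free, hence have a finite Kurosh decomposition. I would invoke this structural fact — or, in keeping with the paper's style, simply note that the sets we care about generate \emph{finitely generated} subgroups, for which the argument above applies verbatim. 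The main obstacle, then, is purely this bookkeeping about finiteness of the decomposition; the identification of the isomorphism type as "specular" is routine once that is in hand, and the type $(i',j')$ is well-defined by Proposition~\ref{pro:isospecular}.
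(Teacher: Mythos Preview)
Your approach via the Kurosh subgroup theorem is exactly the paper's primary argument: the theorem is stated immediately after recalling Kurosh, with the one-line observation that subgroups of $\Z$ and of $\Z/2\Z$ are again (trivial or) infinite cyclic or of order two. The paper also gives a second, more constructive proof later in Section~\ref{subsec:basis}, by choosing a Schreier transversal $Q$ for $H$, building the Schreier basis $X=\{paq^{-1}\mid p,q\in Q,\ a\in A,\ pa\notin Q,\ pa\in Hq\}$, and checking directly that $X$ is a monoidal basis of $H$, so that $H\cong G_\sigma$ for an explicit involution $\sigma$ on a set in bijection with $X$. That direct argument is what the paper actually labels as its proof, and it has the advantage of producing the monoidal basis explicitly (used to derive the Schreier-type formula \eqref{SchreierSpecular}).

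One genuine error in your proposal: the sentence ``subgroups of finitely generated virtually free groups are again finitely generated virtually free'' is false. The free group $F_2$ is finitely generated (and virtually free), yet its commutator subgroup is free of infinite rank, hence not finitely generated. So your proposed fix for the finiteness issue does not work. You are right that the finiteness of the decomposition is a genuine subtlety; the paper does not address it either, and its Schreier-basis proof likewise produces an infinite $X$ when $H$ has infinite index. In practice the paper only ever applies the theorem to subgroups of finite index (where Formula~\eqref{SchreierSpecular} gives the finite symmetric rank) or to subgroups known in advance to be finitely generated, so your fallback remark --- that the relevant subgroups are finitely generated --- is the correct way to read the statement.
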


It also follows from the Kurosh subgroup theorem that the elements of order $2$ in a specular group $G_\theta$ are the conjugates of the $j$ fixed points of $\theta$ and this number is thus the number
of conjugacy classes of elements of order $2$.
Indeed, an element of order $2$ generates a subgroup conjugate to one of the subgroups generated by the letters of order 2.

Any specular group $G=G_\theta$ has a free subgroup of index $2$. Indeed, let
$H$ be the subgroup formed of the reduced words of even length. It has
clearly index $2$. It is free because it does not contain any element
of order $2$ (such an element is conjugate to a fixed point of $\theta$
and thus is of odd length). 

A group having a free subgroup of finite index is called \emph{virtually
free} (see \cite{Harpe2000}). On the other hand, a finitely generated group is
said to be \emph{context-free} if, for some presentation, the set
of words equivalent to $\varepsilon$ is a context-free language.
By Muller and Schupp's theorem,
a finitely generated group is virtually free if and only if
it is context-free \cite{MullerSchupp1983}. 
Thus a specular group is context-free. One may
verify this directly as follows. A
context-free grammar generating the words equivalent to $\varepsilon$ for the
natural presentation of a specular group $G=G_\theta$ is the
grammar with one nonterminal symbol $\sigma$ and the rules
\begin{displaymath}
\sigma\rightarrow a\sigma a^{-1}\sigma\quad (a\in A),\quad \sigma\rightarrow \varepsilon.
\end{displaymath}
The proof that this grammar generates the set of words
equivalent to $\varepsilon$ is similar to that used in~\cite{Berstel1979}
for the so-called Dyck-like languages.

We will need two more properties of specular groups.
Both are well-known to hold for free groups.

A group $G$ is called \emph{residually finite} if for every element $g\ne \varepsilon$ of $G$, there is a morphism $\varphi$ from $G$ onto a finite group such that $\varphi(g)\ne \varepsilon$.

\begin{proposition}\label{propResiduallyFinite}
Any specular group is residually finite.
\end{proposition}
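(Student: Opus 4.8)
The plan is to exploit the structure of a specular group $G_\theta$ as a free product $\Z^{*i}*(\Z/2\Z)^{*j}$ together with two classical facts: free groups are residually finite (a theorem of Nielsen, or a consequence of the fact that $F_B$ embeds in $SL_2(\Z)$), and finite groups are trivially residually finite. So residual finiteness of each free factor is immediate. The whole statement will then follow once we know that a free product of finitely many residually finite groups is residually finite. This last fact is a theorem of Gruenberg; I would cite it (e.g. from Magnus–Karrass–Solitar, which is already in the bibliography) rather than reprove it, but I will also indicate the elementary argument below in case the referee prefers self-containedness.

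The key steps, in order. First, recall from Section~\ref{subsec:speculardefinitions} that $G_\theta = \Z^{*i}*(\Z/2\Z)^{*j}$; each $\Z$ factor is residually finite, and each $\Z/2\Z$ factor is finite hence residually finite. Second, invoke the general principle that a free product of a finite family of residually finite groups is residually finite (Gruenberg). Applying it to our finitely many factors yields that $G_\theta$ is residually finite, which is exactly the claim. Third — if one wants to avoid quoting Gruenberg — one can argue directly. Given a reduced word $g=g_1g_2\cdots g_n \neq \varepsilon$ with the $g_k$ lying alternately in distinct free factors, pick for each factor $G_\ell$ a finite quotient $\varphi_\ell\colon G_\ell\to Q_\ell$ that does not kill whichever $g_k$ belong to $G_\ell$; replacing each $Q_\ell$ by a large enough symmetric group in which it embeds, one can moreover arrange all the $Q_\ell$ to be equal to a single finite group $Q$ and the images of the $g_k$ to have order strictly larger than $n$ (for instance by making $Q$ act on enough points so that $\varphi_\ell(g_k)$ is a product of long cycles). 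Then the induced homomorphism $G_\theta\to Q\wr S_?$ — or more simply, the action of $G_\theta$ on the coset space of an explicitly chosen finite-index subgroup — separates $g$ from $\varepsilon$, because a reduced word in a free product acting on a suitable "ping-pong" set of size controlled by $n$ cannot return to its starting point. Alternatively, and most cheaply, since $G_\theta$ has a free subgroup $H$ of index $2$ (noted just above in the excerpt) and free groups are residually finite, and since residual finiteness passes to finite-index overgroups, one concludes immediately.

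I expect the main obstacle to be purely expository: deciding how much of the free-product residual finiteness argument to include versus cite. Mathematically there is nothing delicate once the free factors are identified. The cleanest writeup is probably the last route: $G_\theta$ contains the free subgroup $H$ of even-length words with $[G_\theta:H]=2$; $H$ is residually finite (Nielsen); and a group containing a residually finite subgroup of finite index is residually finite — indeed, given $g\neq\varepsilon$, if $g\in H$ use residual finiteness of $H$ together with the standard fact that a finite-index subgroup of a residually finite group has only finitely many conjugates whose intersection is a finite-index normal subgroup avoiding $g$; if $g\notin H$ then the quotient $G_\theta/H\cong\Z/2\Z$ already separates $g$ from $\varepsilon$. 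This makes the proof two lines and keeps the paper self-contained modulo the residual finiteness of free groups, which is standard.
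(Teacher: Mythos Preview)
Your proposal is correct, and the route you ultimately single out as ``cleanest'' is precisely the one the paper takes: use the free subgroup $H$ of index~$2$, separate $g\notin H$ via $G_\theta/H$, and for $g\in H$ use residual finiteness of the free group $H$ to find a finite-index subgroup $N\le H$ missing $g$, then pass to a finite quotient of $G_\theta$ (the paper phrases this last step as the permutation representation of $G_\theta$ on the cosets of $N$, which is equivalent to your normal-core formulation). Your first route via Gruenberg's theorem on free products of residually finite groups is a legitimate alternative the paper does not use; it is shorter to state but imports a result that is arguably heavier than the two-line argument via the index-$2$ free subgroup, so the paper's choice (and your final recommendation) keeps things more self-contained.
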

\begin{proof}
Let $K$ be a free subgroup of index $2$ in the specular group $G$.
Let $g\ne 1$ be in $G$. If $g\notin K$, then the image of $g$ in $G/K$
is nontrivial. Assume $g\in K$. Since $K$ is free, it is residually finite.
Let $N$ be a normal subgroup of finite index
of $K$ such that $g\notin N$. Consider the representation of $G$
on the right cosets of $N$. Since $g\notin N$, the image of $g$
in this finite group is nontrivial.
\end{proof}

A group $G$ is said to be \emph{Hopfian} if any surjective morphism from $G$ onto $G$ is also injective.
By a result of Malcev, any finitely generated residually finite group is Hopfian (see~\cite{LyndonSchupp2001}, p. 197).
We thus deduce from Proposition~\ref{propResiduallyFinite} the following result.

\begin{proposition}
\label{pro:hopf}
A specular group is Hopfian.
\end{proposition}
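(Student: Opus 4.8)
The statement to prove is Proposition~\ref{pro:hopf}: a specular group is Hopfian. Given everything the paper has just set up, this is a short deduction.

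\medskip

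The plan is to invoke the classical theorem of Malcev (cited in the excerpt as \cite{LyndonSchupp2001}, p.~197): every finitely generated residually finite group is Hopfian. So the only thing that needs checking is that a specular group satisfies the two hypotheses of Malcev's theorem, namely that it is finitely generated and that it is residually finite. Residual finiteness is exactly Proposition~\ref{propResiduallyFinite}, which has already been established. Finite generation is immediate from the definition in Section~\ref{subsec:speculardefinitions}: a specular group $G_\theta$ is generated by the finite alphabet $A$ (equivalently, it is $\Z^{*i}*(\Z/2\Z)^{*j}$ with $i,j$ finite, hence generated by the $2i+j$ natural generators). Combining these two facts with Malcev's theorem yields that $G_\theta$ is Hopfian.

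\medskip

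So the proof I would write is essentially one line: \emph{By Proposition~\ref{propResiduallyFinite}, a specular group is residually finite, and it is finitely generated by its natural set of generators $A$; hence it is Hopfian by Malcev's theorem (\cite{LyndonSchupp2001}, p.~197).} If one wanted to be slightly more self-contained one could recall the short argument for Malcev's theorem — a surjective endomorphism $\varphi$ of a finitely generated group $G$ permutes, for each $n$, the (finite) set of subgroups of index $n$, and if $g \in \Ker\varphi$ with $g \neq 1$ one picks, by residual finiteness, a finite-index normal subgroup $N$ avoiding $g$; then $\varphi^{-1}(N)$ has the same index as $N$ but there are only finitely many such subgroups and $\varphi$ cycling through them forces $g \in N$, a contradiction — but since the excerpt explicitly cites Malcev's result, quoting it is cleaner and matches the style of the surrounding propositions.

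\medskip

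There is essentially no obstacle here: the content has all been front-loaded into Proposition~\ref{propResiduallyFinite} and the standing definition of specular group, and the ``hard part'' (residual finiteness, which in turn rested on the Nielsen–Schreier/Kurosh machinery and residual finiteness of free groups) is already done. The one thing to be careful about is simply not to overstate: the statement is only about \emph{surjective} endomorphisms being injective, which is precisely what Hopfian means, so no extra work beyond citing Malcev is required.
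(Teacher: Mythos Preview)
Your proposal is correct and matches the paper's approach exactly: the paper also deduces the result directly from Malcev's theorem together with Proposition~\ref{propResiduallyFinite} (residual finiteness) and the obvious finite generation of $G_\theta$. There is nothing to add.
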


\subsection{Monoidal basis}
\label{subsec:basis}
A word on the alphabet $A$ is $\theta$-\emph{reduced} (or simply \emph{reduced}) if it has no factor of the form $a\theta(a)$ for $a\in A$.
It is clear that any element of a specular
group is represented by a unique reduced word.

A subset of a group $G$ is called \emph{symmetric} if it is closed under taking inverses.
A set $X$ in a specular group $G$ is called a \emph{monoidal basis} of $G$ if it is symmetric, if the monoid that it generates is $G$  and if any product $x_1x_2\cdots x_m$ of elements of $X$ such that $x_kx_{k+1}\ne \varepsilon$ for $1\le k\le m-1$ is distinct of $\varepsilon$.

\begin{example}
The alphabet $A$ is a monoidal basis of $G_\theta$.
\end{example}

The previous example shows that the symmetric rank of a specular group is the cardinality of any monoidal basis (two monoidal bases have the same cardinality since the type is invariant by isomorphism by Proposition~\ref{pro:isospecular}).

Let $H$ be a subgroup of a specular group $G$.
Let $Q$ be a set of reduced words on $A$ which is a prefix-closed set of representatives of the right cosets $Hg$
of $H$.
Such a set is traditionally called a \emph{Schreier transversal} for $H$
(the proof of its existence is classical in the free group and it
is the same in any specular group).

Let
\begin{equation}
X=\{paq^{-1}\mid a\in A, p,q\in Q, pa\not\in Q, pa\in Hq\}.\label{setSchreier}
\end{equation}
Each word $x$ of $X$ has a unique factorization $paq^{-1}$ with $p,q\in Q$
and $a\in A$. The letter $a$ is called the \emph{central part} of $x$.
The set $X$
is a monoidal basis of $H$, called the \emph{Schreier basis}
relative to $Q$.

\begin{proposition}
Let $H$ and $Q$ be as above and let $X$ be a Schreier basis relative to $Q$.
Then $X$ is closed by taking inverses.
\end{proposition}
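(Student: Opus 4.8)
The plan is to show that the set $X$ defined by~\eqref{setSchreier} is invariant under the map $x\mapsto x^{-1}$, using only the fact that $Q$ is a prefix-closed Schreier transversal. Let me take a generic element $x=paq^{-1}\in X$, so that $a\in A$, $p,q\in Q$, $pa\notin Q$ and $pa\in Hq$. The inverse in the specular group is $x^{-1}=q\,\theta(a)\,p^{-1}$ (recall $a^{-1}=\theta(a)$), and I must exhibit this as an element of $X$, i.e.\ find a letter $b\in A$ and words $p',q'\in Q$ with $x^{-1}=p'bq'^{-1}$, $p'b\notin Q$, $p'b\in Hq'$.

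First I would argue that the natural candidate is $b=\theta(a)$, $p'=q$, $q'=p$. Two things need checking: that $q\,\theta(a)\notin Q$, and that $q\,\theta(a)\in Hp$. For the membership condition, the hypothesis $pa\in Hq$ means $Hpa=Hq$, hence $Hq\,\theta(a)=Hpa\,\theta(a)=Hp$ (the relation $a\theta(a)=\varepsilon$ holds in $G$), so indeed $q\,\theta(a)\in Hp$. Next, $q\,\theta(a)\notin Q$: if it were in $Q$, then since $Q$ is a set of \emph{reduced} coset representatives and distinct elements of $Q$ lie in distinct cosets, $q\theta(a)\in Q\cap Hp$ would force $q\theta(a)=p$ (as $p\in Q\cap Hp$ and the representative is unique); but then $pa = q\theta(a)a = q$ (using $\theta(a)a=\varepsilon$), contradicting $pa\notin Q$ since $q\in Q$. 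A small subtlety to dispatch here is reducedness: one should note that $q\theta(a)$, when it fails to be reduced, reduces to a shorter word — this happens precisely when $q$ ends in $a$; I would check that in that case $q=pa$ cannot hold (lengths) and handle the membership/non-membership statements in terms of the reduced form, or simply observe that $Hp$ contains exactly one element of $Q$ regardless of reducedness of the expression $q\theta(a)$, which is what the argument above actually uses.

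The remaining point is that $x^{-1}=q\,\theta(a)\,p^{-1}$ really is the \emph{canonical} factorization of the form "element of $Q$, times a letter, times inverse of an element of $Q$" — but this is automatic once $b=\theta(a)\in A$, $q,p\in Q$, and the two conditions $qb\notin Q$, $qb\in Hp$ are established, since that is verbatim the defining form in~\eqref{setSchreier}. Hence $x^{-1}\in X$. Since $x$ was arbitrary and $(x^{-1})^{-1}=x$, the map $x\mapsto x^{-1}$ is a bijection of $X$ onto itself, i.e.\ $X$ is symmetric.

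I expect the main (and only mild) obstacle to be bookkeeping around reducedness: ensuring that the word $q\theta(a)$ is handled correctly when it is not already reduced, and being careful that "$\notin Q$" and "$\in Hq'$" are statements about group elements, not about literal strings. Everything else is a direct unwinding of the definition of $X$ together with the defining property $Hpa=Hq$ of the central part.
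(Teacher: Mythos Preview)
Your proposal is correct and follows essentially the same route as the paper: both take $x=paq^{-1}\in X$, write $x^{-1}=qa^{-1}p^{-1}$, and rule out $qa^{-1}\in Q$ by observing that uniqueness of coset representatives would then force $qa^{-1}=p$, whence $pa=q\in Q$, a contradiction. You are simply more explicit than the paper in checking the coset condition $qa^{-1}\in Hp$ and in flagging the reducedness bookkeeping, but the argument is the same.
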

\begin{proof}
Let $x=paq^{-1} \in X$, then  $x^{-1}=qa^{-1}p^{-1}$.
We cannot have $qa^{-1}\in Q$ since otherwise $p\in Hqa^{-1}$ implies
$p=qa^{-1}$ by uniqueness of the coset representative and finally $pa\in Q$.
It generates $H$ as a monoid because if $x=a_1a_2\cdots a_m\in H$
with $a_i\in A$,
then $x=(a_1p_1^{-1})(p_1a_2p_2^{-1})\cdots(p_{m-1}a_m)$ with $a_1\cdots a_k\in Hp_k$
for $1\le k\le m-1$ is a factorization of $x$ in elements of $X\cup \{1\}$.
Finally, if a product $x_1x_2\cdots x_m$ of elements of $X$ 
is equal to $1$, then $x_kx_{k+1}=1$ for some index $k$
since the central part $a$ never
cancels in a product of two elements of $X$.
\end{proof}

One can deduce directly Theorem~\ref{theoremKurosh} from these properties of $X$.

\begin{proofof}{of Theorem~\ref{theoremKurosh}}
Let $H$ be a subgroup of a specular group $G$, $Q$ be a Shreier transversal for $H$ and $X$ be the Schreier basis relative to $Q$.
Let $\varphi:B\rightarrow X$ be a bijection from a set $B$ onto $X$ which extends to a morphism from $B^*$ onto $H$.
Let $\sigma:B\rightarrow B$ be the involution sending each $b$ to $c$ where $\varphi(c)=\varphi(b)^{-1}$.
Since the central parts never cancel, if a nonempty word $w\in B^*$  is $\sigma$-reduced then $\varphi(w)\ne \varepsilon$.
This shows that $H$ is isomorphic to the group $G_\sigma$.
Thus $H$ is specular.
\end{proofof}

If $H$ is a subgroup of index $n$ of a specular group $G$ of symmetric rank $r$,
the symmetric rank $s$ of $H$ is
\begin{equation}
s=n(r-2)+2. \label{SchreierSpecular}
\end{equation}
This formula replaces Schreier's Formula (which corresponds to the case $j=0$). It can be proved as follows.
Let $Q$ be a Schreier transversal for $H$ and let $X$ be the
corresponding Schreier basis.
 The number
of elements of $X$ is $nr-2(n-1)$. Indeed, this
is the number of pairs $(p,a)\in Q\times A$
 minus the $2(n-1)$ pairs $(p,a)$ such that $pa\in Q$ with $pa$
reduced or $pa\in Q$ with $pa$ not reduced.  This gives Formula
\eqref{SchreierSpecular}.

\begin{example}
\label{exampleEvenLength}
Let $G$ be the specular group of Example~\ref{exampleSpecularGroup}.
Let $H$ be the subgroup formed by the elements represented by a reduced word of even length.
The set $Q=\{\varepsilon,a\}$ is a prefix-closed set of representatives of the two cosets of $H$.
The representation of $G$ by permutations on the cosets of $H$ is represented in Figure~\ref{figureSchreier}.

\begin{figure}[hbt]
\centering\gasset{Nadjust=wh}
\begin{picture}(20,20)(0,-7)
\node(1)(0,0){$\varepsilon$}
\node(a)(20,0){$a$}

\drawedge[curvedepth=5](1,a){$a,b,c,d$}
\drawedge[curvedepth=5](a,1){$a,b,c,d$}
\end{picture}
\caption{The representation of $G$ by permutations on the cosets of $H$.}
\label{figureSchreier}
\end{figure}
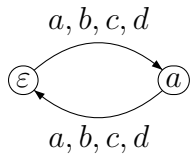
The monoidal basis corresponding to Formula \eqref{setSchreier} is 
$X=\{ab,ac,ad,ba,ca,da\}$.
The symmetric rank of $H$ is $6$, in agreement with Formula~\eqref{SchreierSpecular} and $H$ is a free group of rank $3$.
\end{example}

\begin{example}
Let again $G$ be the specular group of Example~\ref{exampleSpecularGroup}.
Consider now the subgroup $K$ stabilizing $1$ in the representation of $G$ by permutations on the set $\{1,2\}$ of Figure~\ref{figureSchreier2}.

\begin{figure}[hbt]
\centering\gasset{Nadjust=wh}
\begin{picture}(20,20)(0,-7)
\node(1)(0,0){$1$}\node(2)(20,0){$2$}

\drawloop[loopangle=180](1){$a,c$}
\drawedge[curvedepth=5](1,2){$b,d$}
\drawedge[curvedepth=5](2,1){$b,d$}
\drawloop[loopangle=0](2){$a,c$}
\end{picture}
\caption{The representation of $G$ by permutations on the cosets of $K$.}
\label{figureSchreier2}
\end{figure}
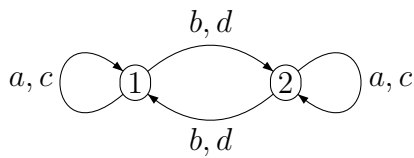

We choose $Q=\{\varepsilon,b\}$.
The set $X$ corresponding to Formula \eqref{setSchreier} is $X=\{a,bad,bb,bcd,c,dd\}$.
The group $K$ is isomorphic to $\Z*(\Z/2\Z)^{*4}$.
\end{example}

The following result, which will be used later (Section~\ref{sec:return}), is a consequence of Proposition~\ref{pro:hopf}.

\begin{proposition}\label{propSymBasis}
Let $G$ be a specular group of type $(i,j)$ and let $X\subset G$ be a symmetric set with $2i+j$ elements. If $X$
generates $G$, it is a monoidal basis of $G$.
\end{proposition}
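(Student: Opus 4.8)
The plan is to build an abstract specular group $G_\sigma$ of the same type $(i,j)$ as $G$, with $A_\sigma$ its natural generating set, and to use the generating map $X\to A_\sigma$ to produce a surjection onto $G$, which by Hopficity must be an isomorphism. Concretely, write $X=\{x_1,\dots,x_{2i+j}\}$, a symmetric set: being symmetric means there is an involution on the indices pairing $x_k$ with $x_k^{-1}$, and the fixed points of that involution are exactly the $x_k$ with $x_k^2=\varepsilon$, i.e.\ the elements of order dividing $2$. So the combinatorial ``type'' of the set $X$ — meaning the number $i'$ of $2$-element orbits and the number $j'$ of fixed points of its involution — satisfies $2i'+j'=2i+j$. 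I would first observe that one cannot immediately conclude $i'=i$, $j'=j$; this will come out at the end.

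First I would introduce an abstract alphabet $A_\sigma$ with an involution $\sigma$ having $i'$ two-element orbits and $j'$ fixed points, together with a bijection $\varphi_0:A_\sigma\to X$ compatible with the involutions ($\varphi_0(\sigma(a))=\varphi_0(a)^{-1}$), and extend it to a group morphism $\varphi:G_\sigma\to G$. Since $X$ generates $G$ as a group (the hypothesis says $X$ generates $G$; note monoidal generation is not assumed, only group generation, but a symmetric set that generates as a group generates as a monoid once inverses are present), $\varphi$ is surjective. Now here is the key point: $G_\sigma$ and $G$ need not yet be known to be isomorphic, so I cannot directly apply Proposition~\ref{pro:hopf} to $\varphi$. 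Instead I would use the commutative image. Abelianizing, $\varphi$ induces a surjection $\Z^{i'}\times(\Z/2\Z)^{j'}\twoheadrightarrow\Z^{i}\times(\Z/2\Z)^{j}$. A surjection of finitely generated abelian groups forces $i'\ge i$, and then, counting the $2$-torsion and using $2i'+j'=2i+j=\Card(A)$, one gets $i'=i$ and $j'=j$. Hence $G_\sigma\cong G$ by Proposition~\ref{pro:isospecular}, and now $\varphi$ is a surjective endomorphism of a specular group, up to this isomorphism.

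At this stage apply Proposition~\ref{pro:hopf}: a specular group is Hopfian, so the surjection $\varphi:G_\sigma\to G\cong G_\sigma$ is injective, hence an isomorphism. It remains to translate ``$\varphi$ is an isomorphism'' into ``$X$ is a monoidal basis of $G$''. By definition I must check that any product $x_{i_1}\cdots x_{i_m}$ of elements of $X$ with $x_{i_k}x_{i_{k+1}}\ne\varepsilon$ for all $k$ is $\ne\varepsilon$. Such a product is the $\varphi$-image of a word $w=a_1\cdots a_m$ over $A_\sigma$ with $a_{k+1}\ne\sigma(a_k)$, i.e.\ a nonempty $\sigma$-reduced word; in $G_\sigma$ a nonempty $\sigma$-reduced word is $\ne\varepsilon$ (the uniqueness of reduced representatives in a specular group, recalled in Section~\ref{subsec:basis}), and since $\varphi$ is injective its image is $\ne\varepsilon$. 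Finally $X$ generates $G$ as a monoid because $\varphi$ is onto and every element of $G_\sigma$ is a product of elements of $A_\sigma$ (each $a$ equals its own inverse or has its inverse $\sigma(a)$ already in $A_\sigma$), so its image is a product of elements of $X$. Together with the symmetry of $X$ this gives all three clauses of the definition of a monoidal basis.

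The main obstacle is precisely the gap flagged above: the hypothesis gives $\Card(X)=2i+j$ but not a priori that the involution on $X$ has the same type as $\sigma$, so one cannot quote $G_\sigma\cong G$ before doing the abelianization count; skipping that step would make the appeal to Hopficity illegitimate. A secondary, routine, point to get right is that ``$X$ generates $G$'' in the hypothesis must be read as generation as a group, and one should note explicitly that a symmetric group-generating set also monoid-generates, since the definition of monoidal basis requires monoid generation. Everything else — extending $\varphi_0$ to $\varphi$, the reduced-word argument — is standard manipulation with specular groups as set up in Section~\ref{sectionSpecularGroups}.
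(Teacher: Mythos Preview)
Your proof is correct and follows essentially the same approach as the paper: use the abelianization to show that the involution on $X$ has the same type $(i,j)$ as the natural generators, then build an involution-respecting bijection, extend to a surjective morphism, and invoke Hopficity. The paper is simply terser --- it asserts up front that ``considering the commutative image of $G$, we obtain that $X$ contains $j$ elements of order $2$'' and then defines the map $A\to X$ directly, whereas you first build $G_\sigma$ of type $(i',j')$ and then argue $i'=i$, $j'=j$; the substance is the same. One small remark: your phrase ``counting the $2$-torsion'' is imprecise (a surjection of abelian groups need not surject on torsion subgroups); the clean way is to reduce mod $2$ to get $(\Z/2\Z)^{i'+j'}\twoheadrightarrow(\Z/2\Z)^{i+j}$, hence $i'+j'\ge i+j$, which together with $i'\ge i$ and $2i'+j'=2i+j$ forces equality.
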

\begin{proof}
Let $A$ be a set of natural generators of $G$.
Considering the commutative image of $G$, we obtain that $X$
contains $j$ elements of order $2$. Thus there is a bijection
$\varphi$ from $A$ onto $X$ such that $\varphi(a^{-1})=\varphi(a)^{-1}$
for every $a\in A$. The map $\varphi$ extends to a morphism
 from $G$ to $G$ which is surjective since $X$ generates $G$. Then $\varphi$ being surjective, it also injective
since $G$ is Hopfian, and thus $X$ is a monoidal basis of $G$.
\end{proof}

\section{Specular sets}
\label{sectionSpecular}
In this section, we introduce specular sets.
We introduce odd and even words and the even code which play an important part in the sequel.
We prove that the decoding of a recurrent specular set by the even code is a union of two recurrent tree sets of characteristic $1$ (Theorem~\ref{theoremDecodingEven}).
We exhibit a family of specular sets obtained as the result of a transformation called doubling, starting from a tree set of characteristic $1$ and invariant by reversal (Theorem~\ref{theoremDoubling}).
In the last part, we relate specular sets with full and $G$-full words, a notion linked with palindromic complexity and introduced in~\cite{PelantovaStarosta2014}.

\subsection{Definition}

We assume given an involution $\theta$ on the alphabet $A$ generating the specular group $G_\theta$.

A symmetric biextendable (and thus factorial)
set $S$ of reduced words on the alphabet $A$
is called
a \emph{laminary set} on $A$ relative to $\theta$ (following~\cite{HilionCoulboisLustig2008}
and~\cite{LopezNarbel2013}). Thus the elements of a laminary set $S$
are elements of the specular group $G_\theta$ and the set $S$ is contained
in $G_\theta$.

A \emph{specular set} is a laminary set on $A$ which is a tree set of characteristic $2$. Thus, in a specular set, the extension graph of every nonempty word
is a tree and the extension graph of the empty word is a union of two
disjoint trees.

The following is a very simple example of a specular set.
\begin{example}\label{exampleabab}
Let $A=\{a,b\}$ and let $\theta$ be the identity on $A$. Then the set
of factors of $(ab)^\omega$ is a specular set
(we denote by $x^\omega$ the word $x$ infinitely repeated).
\end{example}
The next example is due to Julien Cassaigne. 
We frequently refer to it in next sections.

\begin{example}\label{exampleJulien}
Let $A=\{a,b,c,d\}$ and let $S$ be
the set of factors of the fixed point $\sigma^\omega(a)$
of the morphism $\sigma$ from $A^*$ into
itself defined by 
\begin{displaymath}
\sigma(a)=ab,\quad \sigma(b)=cda,\quad \sigma(c)=cd,\quad \sigma(d)=abc.
\end{displaymath}
\begin{figure}[hbt]
\centering\gasset{Nadjust=wh,AHnb=0}
\begin{picture}(50,10)
\put(0,0){
\begin{picture}(20,10)
\node(bl)(0,0){$d$}\node(al)(0,10){$c$}
\node(cr)(20,0){$a$}\node(br)(20,10){$d$}

\drawedge(bl,cr){}\drawedge(al,cr){}
\drawedge(al,br){}
\end{picture}
}
\put(30,0){
\begin{picture}(20,10)
\node(dl)(0,0){$b$}\node(cl)(0,10){$a$}
\node(ar)(20,0){$c$}\node(dr)(20,10){$b$}

\drawedge(bl,cr){}\drawedge(al,cr){}
\drawedge(al,br){}
\end{picture}
}
\end{picture}
\caption{The extension graph $\E_S(\varepsilon)$.}\label{figureExtensionGraph}
\end{figure}

The extension graph of $\varepsilon$ is shown in Figure~\ref{figureExtensionGraph}.
It is shown in~\cite[Example 3.4]{BertheDeFeliceDolceLeroyPerrinReutenauerRindone2013a} that $S$ is a tree set
of characteristic $2$. We will see
later (Example~\ref{exampleJulien2}) that $S$ is a specular set relative to the involution $\theta=(bd)$.
\end{example}
The following result shows in particular
 that in a specular set the two trees forming $\E(\varepsilon)$
are isomorphic since they are exchanged by the bijection $(a,b)\rightarrow (b^{-1},a^{-1})$. 
\begin{proposition}\label{propositionClelia}
Let $S$ be a specular set.
Let $\T_0,\T_1$ be the two trees such that $\E(\varepsilon)=\T_0\cup \T_1$.
For any $a,b\in A$ and $i=0,1$, one has $(1\otimes a,b\otimes 1)\in \T_i$
if and only if  $(1\otimes b^{-1},a^{-1}\otimes 1)\in \T_{1-i}$.
\end{proposition}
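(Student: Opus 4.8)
The plan is to relate the combinatorial structure of $\E_S(\varepsilon)$ to that of $\E_S(a)$ and $\E_S(b^{-1})$, exploiting that $S$ is symmetric (closed under $w \mapsto w^{-1}$) and that it is reduced. The key observation is that for letters $a,b \in A$, the word $ab$ lies in $S$ if and only if $ab$ is reduced and $ab \in S$; since $S$ is symmetric, $ab \in S$ is equivalent to $(ab)^{-1} = b^{-1}a^{-1} \in S$. Writing out what $B_S(\varepsilon)$ means, $(1 \otimes a, b \otimes 1)$ is an edge of $\E_S(\varepsilon)$ exactly when $ab \in S$, and $(1 \otimes b^{-1}, a^{-1} \otimes 1)$ is an edge exactly when $b^{-1}a^{-1} \in S$. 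So the map $(a,b) \mapsto (b^{-1}, a^{-1})$ is a bijection on the vertex set $\bigl(1 \otimes L(\varepsilon)\bigr) \sqcup \bigl(R(\varepsilon) \otimes 1\bigr)$ — here one needs that $L(\varepsilon) = R(\varepsilon) = A$ up to the symmetry, which holds because $S$ is biextendable and symmetric — and it carries edges to edges, hence is a graph isomorphism (more precisely, an involutory automorphism) of $\E_S(\varepsilon)$.

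First I would record carefully that $\E_S(\varepsilon)$ is, by the definition of specular set, a disjoint union of two trees $\T_0$ and $\T_1$. An automorphism of a forest must permute its connected components; since the forest here has exactly two components, the involution $\phi: (a,b) \mapsto (b^{-1},a^{-1})$ either fixes both $\T_0$ and $\T_1$ setwise or swaps them. To finish, I must rule out the possibility that $\phi$ fixes each $\T_i$ setwise, i.e.\ I must show $\phi$ actually exchanges $\T_0$ and $\T_1$. This is where the characteristic-$2$ hypothesis, together with the laminary/reduced structure, has to be used: the crucial point is that $aa^{-1} \notin S$ for every $a \in A$ (words in $S$ are reduced), so no edge of $\E_S(\varepsilon)$ has the form $(1 \otimes a, a^{-1} \otimes 1)$.

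I expect the main obstacle to be precisely this last step — proving $\phi$ swaps the two trees rather than fixing them. The argument I would try: suppose for contradiction that $(1 \otimes a, b \otimes 1) \in \T_i$ and $(1 \otimes b^{-1}, a^{-1} \otimes 1) \in \T_i$ for some $i$ and some edge. Pick any vertex, say $1 \otimes a$, in $\T_i$; then $\phi$ maps $1 \otimes a$ to $a^{-1} \otimes 1$, which by assumption also lies in $\T_i$. So $1 \otimes a$ and $a^{-1} \otimes 1$ are in the same tree, hence joined by a path in $\E_S(\varepsilon)$; applying $\phi$ to this path and using that $\phi$ is an automorphism fixing $\T_i$, together with the reducedness constraint that forbids the "diagonal" edges, I would derive that the tree $\T_i$ contains a cycle (the path concatenated with its $\phi$-image, suitably analyzed), contradicting that $\T_i$ is a tree. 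An alternative and perhaps cleaner route is to count: if $\phi$ fixed each $\T_i$, then since $\phi$ is a fixed-point-free involution on the edge set (fixed-point-freeness again coming from reducedness: an edge $(1 \otimes a, b \otimes 1)$ fixed by $\phi$ would force $b = a^{-1}$, i.e.\ $aa^{-1} \in S$), each $\T_i$ would have an even number of edges, but also each $\T_i$ is a tree, and one can contrast the vertex counts $\ell(\varepsilon) + r(\varepsilon)$ distributed across the two trees with the symmetry $\phi$ imposes to reach a parity contradiction. I would present whichever of these is shortest once the routine verifications (that $\phi$ is a well-defined involutory graph automorphism) are in place.

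Finally, once it is established that $\phi$ exchanges $\T_0$ and $\T_1$, the statement follows immediately: $(1 \otimes a, b \otimes 1) \in \T_i$ if and only if its $\phi$-image $(1 \otimes b^{-1}, a^{-1} \otimes 1)$ lies in $\phi(\T_i) = \T_{1-i}$, which is exactly the claim. I would also remark, as the proposition's preamble suggests, that restricting $\phi$ to a bijection $\T_0 \to \T_1$ gives the asserted isomorphism of the two trees.
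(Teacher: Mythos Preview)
Your framework is correct, and your Option~A---take a path in $\T_i$ from $1\otimes a$ to $a^{-1}\otimes 1$, apply $\phi$ to obtain a second path between the same endpoints, and argue that either they differ (yielding a cycle) or they coincide (forcing the central edge of this odd-length path to be $\phi$-fixed, hence of the forbidden form $(1\otimes c,\,c^{-1}\otimes 1)$)---is exactly the paper's argument.

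Your Option~B, however, is a genuinely different and cleaner route that the paper does not take. To make it precise: $\phi$ is fixed-point-free not only on edges (as you observe, a fixed edge would mean $aa^{-1}\in S$) but also on vertices, since $\phi$ sends every left vertex $1\otimes a$ to a right vertex $a^{-1}\otimes 1$ and vice versa, and these lie in disjoint copies. Hence if $\phi$ fixed $\T_i$ setwise, $\T_i$ would have an even number of vertices and an even number of edges; but a tree satisfies $|V|=|E|+1$, which is a parity contradiction. This avoids the path-chasing entirely and is arguably the argument to present. The paper's approach has the minor advantage of being more explicit about where reducedness enters (the central edge of a $\phi$-self-reverse path), but your counting argument gets there in one line once the fixed-point-freeness on both vertices and edges is noted.
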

\begin{proof}
Assume that $(1\otimes a,b\otimes 1)$
and  $(1\otimes b^{-1},a^{-1}\otimes 1)$ are both in $\T_0$. Since $\T_0$
is a tree, there is a path from $1\otimes a$ to $a^{-1}\otimes 1$. 
We may assume that this path is reduced, that is, does not use
consecutively twice the same edge. Since this path is of odd
length, it has the form $(u_0,v_1,u_1,
\ldots,u_p,v_p)$ with $u_0=1\otimes a$ and $v_p=a^{-1}\otimes 1$.
Since $S$ is symmetric, we also have a reduced path $(v_p^{-1},u_p^{-1},\cdots,u_1^{-1},u_0^{-1})$ which is in $\E(\varepsilon)$ (for $u_i=1\otimes a_i$, we denote $u_i^{-1}=a_i^{-1}\otimes 1$ and similarly for $v_i^{-1}$) and thus in $\T_0$ since $\T_0$ and $\T_1$ are disjoint.
Since $v_p^{-1}=u_0$, these two paths have the same origin and end.
But if a path of odd length is its own inverse, its central
edge has the form $(x,y)$ with $x=y^{-1}$, as one verifies easily
by induction on the length of the path. This is a contradiction with
the fact that the words of $S$ are reduced.
Thus the two paths  are distinct.
This implies that $\E(\varepsilon)$ has
a cycle, a contradiction.
\end{proof}

Following again the terminology of~\cite{HilionCoulboisLustig2008}, we say
that a laminary set $S$ is \emph{orientable} if there exist two factorial
sets $S_+,S_-$ such that $S=S_+\cup S_-$ with $S_+\cap S_-=\{\varepsilon\}$
and for any $x\in S$, one has $x\in S_-$ if and only if $x^{-1}\in S_+$
(where $x^{-1}$ is the inverse of $x$ in $G_\theta$).

The following result shows in particular that for 
any tree set $T$ of characteristic 1
on the alphabet $B$, the set $T\cup T^{-1}$
is a specular set on the alphabet $A=B\cup B^{-1}$.

\begin{theorem}
Let $S$ be a specular set on the alphabet $A$.
Then, $S$ is orientable if and only if there is a partition
$A = A_{+} \cup A_{-}$ of the alphabet $A$ and a tree set $T$ of characteristic 1 on the alphabet $B = A_{+}$ such that $S = T \cup T^{-1}$.
\end{theorem}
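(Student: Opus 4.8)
The plan is to prove the two directions separately. The ``if'' direction is essentially the remark already made just before the statement, so the bulk of the work is the ``only if'' direction. Assume $S$ is orientable, so $S = S_+ \cup S_-$ with $S_+ \cap S_- = \{\varepsilon\}$ and $x \in S_-$ iff $x^{-1} \in S_+$. First I would show that this orientation induces a partition of the alphabet: set $A_+ = A \cap S_+$ and $A_- = A \cap S_-$. Since each letter $a \in A$ is reduced and $a \neq a^{-1}$ would need checking --- actually a letter $a$ with $\theta(a) = a$ would be a fixed point, and then $a \in S_+ \cap S_- = \{\varepsilon\}$ is impossible unless $a=\varepsilon$, so in fact orientability forces $\theta$ to have no fixed points, i.e. $S$ lives in a free group and $A = B \cup B^{-1}$ for $B = A_+$. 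This is the first key observation and I would state it explicitly: an orientable specular set has type $(i,0)$.

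\textbf{Main construction.} Next I would define $T = S_+$ and argue $T$ is a tree set of characteristic $1$ on the alphabet $B = A_+$, with $S = T \cup T^{-1}$ and $T^{-1} = S_-$. The identity $S = T \cup T^{-1}$ is immediate from the orientability hypothesis. The substantive point is that $T$ is a tree set of characteristic $1$. Here the key idea is to relate the extension graph $\E_T(w)$ (computed inside $T$, over the alphabet $B$) to the extension graph $\E_S(w)$ (computed inside $S$, over $A$), for a nonempty $w \in T$. The claim is that $\E_T(w)$ is isomorphic to one of the two connected pieces one gets by ``splitting'' $\E_S(w)$ according to orientation. More precisely, for $w \in T$ nonempty, a biextension $awb \in S$ with $a, b \in A$ lies in $T$ iff $awb \in S_+$; and I would show that the vertices and edges of $\E_S(w)$ that survive under the orientation constraint form exactly $\E_T(w)$, while the complementary vertices/edges form (an isomorphic copy of) $\E_{T^{-1}}(\tilde{}\,)$ via the involution $(1\otimes a, b\otimes 1) \mapsto (1 \otimes b^{-1}, a^{-1}\otimes 1)$ used in Proposition~\ref{propositionClelia}. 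Since $\E_S(w)$ is a tree (being a nonempty word of a specular set) and the involution is a fixed-point-free automorphism of $\E_S(w)$ pairing the two halves, each half is connected and acyclic, hence a tree, and it carries exactly the $B$-extensions of $w$. This shows $\E_T(w)$ is a tree for nonempty $w$.

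\textbf{The empty word and characteristic.} For the characteristic, I would examine $\E_S(\varepsilon) = \T_0 \cup \T_1$, the union of two disjoint trees from the definition of specular set. By Proposition~\ref{propositionClelia}, the map $(1\otimes a, b\otimes 1)\mapsto(1\otimes b^{-1}, a^{-1}\otimes 1)$ exchanges $\T_0$ and $\T_1$. The orientation should be set up so that, say, $\T_0$ records precisely the edges $(a,b)$ with $ab \in S_+ = T$ (after checking consistency: an edge $(a,b)\in\T_i$ corresponds to $ab\in S$, and orientability assigns $ab$ to $S_+$ or $S_-$; the content is that this assignment is constant on each of $\T_0,\T_1$, which follows because $\T_0$ is connected and along any edge-path the orientation cannot flip without creating an element and its inverse both in $S_+$, contradicting $S_+\cap S_-=\{\varepsilon\}$ together with the fact that $S$ has reduced words of even length forming a subgroup). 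Then $\E_T(\varepsilon)$, which has vertex set $L_T(\varepsilon)\sqcup R_T(\varepsilon) = A_+ \sqcup A_+$ and edges $\{(a,b): ab\in T\}$, is isomorphic to $\T_0$, hence a single tree: so $T$ has characteristic $1$. Finally one checks $T$ is biextendable over $B$ (every $w\in T$ has a $B$-left and $B$-right extension in $T$, which follows from $\E_T(w)$ being a nonempty tree) and factorial (clear, since $S$ is and orientation is closed under taking factors --- a factor of an element of $S_+$ is in $S_+$, which I should verify from the definition).

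\textbf{Expected main obstacle.} The delicate point is the ``local constancy of orientation'' argument: showing that along the connected extension graphs the membership in $S_+$ versus $S_-$ of the relevant two-letter (or $awb$-type) words does not oscillate, so that one genuinely gets $T$ defined over a sub-alphabet rather than an incoherent mixture. I expect this to require combining connectedness of the trees $\E_S(w)$ and $\T_0,\T_1$ with the group-theoretic fact (already in the excerpt) that reduced words of even length form a subgroup of index $2$, so that $S_+$ must be a union of ``even-length cosets'' in a suitable sense, forcing coherence. The rest --- the ``if'' direction, factoriality, biextendability --- is routine.
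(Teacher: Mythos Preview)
Your argument for nonempty words contains a real misconception. You propose to split $\E_S(w)$ into two halves via the involution $(1\otimes a,\,b\otimes 1)\mapsto(1\otimes b^{-1},\,a^{-1}\otimes 1)$, but this map does not send $\E_S(w)$ to itself: an edge $(a,b)$ of $\E_S(w)$ encodes $awb\in S$, and its image encodes $b^{-1}w^{-1}a^{-1}\in S$, which is an edge of $\E_S(w^{-1})$, a different graph. So there is no ``fixed-point-free automorphism pairing the two halves'' of $\E_S(w)$.

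The fix is not to repair this splitting but to notice it is unnecessary. The definition of orientable already requires $S_+$ and $S_-$ to be \emph{factorial}. Hence $S_+\subset A_+^*$ and $S_-\subset A_-^*$, and for any nonempty $w\in S_+$, every biextension $awb\in S$ lies in $S_+$ (otherwise $w\in S_-$, contradicting $S_+\cap S_-=\{\varepsilon\}$). Thus $\E_T(w)=\E_S(w)$ outright, which is a tree because $S$ is specular. The empty-word case is then equally short: factoriality forces each edge $(a,b)$ of $\T_0$ with $a\in A_+$ to have $b\in A_+$ (since $ab\in S_+$), so by connectedness all of $\T_0$ sits over $A_+$ and $\E_T(\varepsilon)=\T_0$ is a single tree. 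Your anticipated ``main obstacle'' about local constancy of orientation, and the detour through even-length subgroups, dissolve once you use factoriality; this is exactly the route the paper takes, in a few lines.
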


\begin{proof}
The condition is trivially sufficient.
Let us prove it is necessary and suppose that $S$ is a specular set on the alphabet $A$ which is orientable.
Let $(S_+,S_-)$ be the corresponding pair of subsets of $S$.
 The sets $S_+,S_-$ are biextendable, since $S$ is.
Set $A_+=A\cap S_+$ and $A_-=A\cap S_-$. Then $A=A_+\cup A_-$ is a partition
of $A$ and, since $S_-,S_+$ are factorial, we have $S_+\subset A_+^*$
and $S_-\subset A_-^*$.
Let $\T_0,\T_1$ be the two trees such that $\E(\varepsilon)=\T_0\cup \T_1$.
Assume that a vertex of $\T_0$ is in $A_+$. Then all vertices
of $\T_0$ are in $A_+$ and all vertices of $\T_1$ are in $A_-$.
Moreover, $\E_{S_+}(\varepsilon)=\T_0$ and $\E_{S_-}(\varepsilon)=\T_1$.
Thus $S_+,S_-$ are tree sets of characteristic 1.
\end{proof} 

The following result follows easily from Proposition~\ref{propCANT} (see~\cite[Proposition 2.4]{DolcePerrin2016} for details).

\begin{proposition}\label{propComplexity}
The factor complexity of a specular set containing the alphabet $A$ is given by $p_0=1$ and $p_n=n(k-2)+2$ for $n\ge 1$ with $k=\Card(A)$. 
\end{proposition}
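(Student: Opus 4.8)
The plan is to derive the formula directly from Proposition~\ref{propCANT}, using the defining properties of a specular set. Recall that a specular set $S$ is a tree set of characteristic $2$, hence a neutral set of characteristic $2$, meaning $m_S(\varepsilon) = -1$ and $m_S(w) = 0$ for every nonempty $w \in S$. First I would compute the second difference sequence $t_n$. By Proposition~\ref{propCANT}, $t_n = \sum_{w \in S \cap A^n} m(w)$. For $n = 0$ this sum has the single term $m(\varepsilon) = 1 - \chi(S) = -1$. For $n \ge 1$, every $w$ in the sum is nonempty, so $m(w) = 0$ and hence $t_n = 0$.

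Next I would recover $s_n$ by summing the $t_n$. Since $s_{n+1} - s_n = t_n$ and $t_n = 0$ for all $n \ge 1$, the sequence $s_n$ is eventually constant: $s_n = s_1$ for all $n \ge 1$. To pin down $s_1$, I need one value, and the cleanest is obtained from $p_0, p_1$: since $S$ contains the alphabet $A$ (a specular set is biextendable and factorial, containing $A$), we have $p_1 = \Card(A) = k$, and $p_0 = 1$, so $s_0 = p_1 - p_0 = k - 1$. Then $s_1 = s_0 + t_0 = (k-1) + (-1) = k - 2$. Therefore $s_n = k - 2$ for all $n \ge 1$.

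Finally I would recover $p_n$ by summing the $s_n$. For $n \ge 1$ we have $p_n = p_1 + \sum_{m=1}^{n-1} s_m = k + (n-1)(k-2) = n(k-2) + 2$, and one checks this also holds at $n = 1$ (giving $k$). Together with $p_0 = 1$ this is exactly the claimed formula.

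I do not expect any serious obstacle here; this is a routine telescoping argument once Proposition~\ref{propCANT} and the neutrality/characteristic data are in hand. The only point requiring a little care is making sure the boundary terms at $n = 0$ are handled correctly — in particular that the characteristic $2$ enters precisely through $m(\varepsilon) = -1$, which shifts the constant in the affine complexity from the value $n(k-2)+1$ one would get for a tree set of characteristic $1$ to $n(k-2)+2$. It is also worth noting that the argument only uses that $S$ is neutral of characteristic $2$ and contains $A$; the full tree structure is not needed for the complexity formula, which matches the remark that this follows "easily" from Proposition~\ref{propCANT}.
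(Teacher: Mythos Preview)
Your proof is correct and is exactly the approach the paper indicates: it says the result ``follows easily from Proposition~\ref{propCANT}'' and defers details to~\cite[Proposition 2.4]{DolcePerrin2016}. Your telescoping computation using $m(\varepsilon)=-1$ and $m(w)=0$ for nonempty $w$ is precisely the intended argument, and your observation that only neutrality of characteristic~$2$ (plus $A\subset S$) is needed is also in line with the paper's viewpoint.
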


\subsection{Odd and even words}\label{sectionOddEven}
We introduce a notion which plays, as  we shall see, an important role in the study of specular sets.
Let $S$ be a specular set.
Since a specular set is biextendable, any letter $a\in A$ occurs exactly twice as a vertex of $\E(\varepsilon)$, one as an element of $L(\varepsilon)$ and one as an element of $R(\varepsilon)$.
A letter $a\in A$ is said to be \emph{even} if its two occurrences appear in the same tree.
Otherwise, it is said to be \emph{odd}.
Observe that if a specular $S$ is recurrent, there is at least one odd letter. 

\begin{example}
Let $S$ be the set of factors of $(ab)^\omega$ as in Example~\ref{exampleabab}.
Then $a$ and $b$ are odd.
\end{example}

\begin{example}
Let $S$ be the set of Example~\ref{exampleJulien}.
The letters $b,d$ are even, while $a$ and $c$ are odd.
\end{example}

Let $S$ be a specular set.
A word $w\in S$ is said to be \emph{even} if it has an even number
of odd letters. Otherwise it is said to be \emph{odd}.
The set of even words has the form $X^*\cap S$ where $X\subset S$
 is a bifix code, called the \emph{even code}. The set $X$
is the set of even words without a nonempty even
prefix (or suffix). 
\begin{proposition}\label{propositionEvenCode}
 Let $S$ be a recurrent specular set.
The even code is an $S$-maximal bifix code of $S$-degree $2$.
\end{proposition}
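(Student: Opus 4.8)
The plan is to show two things: first, that the even code $X$ is an $S$-maximal bifix code, and second, that its $S$-degree equals $2$. Throughout I would use the characterization recalled from~\cite{BerstelDeFelicePerrinReutenauerRindone2012}: since $S$ is recurrent, a finite bifix code $X \subset S$ is $S$-maximal if and only if its $S$-degree $d_X(S)$ is finite, and in that case a word $w \in S$ has $d_X(w) < d_X(S)$ exactly when $w$ is an internal factor of a word of $X$. So both assertions reduce to computing $d_X(w)$ for words $w \in S$ and checking it is bounded, with maximum $2$.

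First I would observe that ``being even'' (having an even number of odd letters) is tracked by a morphism $\mu$ from $A^*$ to $\Z/2\Z$ sending each odd letter to $1$ and each even letter to $0$; a word $w$ is even iff $\mu(w) = 0$. The even code $X$ is then, by definition, the set of nonempty words $w \in S$ with $\mu(w)=0$ having no nonempty proper prefix $u$ with $\mu(u) = 0$ — equivalently, the prefixes of $w$ strictly between $\varepsilon$ and $w$ all have $\mu$-value $1$. This makes $X$ a prefix code; symmetry of the construction (using suffixes) — or directly the fact that reading the parity of odd letters from the right gives the same count — shows $X$ is also a suffix code, hence bifix. I would also record that since $S$ is recurrent there is at least one odd letter (noted already in the text), so $X$ is nonempty.

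Next, the degree computation. Given $w \in S$, a parse of $w$ with respect to $X$ is a factorization $w = q x p$ with $q$ having no suffix in $X$, $p$ having no prefix in $X$, and $x \in X^*$. I would argue that $q$ must itself be an ``odd-type'' word in the sense that every nonempty suffix of $q$ has $\mu$-value $1$: indeed if some nonempty suffix $v$ of $q$ had $\mu(v)=0$, the shortest such $v$ would lie in $X$, contradicting that $q$ has no suffix in $X$. Such a $q$ is either $\varepsilon$ or satisfies $\mu(q)=1$ (a single odd letter, or $\varepsilon$; more precisely $q$ is a suffix of $w$ all of whose nonempty suffixes are odd, which forces $q$ to have at most... ) — here the key point is that the set of words all of whose nonempty suffixes have $\mu$-value $1$ has at most two relevant members as candidate $q$'s for a fixed $w$, namely $\varepsilon$ and the longest suffix of $w$ of that type. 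Symmetrically for $p$. So $d_X(w) \le 2$: there are at most two choices of $(q,p)$, and each determines the $X^*$-factorization of the middle part uniquely (since $X$ is a bifix code, factorizations into $X$ are unique). Hence $d_X(S) \le 2$, which already gives $S$-maximality.

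To get $d_X(S) = 2$ exactly, I would exhibit a word of $S$ with two parses. Pick an odd letter $a$ (exists by recurrence). By recurrence of $S$ there is a word $w \in S$ of the form $w = a v a$ with $v \in S$, and using recurrence again one arranges $v$ (hence $awb$-type extensions) so that $ava$ itself is even, i.e. $v$ is even. Then $w = ava$ admits the parse with $q = p = \varepsilon$ and $w \in X^*$ (one must check $ava$ factors into $X$, which follows since $\mu(ava) = 0$), and also the parse $q = a$, middle word $v \in X^*$, $p = a$ — giving $d_X(w) \ge 2$, and more carefully one checks $w$ is not an internal factor of a word of $X$. Combined with $d_X(S) \le 2$ this yields $d_X(S) = 2$.

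The main obstacle I anticipate is the bookkeeping in the degree bound: making rigorous the claim that for a fixed $w$ there are at most two valid ``left contexts'' $q$ (and symmetrically $p$), and that these combine to give at most $2$ and not $4$ parses — this requires noting that once $q$ is fixed with $\mu(q) \in \{0\}$-free suffixes, the parity constraint $\mu(w) = \mu(q) + \mu(x) + \mu(p)$ with $\mu(x) = 0$ links the admissible $q$ and $p$, cutting the count from $2 \times 2$ down to $2$. The rest is routine once the morphism $\mu$ and the suffix/prefix characterizations of $Q$ and $P$ are in place.
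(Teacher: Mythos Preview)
Your approach---computing $d_X(w)$ directly via the parity morphism $\mu$---is different from the paper's and is workable in outline, but your witness for the lower bound fails.

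The paper does not bound $d_X(w)$ globally. It first shows $S$-maximality by the prefix-comparability criterion: if $w$ is even then $w\in X^*$; if $w$ is odd, recurrence gives $u$ with $wuw\in S$, and then one of $wu,wuw$ is even, so $w$ is a prefix of a word in $X^*$. Having $S$-maximality, it uses that a word which is not an internal factor of $X$ has exactly $d_X(S)$ parses, and exhibits such a word (a product of two odd letters in $S$) with two parses. This sidesteps entirely the bookkeeping you flagged as the main obstacle.

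Your upper bound $d_X(w)\le 2$ can be made rigorous. The correct description of $Q\cap S$ is not ``$\varepsilon$ or a single odd letter'' but rather $\varepsilon$ together with words of the form $e_1\cdots e_k a$ with the $e_i$ even letters and $a$ odd; for a fixed $w$ there are therefore at most two prefixes of $w$ in $Q$, namely $\varepsilon$ and the prefix through the first odd letter of $w$. Symmetrically for $P$. The constraint $\mu(q)+\mu(p)=\mu(w)$ (since $\mu(x)=0$) then cuts the four $(q,p)$ candidates to two. (You wrote ``$q$ is a suffix of $w$'' at one point; it is a prefix.)

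The genuine error is the witness $w=ava$ with $a$ odd and $v$ even. This configuration is impossible in a specular set: by Proposition~\ref{propositionPartition}(2)--(3), if $a\in S_{i,1-i}$ and $ava\in S$ then $v\in S_{1-i,i}$, so $v$ is odd, not even. Hence neither of your two claimed parses of $ava$ is actually a parse. The fix is trivial once you have the upper bound: any single odd letter $a$ already has exactly the two parses $(a,\varepsilon,\varepsilon)$ and $(\varepsilon,\varepsilon,a)$, giving $d_X(S)=2$.
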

\begin{proof} 
Let us verify that
any $w\in S$ is comparable for the prefix order with an element of
the even code
$X$. If $w$ is even, it is in $X^*$. Otherwise, since $S$ is recurrent,
there is a word $u$ such that $wuw\in S$. If $u$ is even, then
$wuw$ is even and thus $wuw\in X^*$. Otherwise $wu$ is even and
thus $wu\in X^*$. This shows that $X$ is $S$-maximal. 
 The fact
that it has $S$-degree $2$ follows from the fact that
any product of two odd letters is a word of $X$ which is not an
internal factor of $X$ and has two parses.
\end{proof}

\begin{example}\label{exampleEvenCode}
Let $S$ be the specular set of Example~\ref{exampleJulien}. The letters
$b,d$ are even and the letters $a,c$ are odd. The even code is
\begin{displaymath}
X=\{abc,ac,b,ca,cda,d\}.
\end{displaymath}
\end{example}

Denote by $\T_0,\T_1$ the two trees such that $\E(\varepsilon)=\T_0\cup \T_1$.
We consider the directed graph $\G$ with vertices $0,1$ and edges all the triples
$(i,a,j)$ for $0\le i,j\le 1$ and $a\in A$ such that $(1\otimes b,a\otimes 1)\in \T_i$ and
$(1\otimes a,c\otimes 1)\in \T_j$  for some $b,c\in A$. 
The graph $\G$ is called the
\emph{parity graph} of $S$. Observe that for every letter $a\in A$ there
is exactly one edge labeled $a$ because $a$ appears exactly once as a left
(resp. right) vertex in $\E(\varepsilon)$.

\begin{example}\label{exampleparitygraph}
Let $S$ be the specular set of Example~\ref{exampleJulien}. 
The parity graph of $S$ is represented in Figure~\ref{figureParityGraph}, where we assume that $\mathcal{T}_0$ is the tree on the left of Figure~\ref{figureExtensionGraph} and $\mathcal{T}_1$ is the tree on the right of Figure~\ref{figureExtensionGraph}.
\begin{figure}[hbt]
\centering\gasset{Nadjust=wh}
\begin{picture}(20,10)
\node(0)(0,5){$0$}\node(1)(20,5){$1$}

\drawloop[loopangle=180](0){$d$}
\drawedge[curvedepth=3](0,1){$a$}\drawedge[curvedepth=3](1,0){$c$}
\drawloop[loopangle=0](1){$b$}
\end{picture}
\caption{The parity graph.}\label{figureParityGraph}
\end{figure}
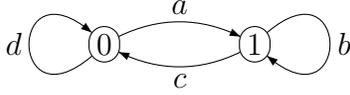
\end{example}

\begin{proposition}\label{propositionPartition}
Let $S$ be a  specular set and let $\G$ be its parity graph.
Let $S_{i,j}$ be the set of words in $S$ which are the label of a path
from $i$ to $j$ in the graph $\G$. 
\begin{enumerate}
\item[(1)] The family $(S_{i,j}\setminus\{\varepsilon\})_{0\le i,j\le 1}$ is a partition of $S\setminus \{\varepsilon\}$.
\item[(2)] For $u\in S_{i,j}\setminus\{\varepsilon\}$ and $v\in S_{k,\ell}\setminus\{\varepsilon\}$, if $uv\in S$,
then $j=k$. 
\item[(3)] $S_{0,0}\cup S_{1,1}$ is the set of even words.
\item[(4)] $S_{i,j}^{-1}=S_{1-j,1-i}$.
\end{enumerate}
\end{proposition}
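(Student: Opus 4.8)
The plan is to analyze the parity graph $\G$ directly, exploiting the defining property that for every letter $a \in A$ there is exactly one edge labeled $a$, together with the tree structure of $\E(\varepsilon)$ and the symmetry properties established in Proposition~\ref{propositionClelia}.

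First I would prove (1). Since each letter $a \in A$ labels a unique edge $(i,a,j)$ of $\G$, any reduced word $w = a_1 a_2 \cdots a_n$ with $n \ge 1$ determines at most one candidate path in $\G$: the edge labeled $a_1$ goes from some $i$ to some $k_1$, the edge labeled $a_2$ goes from some $k_1'$ to some $k_2$, and so on. The content here is that consecutive edges actually compose, i.e. $k_1 = k_1'$; but this is exactly the meaning of part (2), so I would prove (2) first. For (2): if $uv \in S$ with $u = u' a$ ending in letter $a$ and $v = b v'$ beginning with letter $b$, then $ab \in \Fact(S)$, so there is a word (namely $a$) appearing as a right vertex in the tree $\T_j$ of $u$'s last edge and as a left vertex; more precisely the edge labeled $a$ has head $j$ meaning $(1 \otimes a, c \otimes 1) \in \T_j$ for some $c$, and since $ab \in S$ we may take $c = b$, so the edge labeled $b$ has tail determined by which tree contains $1 \otimes a$ paired with $b$ — wait, I need to be careful: the edge labeled $b = (k, b, \ell)$ requires $(1 \otimes b', b \otimes 1) \in \T_k$ and $(1 \otimes b, c' \otimes 1) \in \T_\ell$. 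The relevant fact is that $b$ as a left vertex ($1 \otimes b$) sits in a unique tree, and $a$ as a right vertex ($a \otimes 1$) sits in a unique tree; the edge labeled $a$ records the tree of $a \otimes 1$ as... Let me restate: the head $j$ of the edge labeled $a$ is the index of the tree containing $1 \otimes a$ (the left occurrence of $a$), and the tail $k$ of the edge labeled $b$ is the index of the tree containing $b \otimes 1$ (the right occurrence of $b$)? No — rereading the definition, the edge $(i,a,j)$ requires $(1\otimes b, a \otimes 1) \in \T_i$ and $(1 \otimes a, c \otimes 1) \in \T_j$, so the tail $i$ is the tree of the right occurrence $a \otimes 1$ and the head $j$ is the tree of the left occurrence $1 \otimes a$. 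Hence for $ab \in S$: the head of the $a$-edge is the tree of $1 \otimes a$, and the tail of the $b$-edge is the tree of $b \otimes 1$; but $ab \in S$ means $(1 \otimes a, b \otimes 1)$ is an edge of $\E(\varepsilon)$, so $1 \otimes a$ and $b \otimes 1$ lie in the \emph{same} tree. This gives head of $a$-edge $=$ tail of $b$-edge, which is precisely (2), and simultaneously shows every $w \in S \setminus \{\varepsilon\}$ is the label of a genuine path, proving that the $S_{i,j} \setminus \{\varepsilon\}$ cover $S \setminus \{\varepsilon\}$; disjointness is the uniqueness of the edge labeled by each letter (so the path, hence its endpoints, are determined by $w$).

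Next, (3): a path from $i$ to $j$ in $\G$ changes the vertex exactly when it traverses an edge that goes between $0$ and $1$. An edge $(i,a,j)$ has $i \ne j$ iff the left occurrence $1 \otimes a$ and the right occurrence $a \otimes 1$ of the letter $a$ lie in different trees — which is precisely the definition of $a$ being an odd letter. So along the path spelling $w$, each odd letter flips the current vertex and each even letter preserves it; therefore the path returns to its starting vertex (i.e. $i = j$) iff $w$ contains an even number of odd letters, i.e. iff $w$ is even. Thus $S_{0,0} \cup S_{1,1}$ is exactly the set of nonempty even words (and one adds $\varepsilon$ by convention or notes the statement concerns the coincidence of these sets, handling $\varepsilon$ as the label of the empty path at each vertex). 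Finally, (4): if $w = a_1 \cdots a_n$ is the label of a path $i = v_0 \xrightarrow{a_1} v_1 \xrightarrow{a_2} \cdots \xrightarrow{a_n} v_n = j$, I must show $w^{-1} = a_n^{-1} \cdots a_1^{-1}$ labels a path from $1-j$ to $1-i$. By Proposition~\ref{propositionClelia}, $(1 \otimes a, b \otimes 1) \in \T_t$ iff $(1 \otimes b^{-1}, a^{-1} \otimes 1) \in \T_{1-t}$; reading off the edge definition, this says: the edge labeled $a$ is $(t', a, t)$ where $t$ is the tree of $1 \otimes a$, and correspondingly the edge labeled $a^{-1}$ has the tree of $a^{-1} \otimes 1$ equal to $1-t$, i.e. edge labeled $a^{-1}$ is $(1-t, a^{-1}, \ast)$. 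Tracking this through each step of the path, the edge labeled $a_{k}^{-1}$ runs from $1 - v_{k}$ to $1 - v_{k-1}$, so the reversed-and-inverted word $a_n^{-1} a_{n-1}^{-1} \cdots a_1^{-1}$ labels the path $1 - v_n \to 1 - v_{n-1} \to \cdots \to 1 - v_0$, i.e. from $1-j$ to $1-i$; hence $S_{i,j}^{-1} \subseteq S_{1-j,1-i}$, and applying this inclusion again gives equality.

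The main obstacle is bookkeeping the tail/head conventions of the edges of $\G$ against the left/right occurrences of letters in $\E(\varepsilon)$ and getting the direction of the correspondence in Proposition~\ref{propositionClelia} exactly right; once the dictionary ``head of $a$-edge $=$ tree of $1 \otimes a$, tail of $a$-edge $=$ tree of $a \otimes 1$'' is fixed, statements (1)--(4) all follow by short formal arguments. No deep input beyond Proposition~\ref{propositionClelia} (for (4)) and the tree property of $\E(\varepsilon)$ (implicit in (2), via biextendability) is needed.
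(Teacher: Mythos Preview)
Your proposal is correct and follows essentially the same approach as the paper: the paper's proof also reduces everything to showing that for $ab\in S$ the edges labeled $a$ and $b$ compose in $\G$ (via the observation that $(1\otimes a,b\otimes 1)$ lies in some $\T_k$, forcing the head of the $a$-edge to equal the tail of the $b$-edge), then obtains (1) by induction and remarks that (2)--(4) follow easily, with (4) from Proposition~\ref{propositionClelia}. Your write-up simply spells out (3) and (4) in more detail than the paper does; the self-correcting passage where you pin down the tail/head dictionary could be cleaned up, but the final dictionary and all four arguments are right.
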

\begin{proof}
We first note that for $a,b\in A$ such that $ab\in S$, there is a path
in $\G$ labeled $ab$. Since $(a,b)\in \E(\varepsilon)$,
there is a $k$  such that $(1\otimes a,b\otimes 1)\in\T_k$. Then we have
$a\in S_{i,k}$ and $b\in S_{k,j}$ for some $i,j\in\{0,1\}$. This shows
that $ab$ is the label of a path from $i$ to $j$ in $\G$.

Let us prove by induction on the length of a nonempty word $w\in S$ 
that there exists a unique pair $i,j$ such that $w\in S_{i,j}$.
The property is true for a letter, by definition of the extension
graph $\E(\varepsilon)$ and for words of length $2$
by the above argument. Let next $w=ax$ be in $S$ with $a\in A$
and $x$ nonempty. By induction hypothesis, there is a unique pair
$(k,j)$ such that $x\in S_{k,j}$. Let $b$ be the first letter
of $x$. Then the edge of $\G$ with label $b$ starts in $k$. Since
$ab$ is the label of a path, we have $a\in S_{i,k}$ for some
$i$ and thus $ax\in S_{i,j}$.
The other assertions follow easily
(Assertion (4) follows from Proposition~\ref{propositionClelia}).
\end{proof}
Note that Assertion (4) implies that no nonempty even word is its own
inverse. Indeed, $S_{0,0}^{-1}=S_{1,1}$ and $S_{1,1}^{-1}=S_{0,0}$.

\begin{proposition}\label{propositionOddReturns}
Let $S$ be a specular set. 
If $x,y\in S$ are nonempty words such that $xyx^{-1}\in S$, then $y$ is odd.
\end{proposition}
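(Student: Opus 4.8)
The plan is to exploit the parity graph $\G$ and Proposition~\ref{propositionPartition}, which encodes the parity structure of words in $S$ via paths between the two vertices $0,1$. The key observation is that ``$w$ is even'' is equivalent to ``$w$ is the label of a path from a vertex $i$ to the same vertex $i$'' (Assertion (3)), so parity of a word is detected by comparing the start and end vertices of its path. The strategy is therefore to track the endpoints of the paths labeled $x$, $y$, $x^{-1}$, and $xyx^{-1}$ in $\G$, and show that the consistency conditions forced by $xyx^{-1}\in S$ (Assertion (2)) together with the symmetry relation $S_{i,j}^{-1}=S_{1-j,1-i}$ (Assertion (4)) pin down the endpoints of $y$ to be equal.

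Concretely, first I would apply Assertion (1) to write $x\in S_{i,j}\setminus\{\varepsilon\}$ for a unique pair $(i,j)$. By Assertion (4), $x^{-1}\in S_{1-j,1-i}\setminus\{\varepsilon\}$. Next, write $y\in S_{k,\ell}\setminus\{\varepsilon\}$ for a unique pair $(k,\ell)$; the goal is to prove $k=\ell$, since by Assertion (3) this is exactly the statement that $y$ is even — wait, we want $y$ odd, so in fact the goal is to prove $k\neq\ell$. Now apply Assertion (2) to the three factorizations inside $xyx^{-1}\in S$: from $x\cdot(yx^{-1})\in S$ with $x\in S_{i,j}$ and $y$ starting the path for $yx^{-1}$ at vertex $k$, we get $j=k$; from $(xy)\cdot x^{-1}\in S$ with $xy$ ending at vertex $\ell$ and $x^{-1}\in S_{1-j,1-i}$, we get $\ell = 1-j$. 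Hence $y\in S_{j,1-j}$, so its path runs from $j$ to $1-j$; by Assertion (3) (or rather its contrapositive) $y$ is not even, i.e. $y$ is odd. This is the whole argument.

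The main thing to be careful about is the correct bookkeeping of which endpoint of each factor's path is which — in particular that when we concatenate $x$ and $y$ the ``gluing vertex'' is simultaneously the end of $x$'s path and the start of $y$'s path, which is precisely what Assertion (2) guarantees (and which also uses that $xy\in S$, a consequence of $xyx^{-1}\in S$ since $S$ is factorial). There is no real obstacle: the statement is essentially a formal consequence of the parity-graph partition already established in Proposition~\ref{propositionPartition}, and the proof is a short chase through Assertions (1)–(4). One could alternatively give a counting argument directly on the number of odd letters, using that $x$ and $x^{-1}$ have the same number of odd letters (an odd letter and its inverse are both odd, since $\theta$ maps $L(\varepsilon)\leftrightarrow R(\varepsilon)$ compatibly), so that the parity of the odd-letter count of $xyx^{-1}$ equals the parity of that of $y$; but one still needs to know $xyx^{-1}$ itself is even, which is most cleanly seen via the parity graph, so the graph-theoretic route above is the one I would write up.
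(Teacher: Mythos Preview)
Your proof is correct and follows exactly the same route as the paper: place $x\in S_{i,j}$, use Assertion~(4) to get $x^{-1}\in S_{1-j,1-i}$, then Assertion~(2) forces $y\in S_{j,1-j}$, whence $y$ is odd by Assertion~(3). The only difference is presentational --- the paper compresses the two applications of Assertion~(2) into one line, and your written-out version (including the self-correction about the goal) could be tightened, but the argument is the same.
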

\begin{proof}
Let $i,j$ be such that $x\in S_{i,j}$. Then $x^{-1}\in S_{1-j,1-i}$ by 
Assertion (4) of
Proposition~\ref{propositionPartition} and thus $y\in S_{j,1-j}$ by Assertion
 (2).
 Thus $y$ is odd by Assertion (3).
\end{proof}

The following result is the counterpart for recurrent specular sets of the main result of~\cite[Theorem 6.1]{BertheDeFeliceDolceLeroyPerrinReutenauerRindone2013m} asserting that the family of (uniformly) recurrent tree sets of characteristic $1$ is closed under maximal bifix decoding.
Let $S$ be a recurrent set and let $f$ be a coding morphism for a finite $S$-maximal bifix code $X$. The set $f^{-1}(S)$ is called a \emph{decoding} of $S$ by $X$.

\begin{theorem}[Even code decoding Theorem]
\label{theoremDecodingEven}
The decoding of a  recurrent specular set by the even code is  a union of two recurrent tree sets of characteristic $1$.
More precisely, let $S$ be a recurrent specular set and let $f$ be a coding morphism for the even code.
Then $f^{-1}(S_{0,0})$ and $f^{-1}(S_{1,1})$ are recurrent tree sets of characteristic $1$.
\end{theorem}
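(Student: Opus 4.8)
The plan is to exploit the correspondence between the even code and the parity graph together with the already-proven machinery for tree sets of characteristic $1$. Let $S$ be a recurrent specular set, $X$ its even code, and $f:B^*\to A^*$ a coding morphism for $X$, so that $f$ maps a fresh alphabet $B$ bijectively onto $X$. Write $T=f^{-1}(S)$. By Proposition~\ref{propositionEvenCode}, $X$ is an $S$-maximal bifix code of $S$-degree $2$, and by Proposition~\ref{propositionPartition}(3) the set of even words is $S_{0,0}\cup S_{1,1}$, while $X^*\cap S$ equals the set of even words. Since by Proposition~\ref{propositionPartition}(2) a nonempty word of $S_{0,0}$ cannot be followed inside $S$ by a word of $S_{1,1}$ and vice versa, the set $X$ splits as $X=X_0\cup X_1$ with $X_i=X\cap S_{i,i}$, and $T$ splits accordingly as $T=T_0\cup T_1$ with $T_i=f^{-1}(S_{i,i})$; moreover each $T_i$ is factorial. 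So the statement reduces to showing each $T_i$ is a recurrent tree set of characteristic $1$.

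First I would establish recurrence. Given $u,v\in T_i$, their images $f(u),f(v)$ lie in $S_{i,i}$; since $S$ is recurrent there is $w\in S$ with $f(u)\,w\,f(v)\in S$. One then argues, using recurrence of $S$ once more and the closure properties of the parity graph (Proposition~\ref{propositionPartition}(2)), that $w$ can be chosen so that $f(u)\,w\,f(v)$ is even and lies in $S_{i,i}$, and so that $w$ itself is a concatenation of elements of $X$ all lying in $S_{i,i}$, i.e. $w=f(z)$ with $z\in T_i$. This is the standard "padding" argument: if the naively chosen connecting word $w$ is not of the right parity class, replace it by $w\,f(u)\,w'$ or similar, adjusting parity using that there is at least one odd letter in a recurrent specular set. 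By Theorem~\ref{theo:recurrentur} recurrence will then upgrade to uniform recurrence automatically.

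The heart of the matter — and the step I expect to be the main obstacle — is showing each $T_i$ is a tree set of characteristic $1$, i.e. that $\E_{T_i}(w)$ is a tree for every nonempty $w\in T_i$ and that $\E_{T_i}(\varepsilon)$ is a single tree. For nonempty $w$, the idea is to relate $\E_{T_i}(f^{-1}\text{-preimage})$ to the generalized extension graph $\E_S^{X,X}(f(w))$ of the word $f(w)$ in $S$ with respect to the suffix code $X$ and the prefix code $X$ (both $S$-maximal by Proposition~\ref{propositionEvenCode}). By Proposition~\ref{propositionExtended}(1), acyclicity of $\E_S(f(w))$ (which holds, $S$ being a tree set) propagates to $\E_S^{X,X}(f(w))$; and by Proposition~\ref{propositionExtended}(2), since $S$ is neutral, $m_S^{X,X}(f(w))=m_S(f(w))=0$, so $\E_S^{X,X}(f(w))$ is an acyclic graph with Euler characteristic $1$, hence a tree. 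One must then check that the bipartition structure of $\E_S^{X,X}(f(w))$ restricted to letters of $B$ (equivalently, to elements of $X$) splits along the parity classes exactly as $\E_{T_i}(w)$, $i=0,1$, using Proposition~\ref{propositionPartition}(2) to see that left-extensions in $X_0$ pair only with right-extensions keeping the path inside $S_{i,i}$. The connectedness claim for each piece — that $\E_{T_i}(w)$ is connected, not just a forest — is the delicate point: it should follow from the fact that $\E_S^{X,X}(f(w))$ is a single tree plus a parity-tracking argument showing the tree does not disconnect when one separates the two classes, because any path in the tree through a vertex labelled by an element of $X_{1-i}$ would force, via Proposition~\ref{propositionPartition}(2), an inconsistency in the source/target parities along $f(w)$. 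Finally, for $w=\varepsilon$: here $\E_S(\varepsilon)=\T_0\sqcup\T_1$ has characteristic $2$, and $\E_S^{X,X}(\varepsilon)$ correspondingly should have characteristic $2$ by the same neutrality bookkeeping, splitting into two trees, one of which is $\E_{T_0}(\varepsilon)$ and the other $\E_{T_1}(\varepsilon)$ — exactly matching the two connected components indexed by the two states $0,1$ of the parity graph. So each $\E_{T_i}(\varepsilon)$ is a single tree, giving characteristic $1$.

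One technical caveat I would flag: the word $w$ being nonempty in $T_i$ does not force $f(w)$ to be a word whose extension behaviour in $S$ is "generic", so to apply the generalized extension graph results I may need the refined form of Proposition~\ref{propositionExtended} noted in the remark following its proof — namely that $X$ need only be $S f(w)^{-1}$-maximal as a suffix code and $f(w)^{-1}S$-maximal as a prefix code. Verifying these localized maximality conditions from $S$-maximality of $X$ plus recurrence of $S$ is routine but should be spelled out. Apart from that, the proof is an assembly of Propositions~\ref{propositionEvenCode}, \ref{propositionExtended}, \ref{propositionPartition} and Theorem~\ref{theo:recurrentur}, with the parity-graph partition doing the work of separating the two tree-set components.
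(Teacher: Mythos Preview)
Your approach is essentially the paper's: split $X=X_0\cup X_1$ along parity, identify $\E_{T_i}(w)$ with the generalized extension graph $\E_S^{X_i,X_i}(f(w))$, and invoke Proposition~\ref{propositionExtended} for acyclicity and neutrality. Two places are cleaner in the paper and worth noting.

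First, the ``delicate connectedness'' you flag for nonempty $w$ is not delicate. For $w\in S_{0,0}$ nonempty, Proposition~\ref{propositionPartition}(2) gives directly that any $x,y\in X$ with $xwy\in S$ satisfy $x,y\in X_0$; hence $\E_S^{X,X}(w)=\E_S^{X_0,X_0}(w)$ as graphs, not merely after restriction. There is nothing to separate: the full generalized extension graph already \emph{is} $\E_{T_0}(w)$, and it is a tree by Proposition~\ref{propositionExtended} (acyclic by part~1, $m=0$ by part~2). Your parity-tracking argument collapses to this one-line identity. The same identity for $w=\varepsilon$ fails of course, and there the paper argues exactly as you do: $\E_S^{X,X}(\varepsilon)$ is a union of two trees (acyclic with $m=-1$) and is the disjoint union of $\E_S^{X_0,X_0}(\varepsilon)$ and $\E_S^{X_1,X_1}(\varepsilon)$, so each piece is a single tree.

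Second, your recurrence argument via padding works but is unnecessarily laborious. The paper first upgrades $S$ to uniformly recurrent via Theorem~\ref{theo:recurrentur}, then observes that if $u,w\in S_{0,0}$ and $w=\ell u r$, then automatically $\ell,r\in S_{0,0}$ by parity; uniform recurrence of $T_0$ follows immediately. Your final technical caveat about needing the localized-maximality refinement of Proposition~\ref{propositionExtended} is also unnecessary: the even code is a finite $S$-maximal bifix code, hence $S$-maximal both as a prefix and as a suffix code, so the proposition applies as stated.
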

\begin{proof}
We show that $T_0=f^{-1}(S_{0,0})$ is a recurrent tree set of characteristic $1$.
The proof for $f^{-1}(S_{1,1})$ is the same.

First, $T_0$ is biextendable, as one may easily verify.
Next, since $S$ is recurrent, it is uniformly recurrent by Theorem~\ref{theo:recurrentur}.
Thus for every $u\in S$ there exists $n\ge 1$ such that $u$ is a factor of any word $w$ in $S$ of length $n$.
But if $u,w\in S_{0,0}$ are such that $w=\ell ur$, then $\ell,r\in S_{0,0}$.
Thus $T_0$ is recurrent.

We now show that $T_0$ is a tree set of characteristic $1$.
Let $X$ be the even code and set $X_0=X\cap S_{0,0}$ and $X_1=X\cap S_{1,1}$.

It is enough to show that $\E_S^0(w) = \E_S^{X_0,X_0}(w)$ is a tree for any $w\in S_{0,0}$.
Note first that $\E_S^0(w) = \E_S^{X,X}(w)$.
Indeed, for $w\in S_{0,0}$ and $x,y\in X$ such that $xwy\in S$, one has $x,y\in X_0$ and $xwy\in S_{0,0}$.

First, for any nonempty word $w\in S_{0,0}$, since $\E_S^0(w) = \E_S^{X,X}(w)$, the graph $\E_S^0(w)$ is a tree by Proposition~\ref{propositionExtended}.

Next, let us show that the graph $\E_S^0(\varepsilon)$ is a tree.
First, since $\E_S(\varepsilon)$ is a union of two trees, it is acyclic, and thus the graph $\E_S^0(\varepsilon)$ is acyclic by Proposition~\ref{propositionExtended}.
Next, since $S$ is neutral, by Proposition~\ref{propositionExtended}, we have $m_S^{X,X}(\varepsilon)=m_S(\varepsilon)=-1$.
This implies that $m_S^{X,X}(\varepsilon)$ is a union of two trees.
Since $\E_S^{X,X}(\varepsilon)$ is the disjoint union of $\E_S^0(\varepsilon)$ and $\E_S^{X_1,X_1}(\varepsilon)$, this implies that each one is a tree.
\end{proof}

\begin{example}
Let $S$ be the set of Example~\ref{exampleJulien}. 
Recall that it is the set of
factors of the fixed point of the morphism
\begin{displaymath}
\sigma:a\mapsto ab,\quad b\mapsto cda,\quad c\mapsto cd,\quad d\mapsto abc.
\end{displaymath}
The even code $X$
is given in Example~\ref{exampleEvenCode}. Let $\Sigma=\{a,b,c,d,e,f\}$
and let $g$ be the
coding morphism for $X$ given by 
\begin{displaymath}
a\mapsto abc,\quad b\mapsto ac,\quad c\mapsto b,\quad d\mapsto ca,\quad e\mapsto cda,\quad f\mapsto d.
\end{displaymath}
The decoding of $S$
by $X$ is a union of two tree sets of characteristic $1$ which are the set of factors of the
fixed point of the two morphisms
\begin{displaymath}
a\mapsto afbf,\ b\mapsto af,\ f\mapsto a
\end{displaymath}
and 
\begin{displaymath}
c\mapsto e,\ d\mapsto ec,\ e\mapsto ecdc.
\end{displaymath}
These two morphisms are actually the restrictions to $\{a,b,f\}$ and $\{c,d,e\}$
of
the morphism $g^{-1}\sigma g$.
\end{example}

\subsection{Bifix codes in specular sets}
Recall from Section~\ref{sectionPreliminaries} that the characteristic of a set $S$ is given by $\chi(S) = \ell_S(\varepsilon) + r_S(\varepsilon) - b_S(\varepsilon)$.

The following result is from~\cite{DolcePerrin2016}.
We will use it for specular sets.

\begin{theorem}
\label{theoremCardinality}
Let $S$ be a recurrent neutral set containing the alphabet $A$.
For any finite $S$-maximal bifix code $X$ of $S$-degree $d=d_X(S)$, one has
\begin{displaymath}
\Card(X)=d(\Card(A)-\chi(S))+\chi(S).
\end{displaymath}
\end{theorem}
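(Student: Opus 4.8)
The plan is to derive the cardinality formula by a counting argument based on parses, using the combinatorial identity from Proposition~\ref{propCANT} together with the structural information that $S$ is neutral and recurrent. First I would fix a finite $S$-maximal bifix code $X$ of $S$-degree $d$, and recall from Section~\ref{sectionBifix} that since $S$ is recurrent, $X$ is also an $S$-maximal prefix code, so that $X = Y A^{-1} \cap \dots$ — more precisely, the key fact is that a word $w \in S$ has fewer than $d$ parses with respect to $X$ exactly when it is an internal factor of a word of $X$. The strategy is to count, over long enough words, the quantity $\sum_{w} (d - d_X(w))$ and relate it both to the size of $X$ and to the characteristic $\chi(S)$.

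The cleanest route I would take is via the ``extended'' extension graphs $\E_S^{X,X}(w)$ and Proposition~\ref{propositionExtended}(2): since $S$ is neutral, $m_S^{X,X}(\varepsilon) = m_S(\varepsilon) = 1 - \chi(S)$. Interpreting $\E_S^{X,X}(\varepsilon)$ as a bipartite graph on $L_S^X(\varepsilon) \sqcup R_S^X(\varepsilon)$ with edge set $B_S^{X,X}(\varepsilon)$, one has
\begin{displaymath}
b_S^{X,X}(\varepsilon) - \ell_S^X(\varepsilon) - r_S^X(\varepsilon) + 1 = 1 - \chi(S),
\end{displaymath}
so $b_S^{X,X}(\varepsilon) = \ell_S^X(\varepsilon) + r_S^X(\varepsilon) - \chi(S)$. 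Now $\ell_S^X(\varepsilon) = r_S^X(\varepsilon) = \Card(X)$ (the set of $x \in X$ with $x \in S$ is just $X$ itself, as $X \subseteq S$), and $b_S^{X,X}(\varepsilon)$ counts pairs $(x,y) \in X \times X$ with $xy \in S$. Thus I need to show that the number of such pairs equals $d(\Card(X) - \chi(S)/\text{something})$ — this is where the $S$-degree $d$ enters, and it is the step I expect to be the main obstacle: one must show that $\Card\{(x,y) \in X^2 : xy \in S\}$ relates to $2\,\Card(X) - d\,\chi(S)$ or similar, which ultimately rests on analyzing how the $d$ parses of a generic word distribute. In fact the standard argument (as in~\cite{BerstelDeFelicePerrinReutenauerRindone2012}) counts parses: for a word $w \in S$ that is long enough to have exactly $d$ parses, each parse $(q, x_1 \cdots x_k, p)$ with $k \geq 1$ contributes, and summing $r(w) - 1$ over $X^*$-factorizable pieces via Proposition~\ref{propCANT} gives the complexity of $f^{-1}(S)$; comparing complexities of $S$ and its decoding yields the formula.

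Concretely, I would argue as follows. Let $T = f^{-1}(S)$ be the decoding of $S$ by a coding morphism $f$ for $X$ on an alphabet $B$ with $\Card(B) = \Card(X)$. By a complexity computation — using that $S$ is neutral of characteristic $\chi(S)$, so its complexity satisfies $p_n^S = n(\Card(A) - \chi(S)) + \chi(S)$ for $n \geq 1$ by the analog of Proposition~\ref{propComplexity} (valid for neutral sets via Proposition~\ref{propCANT}, since $\sum_{w \in S \cap A^n} m(w) = 0$ for $n \geq 1$ and $= 1 - \chi(S)$ for $n=0$) — and the fact that $X$ has $S$-degree $d$, one relates the number of length-$n$ words of $S$ lying in $X^{\leq \text{something}}$ to the decoded set. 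The decisive input is that the ``derived'' complexity of $T$ behaves linearly with slope $\Card(B) - \chi(T)$ where $\chi(T)$ can be read off; matching the two linear growth rates through the degree-$d$ covering forces $\Card(X) = \Card(B) = d(\Card(A) - \chi(S)) + \chi(S)$. The main obstacle, as flagged, is making the parse-counting bookkeeping precise — in particular handling the boundary words (internal factors of $X$, which have fewer than $d$ parses) correctly so that the linear terms match exactly and the constant term comes out as $\chi(S)$ rather than something off by the number of exceptional words. Since this theorem is quoted from~\cite{DolcePerrin2016}, I would in practice cite that reference for the detailed bookkeeping and present only the conceptual skeleton above.
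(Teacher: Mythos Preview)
The paper does not give its own proof of this statement: Theorem~\ref{theoremCardinality} is simply quoted as a result from~\cite{DolcePerrin2016}, with no argument supplied here. You correctly recognize this in your final sentence. So there is no in-paper proof to compare your sketch against; the paper's ``proof'' is the citation.

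That said, your sketch is not a proof but a plan with its central step left open, and you flag this yourself. Two remarks on the plan. First, in the extended-extension-graph route you reach $b_S^{X,X}(\varepsilon)=2\Card(X)-\chi(S)$ correctly via Proposition~\ref{propositionExtended}(2), but the formula you are aiming for is equivalent to $b_S^{X,X}(\varepsilon)=\Card(X)+d(\Card(A)-\chi(S))$; establishing that $\Card\{(x,y)\in X^2:xy\in S\}$ equals this quantity is exactly the missing ingredient, and it does not drop out of anything stated in the present paper. Second, the decoding route as you describe it is close to circular: to use a complexity formula $p_n^{T}=n(\Card(B)-\chi(T))+\chi(T)$ for the decoded set $T=f^{-1}(S)$ you would first need that $T$ is neutral and know $\chi(T)$, and the available result in that direction (closure of tree sets under maximal bifix decoding, cf.\ the discussion around Theorem~\ref{theoremDecodingEven}) already presupposes cardinality results of this type. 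So neither thread, as written, closes the argument; citing~\cite{DolcePerrin2016} is the right move in this paper's context.
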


We can apply Theorem~\ref{theoremCardinality} to recurrent specular sets.

\begin{theorem}[Cardinality Theorem for bifix codes]
\label{corollaryCardinality}
Let $S$ be a recurrent specular set containing the alphabet $A$.
For any finite $S$-maximal bifix code $X$, one has 
\begin{equation}
\Card(X)=d_X(S)(\Card(A)-2)+2.
\label{formulaBifixMax}
\end{equation}
\end{theorem}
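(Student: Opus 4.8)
The plan is to deduce Theorem~\ref{corollaryCardinality} as an immediate specialization of Theorem~\ref{theoremCardinality}. The key observation is that a recurrent specular set is in particular a recurrent neutral set, so the hypotheses of Theorem~\ref{theoremCardinality} are satisfied, and it remains only to identify the value of the characteristic $\chi(S)$ for a specular set.

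First I would recall that, by definition, a specular set is a tree set of characteristic $2$, hence in particular a neutral set of characteristic $2$; that is, $\chi(S)=2$. (Equivalently, one has $m_S(\varepsilon)=-1$, since $\E_S(\varepsilon)$ is a disjoint union of two trees, so $1-m_S(\varepsilon)=2$; and every nonempty word is neutral because its extension graph is a tree.) Moreover every word of $S$ is biextendable and $S$ contains the alphabet $A$, so $S$ is a recurrent neutral set containing $A$ in the precise sense required by Theorem~\ref{theoremCardinality}.

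Then I would simply substitute $\chi(S)=2$ into the formula of Theorem~\ref{theoremCardinality}: for any finite $S$-maximal bifix code $X$ of $S$-degree $d=d_X(S)$,
\begin{displaymath}
\Card(X)=d(\Card(A)-\chi(S))+\chi(S)=d_X(S)(\Card(A)-2)+2,
\end{displaymath}
which is exactly \eqref{formulaBifixMax}. This completes the argument.

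There is essentially no obstacle here: the entire content has already been packaged into Theorem~\ref{theoremCardinality}, and the only thing to verify is the bookkeeping identity $\chi(S)=2$ for specular sets, which is immediate from the definition of a tree set of characteristic $2$. The one point worth stating explicitly in the write-up is that a finite $S$-maximal bifix code in a recurrent set indeed has finite $S$-degree (so that $d_X(S)$ is well defined and the formula of Theorem~\ref{theoremCardinality} applies), which is recalled in Section~\ref{sectionBifix}.
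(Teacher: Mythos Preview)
Your proposal is correct and follows essentially the same approach as the paper: the paper's proof simply notes that a specular set has $\chi(S)=2$ and invokes Theorem~\ref{theoremCardinality}. Your additional remarks (why $\chi(S)=2$, neutrality, finiteness of the $S$-degree) are all accurate elaborations of what the paper leaves implicit.
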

\begin{proof}
Since $S$ is specular, we have $\chi(S)=2$ and thus the statement follows directly from Theorem~\ref{theoremCardinality}.
\end{proof}
\begin{example}
Let $S$ be the specular set of Example~\ref{exampleJulien}.
The even code (given in Example~\ref{exampleEvenCode}) is an $S$-maximal code of $S$-degree $2$.
We have $\Card(X)=6$ in agreement with Theorem~\ref{theoremCardinality}.
\end{example}

The following statement is a partial converse of Theorem~\ref{theoremCardinality}. 

\begin{theorem}
\label{theoremConverseCard}
Let $S$ be a uniformly recurrent laminary set containing the alphabet $A$.
If the graph $\E(\varepsilon)$ is acyclic and if any finite $S$-maximal bifix code of $S$-degree $d$ has $d(Card(A)-2) +2$ elements, then $S$ is specular.
\end{theorem}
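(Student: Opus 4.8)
*Let $S$ be a uniformly recurrent laminary set containing the alphabet $A$. If the graph $\E(\varepsilon)$ is acyclic and if any finite $S$-maximal bifix code of $S$-degree $d$ has $d(\Card(A)-2)+2$ elements, then $S$ is specular.*

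The plan is to prove that $S$ is a tree set of characteristic $2$: since $S$ is laminary, this is exactly the definition of a specular set, so it suffices. Throughout I use that a uniformly recurrent set is recurrent, hence that for each $n\ge 1$ the set $S\cap A^n$ is a finite $S$-maximal bifix code of $S$-degree $n$.

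First I would pin down the factor complexity of $S$. Applying the hypothesis to $X=S\cap A^n$ gives $p_n=\Card(S\cap A^n)=n(k-2)+2$ for all $n\ge 1$, where $k=\Card(A)$, while $p_0=1$. Forming the difference sequences $s_n=p_{n+1}-p_n$ and $t_n=s_{n+1}-s_n$ of Proposition~\ref{propCANT}, one finds $s_0=k-1$ and $s_n=k-2$ for $n\ge 1$, hence $t_0=-1$ and $t_n=0$ for $n\ge 1$. By Proposition~\ref{propCANT} this reads $m(\varepsilon)=-1$ and $\sum_{w\in S\cap A^n}m(w)=0$ for every $n\ge 1$.

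The heart of the argument is to deduce that $S$ is acyclic from the sole assumption that $\E(\varepsilon)$ is acyclic. Fix $w\in S$ and let $Y=\{wc\mid c\in A\}\cap S$, a finite prefix code contained in $S$. The generalized extension graph $\E_S^{A,Y}(\varepsilon)$ has left vertices $A$, right vertices $Y$, and an edge between $a$ and $wc$ exactly when $awc\in S$; it therefore differs from $\E_S(w)$ only by finitely many isolated vertices, so $\E_S^{A,Y}(\varepsilon)$ contains a cycle if and only if $\E_S(w)$ does. By Proposition~\ref{propositionExtended}(1), applied in the form valid for arbitrary suffix and prefix codes noted just after that proposition, the acyclicity of $\E_S(\varepsilon)$ forces $\E_S^{A,Y}(\varepsilon)$ to be acyclic; hence every $\E_S(w)$ is acyclic, i.e.\ $S$ is acyclic. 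The main obstacle is exactly this step, and the subtle point is that the auxiliary codes $A$ and $Y$ need not be $S$-maximal, which is why one must invoke the strengthened form of Proposition~\ref{propositionExtended}(1) rather than its literal statement.

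It remains to combine these facts. Since an acyclic graph has as many connected components as its Euler characteristic, for every $w\in S$ the graph $\E_S(w)$ has $1-m(w)$ connected components, so $m(w)\le 0$. For a nonempty word $w$ of length $n$, the equality $\sum_{|v|=n}m(v)=0$ together with $m(v)\le 0$ forces $m(w)=0$; thus $\E_S(w)$ is connected and acyclic, that is, a tree. For the empty word, $\E_S(\varepsilon)$ is acyclic with $1-m(\varepsilon)=2$ components, hence a disjoint union of two trees. Therefore $S$ is a tree set of characteristic $2$, and being laminary it is specular, as required. Note that only the bifix codes $S\cap A^n$ are actually used.
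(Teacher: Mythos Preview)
Your proof is correct. The argument differs from the paper's in a useful way: the paper appeals to Proposition~\ref{propositionConverseCard}, whose proof (via internal transformations on bifix codes, imported from~\cite{BertheDeFeliceDolceLeroyPerrinReutenauerRindone2013b}) uses the full strength of the hypothesis on \emph{all} finite $S$-maximal bifix codes to conclude that every nonempty word is neutral. You instead extract the factor complexity directly from the codes $S\cap A^n$ and then propagate the acyclicity of $\E(\varepsilon)$ to every $\E(w)$ via the generalized extension graph $\E_S^{A,\,wA\cap S}(\varepsilon)$ and the strengthened form of Proposition~\ref{propositionExtended}(1). This is more elementary and, as you observe, shows that only the codes $S\cap A^n$ are actually needed. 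It also makes explicit the acyclicity step for nonempty $w$, which the paper's one-line deduction from Proposition~\ref{propositionConverseCard} leaves to the reader (neutrality of nonempty words by itself does not give that $\E(w)$ is a tree). The price is that your route is specific to this conclusion, whereas Proposition~\ref{propositionConverseCard} is a reusable tool linking cardinalities of bifix codes to neutrality in general.
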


To prove Theorem~\ref{theoremConverseCard}, we use the following result, which can be proved in the same way as Theorem 3.12 in~\cite{BertheDeFeliceDolceLeroyPerrinReutenauerRindone2013b}, using internal transformations.

\begin{proposition}
\label{propositionConverseCard}
Let $S$ be a recurrent set containing the alphabet $A$ and let $d_0\ge 2$.
If all finite $S$-maximal bifix codes of $S$-degree $d \geq d_0$ have the same cardinality, then any word of length greater than or equal to $d_0-1$ is neutral.
\end{proposition}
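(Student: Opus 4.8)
The plan is to show directly that $m_S(w)=0$ for every $w\in S$ with $|w|\ge d_0-1$, by producing, for each such $w$, two $S$-maximal bifix codes of one and the same $S$-degree $d\ge d_0$ whose cardinalities differ by exactly $m_S(w)$. The hypothesis then forces these two cardinalities to coincide, giving $m_S(w)=0$. I first record that a recurrent set is biextendable: for $x\in S$ and a letter $a$, recurrence yields $v\in S$ with $xva\in S$, so $x$ is right-extendable, and symmetrically left-extendable.

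First I would dispose of the words that are not bispecial. Suppose $\ell_S(w)=1$, with unique left letter $a$. For any $b\in R_S(w)$ the word $wb$ lies in $S$ and, by biextendability, is left-extendable, say $cwb\in S$; then $cw\in S$ forces $c=a$, so $awb\in S$ and $(a,b)\in B_S(w)$. Hence $b\mapsto(a,b)$ is a bijection from $R_S(w)$ onto $B_S(w)$, so $b_S(w)=r_S(w)$ and $m_S(w)=r_S(w)-1-r_S(w)+1=0$; the case $r_S(w)=1$ is symmetric. Thus a non-bispecial word is automatically neutral, and it remains to treat a bispecial $w$ with $n:=|w|\ge d_0-1$.

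For such a $w$, consider the $S$-maximal bifix code $X=S\cap A^{\,n+2}$, which has $S$-degree $n+2\ge d_0+1\ge d_0$. A word is an internal factor of an element of $X$ precisely when it is an internal factor of a word of length $n+2$, i.e. when its length is at most $n$; in particular $w$ is an internal factor of $X$, so by the parse characterization recalled in Section~\ref{sectionBifix} one has $d_X(w)<d_X(S)$. This is exactly the situation in which the internal transformation of~\cite{BertheDeFeliceDolceLeroyPerrinReutenauerRindone2013b} at the word $w$ applies: it produces a new $S$-maximal bifix code $X'$ of the same $S$-degree $n+2$, obtained from $X$ by modifying its words only through their occurrences straddling $w$, trading the one-sided extensions of $w$ for its two-sided extensions. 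An Euler-characteristic count on the extension graph $\E_S(w)$, which has $\ell_S(w)+r_S(w)$ vertices and $b_S(w)$ edges, shows that the resulting net change in the number of code words equals $b_S(w)-\ell_S(w)-r_S(w)+1=m_S(w)$. Since $X$ and $X'$ are $S$-maximal bifix codes of the same $S$-degree $n+2\ge d_0$, the hypothesis gives $\Card(X')=\Card(X)$, whence $m_S(w)=0$.

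The hard part is the third step: one must verify that the internal transformation at $w$ indeed yields a legitimate $S$-maximal bifix code of unchanged $S$-degree, and that its only effect on the cardinality is the local change $m_S(w)$ read off from $\E_S(w)$, with no contamination from the longer words involved in the rewiring. This is precisely the combinatorial bookkeeping carried out for Theorem~3.12 of~\cite{BertheDeFeliceDolceLeroyPerrinReutenauerRindone2013b}; it uses only recurrence and the parse characterization of internal factors, and therefore transposes verbatim to the present setting.
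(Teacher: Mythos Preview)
Your proposal is correct and follows essentially the same approach as the paper: both defer to the internal-transformation argument of Theorem~3.12 in~\cite{BertheDeFeliceDolceLeroyPerrinReutenauerRindone2013b}. The paper gives no further detail beyond that reference, whereas you spell out the natural choices (the starting code $X=S\cap A^{n+2}$, the reduction to bispecial words) that one makes when carrying that argument through; these elaborations are accurate and do not depart from the intended method.
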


Theorem~\ref{theoremConverseCard} results from Proposition~\ref{propositionConverseCard} applied with $d_0=2$.

\subsection{Doubling maps}
\label{sec:doubling}
We now introduce a construction which allows one to build specular sets.
This is a particular case of the multiplying maps introduced in~\cite{DolcePerrin2016}.

A \emph{transducer} is a labeled graph with vertices
in a set $Q$ and edges labeled in $\Sigma\times A$. The set $Q$ is called the set of states, the set $\Sigma$ is called the \emph{input alphabet}
and $A$ is called the \emph{output alphabet}.
The graph obtained by erasing the output
letters is called the \emph{input automaton} (with an unspecified initial state). Similarly, the \emph{output automaton} is obtained by erasing the input letters.

Let $\A$ be a transducer with set of states $Q=\{0,1\}$ on the input alphabet
$\Sigma$ and the output alphabet $A$. We assume that
\begin{enumerate}
\item the input automaton is a group automaton,
 that is, every
letter of $\Sigma$ acts on $Q$ as a permutation
\item the output labels of the edges are all distinct.
\end{enumerate}
  We define two maps $\delta_0,\delta_1:\Sigma^*\rightarrow A^*$ corresponding to initial states
$0$ and $1$ respectively. 
Thus $\delta_0(u)=v$ (resp. $\delta_1(u)=v$) if the path starting at state $0$ (resp. $1$) with input label $u$ has output $v$.
The pair $\delta_\A = (\delta_0,\delta_1)$ is called a \emph{doubling map}
and the transducer $\A$ a \emph{doubling transducer}.

The \emph{image} of a set $T$ on the alphabet $\Sigma$
by the doubling map $\delta_\A$ is the set $S=\delta_0(T)\cup \delta_1(T)$.

If $\A$ is a doubling transducer, we define an involution $\theta_\A$
as follows. For any $a\in A$, let $(i,\alpha,a,j)$ be the edge with input
label $\alpha$ and output label $a$. We define $\theta_\A(a)$ as the output
label of the edge starting at $1-j$ with input label $\alpha$. Thus, $\theta_\A(a)=\delta_i(\alpha)=a$
if $i+j=1$ and $\theta_\A(a)=\delta_{1-i}(\alpha)\ne a$ if $i=j$.

Recall that the reversal of a word $w=a_1a_2\cdots a_n$ is the word $\tilde{w}=a_n\cdots a_2a_1$.

One can prove by induction on the length of $y \in \Sigma^*$ that if $x = \delta_i(y)$ and if $j$ is the end of the path starting at $i$ and with input label $y$, then $x^{-1} = \delta_{1-j}(\tilde{y})$.
Observe that since the input automaton is a group automaton, there is always a path starting at $1-j$ with input label $\tilde{y}$.

A set $S$ of words is closed under reversal if $w\in S$ implies $\tilde{w}\in S$ for every $w\in S$.
By definition, any Sturmian set is closed under reversal (see~\cite{BerstelDeFelicePerrinReutenauerRindone2012}).

\begin{theorem}\label{theoremDoubling}
For any tree set $T$ of characteristic $1$
on the alphabet $\Sigma$, closed under reversal
and any doubling map $\delta_\A$, the image of $T$ by $\delta_\A$
  is a specular set relative to the
involution $\theta_\A$.
\end{theorem}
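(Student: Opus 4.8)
The goal is to show that $S=\delta_0(T)\cup\delta_1(T)$ is a laminary set on $A$ relative to $\theta_\A$ and that it is a tree set of characteristic $2$. The plan is to transfer all structural information about $S$ back to $T$ through the doubling transducer, exploiting that the input automaton is a group automaton (so paths lift uniquely) and that the output labels are all distinct (so $\delta_i$ is injective on $\Sigma^*$ once the starting state is fixed, and every factor of a word in $S$ determines the state it was read from). I would organize the argument in three parts: (i) $S$ consists of reduced words and is symmetric and biextendable; (ii) there is a bijective correspondence between the extension graph $\E_S(w)$ of a nonempty $w\in S$ and the extension graph $\E_T(y)$ of the corresponding word $y\in T$; (iii) the extension graph $\E_S(\varepsilon)$ splits as a disjoint union of two copies of $\E_T(\varepsilon)$, one for each choice of starting state.

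\textbf{Symmetry, reducedness, biextendability.} First I would invoke the parenthetical remark proved just before the theorem: if $x=\delta_i(y)$ and $j$ is the endpoint of the path from $i$ labeled $y$, then $x^{-1}=\delta_{1-j}(\tilde y)$. Since $T$ is closed under reversal, $\tilde y\in T$, so $x^{-1}\in S$; hence $S$ is symmetric. For reducedness: a factor $a\theta_\A(a)$ in a word of $S$ would mean two consecutive output edges whose outputs are $a$ and $\theta_\A(a)$; by the definition of $\theta_\A$ these edges, read from the appropriate states, return to the starting state in a way forcing the input letters to cancel — but the input word lies in $T\subset\Sigma^*$, a set of honest words over $\Sigma$ with no cancellation structure imposed, and more to the point the two output edges share the same input letter $\alpha$ and would have to be traversed consecutively, which the group-automaton structure forbids unless one literally reuses the inverse edge. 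I would make this precise by a short computation showing $a\theta_\A(a)$ is never the output of a length-two path in $\A$. Biextendability of $S$ follows from biextendability of $T$ together with the fact that every state of the group automaton has exactly one incoming and one outgoing edge for each letter of $\Sigma$, so left/right extensions of $y$ in $T$ lift to left/right extensions of $\delta_i(y)$ in $S$.

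\textbf{Extension graphs and the tree condition.} This is the heart of the matter and the step I expect to be the main obstacle. Fix a nonempty $w\in S$; since output labels are distinct, $w$ determines a unique input word $y\in T$ and a unique pair of states $(i,j)$ with $w=\delta_i(y)$ and $j$ the terminal state. A left extension $aw\in S$ corresponds to an input letter $\alpha$ whose edge ends at $i$, i.e.\ to a left extension $\alpha y$ of $y$ in $T$, and symmetrically on the right; moreover $awb\in S$ iff the corresponding $\alpha y\beta\in T$. The correspondence $a\mapsto\alpha$, $b\mapsto\beta$ is a bijection on vertices and on edges, so $\E_S(w)\cong\E_T(y)$ as undirected graphs. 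Since $T$ is a tree set of characteristic $1$, $\E_T(y)$ is a tree, hence so is $\E_S(w)$. For $w=\varepsilon$: a letter $a$ read from state $i$ leads to state $j$, and $\varepsilon$ can be ``read'' from either state $0$ or state $1$. The vertices of $\E_S(\varepsilon)$ split according to which state the adjacent letter is read from, and because the output labels are distinct, the left-vertex $a$ and the right-vertex $a$ of the same letter carry well-defined states; an edge $(a,b)$ of $\E_S(\varepsilon)$ (i.e.\ $ab\in S$) forces the terminal state of $a$ to equal the initial state of $b$. Thus $\E_S(\varepsilon)$ decomposes as $\E_S^{(0)}\sqcup\E_S^{(1)}$ where $\E_S^{(k)}$ uses edges whose input letter $\alpha$ satisfies: the path labeled $\alpha$ starting at $k$ stays ``compatible'' — and each $\E_S^{(k)}$ is isomorphic to $\E_T(\varepsilon)$ via the same letter correspondence as above, with starting state $k$. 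Since $\E_T(\varepsilon)$ is a single tree, $\E_S(\varepsilon)$ is a disjoint union of two trees, which is exactly the characteristic-$2$ condition. Finally, $S$ being symmetric, biextendable, consisting of reduced words, and a tree set of characteristic $2$ is by definition a specular set relative to $\theta_\A$, completing the proof.

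\textbf{The anticipated difficulty.} The delicate bookkeeping is in part (iii): one must check that the two ``halves'' $\E_S^{(0)}$ and $\E_S^{(1)}$ of $\E_S(\varepsilon)$ are genuinely disjoint (no shared vertex) and that each is a full isomorphic copy of $\E_T(\varepsilon)$ rather than a proper subgraph. Disjointness is where the hypothesis that \emph{output labels are all distinct} is essential: it guarantees that a letter $a\in A$ appears as a left vertex of $\E_S(\varepsilon)$ at exactly one ``input letter / state'' combination, so there is no ambiguity and no overlap. I would spell this out carefully, since it is exactly the point that fails for a general multiplying map and makes the doubling construction land in characteristic $2$ rather than something larger.
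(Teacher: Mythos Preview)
Your proposal is correct. The paper's proof takes a shortcut: it cites Theorem~3.1 of \cite{DolcePerrin2016} (on multiplying maps) for the fact that $S$ is a tree set of characteristic~$2$, then checks reducedness (``clear by construction'') and symmetry via the identity $x^{-1}=\delta_{1-j}(\tilde y)$ together with closure of $T$ under reversal---exactly your argument for symmetry. Your parts~(ii) and~(iii) amount to reproving that cited result in the special case of a doubling map: the bijection $\E_S(w)\cong\E_T(y)$ for nonempty $w$, and the splitting $\E_S(\varepsilon)\cong\E_T(\varepsilon)\sqcup\E_T(\varepsilon)$ indexed by the meeting state, are precisely the content of that external theorem specialized here. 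So the two proofs are not really different in substance; yours is self-contained where the paper delegates, which is a reasonable trade-off. Your identification of the ``delicate bookkeeping'' in part~(iii) is accurate, and your explanation of why distinct output labels force disjointness of the two halves is the right one.
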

\begin{proof}
Set $S=\delta_\A(T)=\delta_0(T)\cup \delta_1(T)$.
By Theorem 3.1 of~\cite{DolcePerrin2016}, $S$ is a tree set of characteristic 2.
By construction, it is also clear the any word in $S$ is $\theta_\A$-reduced.

Let now prove that $S$ is a symmetric language.
Assume that  $x=\delta_i(y)$ for $i\in\{0,1\}$ and $y\in T$.
Let $j$ be the end of the path starting at $i$ and with input
label $y$. 
Since $x^{-1}=\delta_{1-j}(\tilde{y})$ and $T$ is closed under reversal,
we have $x^{-1}\in\delta_{1-j}(T)$.
This shows that $S$ is symmetric and so that it is laminary.
Thus, $S$ is a specular set.
\end{proof}

We now give two examples of specular sets obtained by doubling maps (doubling the Fibonacci set).

\begin{example}
\label{ex:fibodouble2}
Let $\Sigma = \{ \alpha, \beta \}$ and let $T$ be the Fibonacci set over $\Sigma$.
Let $\delta$ be the doubling map given by the transducer of Figure~\ref{fig:fibodouble2} on the left.

\begin{figure}[hbt]
\centering\gasset{Nadjust=wh}
\begin{picture}(100,15)
\put(0,0){
\begin{picture}(50,15)
\node(0)(0,7){$0$}
\node(1)(30,7){$1$}

\drawloop[loopangle=30](0){$\alpha\mid a$}
\drawloop[loopangle=210](0){$\beta\mid b$}
\drawloop[loopangle=30](1){$\alpha\mid c$}
\drawloop[loopangle=210](1){$\beta\mid d$}
\end{picture}
}
\gasset{AHnb=0}
\put(50,0){
\begin{picture}(50,10)
\put(0,0){
\begin{picture}(20,10)
\node(dl)(0,0){$b$}
\node(cl)(0,10){$a$}
\node(ar)(20,0){$a$}
\node(dr)(20,10){$b$}

\drawedge(bl,cr){}
\drawedge(al,cr){}
\drawedge(al,br){}
\end{picture}
}
\put(30,0){
\begin{picture}(20,10)
\node(bl)(0,0){$d$}
\node(al)(0,10){$c$}
\node(cr)(20,0){$c$}
\node(br)(20,10){$d$}

\drawedge(bl,cr){}
\drawedge(al,cr){}
\drawedge(al,br){}
\end{picture}
}
\end{picture}
}
\end{picture}
\caption{A doubling transducer and the extension graph $\E_S(\varepsilon)$.}
\label{fig:fibodouble2}
\end{figure}

Both letters in $\Sigma$ act as the identity on the two states $0,1$.

Then $\theta_\A$ is the involution defined by $\theta : a \mapsto c, \ b \mapsto d, \ c \mapsto a, \ d \mapsto b$.
The image of $T$ by $\delta$ is a specular set $S$ on the alphabet $A=\{a,b,c,d\}$.
The graph $\E_S(\varepsilon)$ is represented in Figure~\ref{figureFiboDouble} on the right.
All letters are even.

Note that the set $S$ of Example~\ref{ex:fibodouble2} is not recurrent.
The set $S$ is actually just a union of two Fibonacci sets, one over the alphabet $\{ a,b \}$ and the second over the alphabet $\{ c,d \}$.
\end{example}

\begin{example}
\label{exampleFiboDouble}
Let $\Sigma=\{\alpha,\beta\}$ and let $T$ be the Fibonacci set.
Let $\delta$ be the doubling map given by the transducer of Figure~\ref{figureFiboDouble} on the left.
The letter $\alpha$ acts as the transposition of the two states $0,1$, while $\beta$ acts as the identity.

\begin{figure}[hbt]
\centering
\gasset{Nadjust=wh}
\begin{picture}(100,12)
\put(5,0){
\begin{picture}(20,10)
\node(0)(0,5){$0$}\node(1)(20,5){$1$}

\drawloop[loopangle=180](0){$\beta\mid d$}
\drawedge[curvedepth=3](0,1){$\alpha\mid a$}\drawedge[curvedepth=3](1,0){$\alpha\mid c$}
\drawloop[loopangle=0](1){$\beta\mid b$}
\end{picture}
}
\gasset{AHnb=0}
\put(50,0){
\begin{picture}(50,10)
\put(30,0){
\begin{picture}(20,10)
\node(bl)(0,0){$b$}\node(al)(0,10){$a$}
\node(cr)(20,0){$c$}\node(br)(20,10){$b$}

\drawedge(bl,cr){}\drawedge(al,cr){}
\drawedge(al,br){}
\end{picture}
}
\put(0,0){
\begin{picture}(20,10)
\node(dl)(0,0){$d$}\node(cl)(0,10){$c$}
\node(ar)(20,0){$a$}\node(dr)(20,10){$d$}

\drawedge(bl,cr){}\drawedge(al,cr){}
\drawedge(al,br){}
\end{picture}
}
\end{picture}
}
\end{picture}
\caption{A doubling transducer and the extension graph $\E_S(\varepsilon)$.}
\label{figureFiboDouble}
\end{figure}
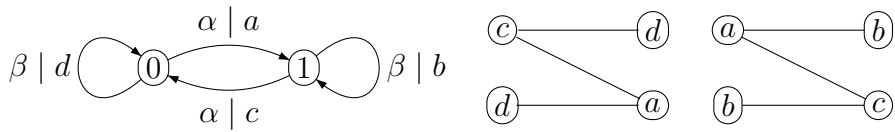

Then $\theta_\A$ is the involution $\theta$ of Example~\ref{exampleSpecularGroup} and the image of $T$ by $\delta$ is a specular set $S$
on the alphabet $A=\{a,b,c,d\}$.
The graph $\E_S(\varepsilon)$ is represented
in Figure~\ref{figureFiboDouble} on the right.

The letters $a,c$ are odd and $b,d$ are even.

Note that $S$ is the set of factors of the fixed point $g^\omega(a)$
of the morphism
\begin{displaymath}
g: a\mapsto abcab,\quad b\mapsto cda,\quad c\mapsto cdacd,\quad d\mapsto abc.
\end{displaymath}
The morphism $g$ is obtained by applying the doubling map
to the cube $f^3$ of the Fibonacci morphism $f$ in such a way that
$g^\omega(a)=\delta_0(f^\omega(\alpha))$.
\end{example}

In the next example (due to Julien Cassaigne), the specular set
is obtained using a morphism of smaller size.
\begin{example}\label{exampleJulien2}
Let $A=\{a,b,c,d\}$.
Let $T$ be the set of factors of the fixed point $x=f^\omega(\alpha)$ of
the morphism $f:\alpha\mapsto \alpha\beta,\beta\mapsto \alpha\beta\alpha$. It is a Sturmian set.
Indeed, $x$ is the characteristic word of slope $-1+\sqrt{2}$
(see~\cite{Lothaire2002}).
The sequence $s_n=f^n(\alpha)$ satisfies $s_n=s_{n-1}^2s_{n-2}$ for $n\ge 2$.
The image $S$ of $T$ by
the doubling  automaton of Figure~\ref{figureFiboDouble} is
the set of factors of the fixed point $\sigma^\omega(a)$
of the morphism $\sigma$ from $A^*$ into
itself defined by 
\begin{displaymath}
\sigma(a)=ab,\quad \sigma(b)=cda,\quad \sigma(c)=cd,\quad \sigma(d)=abc.
\end{displaymath}
Thus the set $S$ is the same as that of Example~\ref{exampleJulien}.
\end{example}

Note that, when $S$ is a specular set obtained by a doubling map using a transducer $\A$, the parity graph of $S$ is the output automaton of $\A$ (see for instance Figures~\ref{figureParityGraph} and~\ref{figureFiboDouble}).

\subsection{Palindromes}
\label{sec:palindromes}

The notion of palindromic complexity originates in~\cite{DroubayJustinPirillo2001} where it is proved that a word of length $n$ has at most $n+1$ palindrome factors.
A word of length $n$ is full if it has $n+1$ palindrome factors and a factorial set is \emph{full} (or rich) if all its elements are full.
By a result of~\cite{GlenJustinWidmerZamboni2009}, a recurrent set closed under reversal is full if and only if every complete return word to a palindrome in $S$ is a palindrome (a complete return word to a set $X$ of words of the same length is a word of $S$ which has exactly two factors in $X$, one as a proper prefix and one as a proper suffix, see Section~\ref{subsec:cardret}).
It is known that all Sturmian sets are full~\cite{DroubayJustinPirillo2001} and also all natural codings of interval exchange defined by a symmetric permutation \cite{BalaziMaskovaPelantova2007}.

The fact that a tree set of characteristic $1$ is full in the following result generalizes results of~\cite{DroubayJustinPirillo2001,BalaziMaskovaPelantova2007}.

\begin{proposition}
\label{propositionfull}
Let $T$ be a recurrent tree set of characteristic $1$, closed under reversal.
Then $T$ is full.
\end{proposition}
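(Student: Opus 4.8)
The plan is to reduce to a statement about complete return words and then to exploit the tree condition to rule out the non-palindromic ones. Recall from the paragraph preceding the statement that, by~\cite{GlenJustinWidmerZamboni2009}, a recurrent set closed under reversal is full if and only if every complete return word to a palindrome of the set is itself a palindrome. Moreover $T$ is uniformly recurrent by Theorem~\ref{theo:recurrentur}, so complete return words are well behaved. Thus it suffices to fix a palindrome $v\in T$ and a complete return word $u$ to $v$, and to prove that $u=\tilde u$.

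The first step is to set up the right objects. The set $\mathcal{R}$ of all complete return words to $v$ is a bifix code: if one complete return word were a proper prefix of another, the shorter one would yield an occurrence of $v$ strictly inside the longer one, which is impossible, and symmetrically for the suffix order. Since $T$ is closed under reversal and $v=\tilde v$, the map $x\mapsto\tilde x$ is an involution of $\mathcal{R}$, and a complete return word is a palindrome exactly when it is a fixed point of this involution. In particular $\tilde u\in\mathcal{R}$, and if $u$ were not a palindrome then $u$ and $\tilde u$ would be two distinct elements of the bifix code $\mathcal{R}$, hence incomparable for both the prefix and the suffix orders. One also checks, by the same kind of occurrence argument, that in this situation the longest palindromic prefix of $u$, as well as its longest palindromic suffix, is exactly $v$ (a strictly longer palindromic prefix would produce a third occurrence of $v$ in $u$).

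The core of the argument is to derive a contradiction from $u\neq\tilde u$. Put $s=\lcp(u,\tilde u)$; then $s$ has $v$ as a prefix, is a proper prefix of both $u$ and $\tilde u$, and is right-special, being followed by two distinct letters, one in $u$ and one in $\tilde u$. Dually, the longest common suffix of $u$ and $\tilde u$ equals $\tilde s$, has $v$ as a suffix, and is left-special; and reversal further shows that the letter preceding $\tilde s$ in $u$ is the letter following $s$ in $\tilde u$, and conversely. Now I would track, inside $u$ and inside $\tilde u$, the relative positions of the prefix occurrence of $s$, the suffix occurrence of $\tilde s$, and the two occurrences of $v$, distinguishing according to whether $s$ and $\tilde s$ overlap (and, when they do, using the periodicity of $s$ this forces). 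In each case the two ways of reading the word between these occurrences — one coming from the factorisation of $u$, the other from that of $\tilde u$ — yield two distinct reduced paths between a single pair of vertices of the extension graph $\E(x)$ of a suitable word $x\in T$, hence a cycle in $\E(x)$. This contradicts the tree hypothesis on $T$. Therefore $u=\tilde u$; every complete return word to a palindrome is a palindrome, and $T$ is full.

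The main obstacle is precisely this last combinatorial core: choosing the right word $x$ (a bispecial word obtained by overlapping $s$ and $\tilde s$ around an occurrence of $v$) and carrying out the bookkeeping of occurrences, in particular in the degenerate cases where $v$ or $s$ is highly periodic and where the two occurrences of $v$ inside $u$ overlap. A useful preliminary observation that helps organise the case analysis is that, for a palindrome $v$, reversal is an involution of the tree $\E(v)$ exchanging its two sides, hence fixes a unique edge; this already shows that there is exactly one letter $a$ with $ava\in T$, so that $ava$ is again a palindrome of $T$.
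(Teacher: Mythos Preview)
Your proposal is not a proof: the ``combinatorial core'' is precisely what has to be done, and you leave it as a plan with an acknowledged obstacle.  Producing, from a non-palindromic complete return word $u$, a specific word $x$ whose extension graph contains a cycle is a genuine piece of work, and the overlap case analysis you sketch is neither carried out nor obviously feasible in the generality you need.  As it stands, the argument stops exactly where the difficulty begins.

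What is striking is that your final ``preliminary observation'' is in fact the whole proof, once you switch characterisations of fullness.  The paper does not use the complete-return-word criterion from~\cite{GlenJustinWidmerZamboni2009}; it uses the equivalent criterion of Pelantov\'a--Starosta: $T$ is full if and only if every non-palindrome is neutral and every palindrome $x$ has exactly $m(x)+1$ palindromic extensions.  In a tree set of characteristic~$1$ every word is neutral, so the only thing to check is that each palindrome $x$ has exactly one letter $a$ with $axa\in T$.  This is precisely your last remark: reversal induces a fixed-point-free involution of the tree $\E(x)$ exchanging its two sides, and such an involution of a finite tree inverts a unique edge, namely an edge $(1\otimes a,\,a\otimes 1)$.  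The paper proves existence and uniqueness of this $a$ directly (existence by induction on $\Card(A)$ after peeling off a symmetric pair of leaves, uniqueness because two such edges would create a cycle), but your formulation via tree automorphisms is equally valid.  So rather than using your observation to organise a long case analysis on return words, use it as the proof itself, together with the Pelantov\'a--Starosta criterion.
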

\begin{proof}
We use the following equivalent definition of full sets (see~\cite{PelantovaStarosta2014}): for any $x\in T$,
\begin{enumerate}
\item[(i)]
if $x$ is not a palindrome, it is neutral. 
\item[(ii)] Otherwise, $m(x)+1$ is equal to the number of letters $a$ such that $axa$ is a palindrome in $T$ (the so-called \emph{palindromic extensions}).
\end{enumerate}
Since $T$ is a tree set of characteristic $1$, every word is neutral.
We thus only have to show that every palindrome has exactly one palindromic extension.
Let $x \in T$ be a palindrome.
It may be verified that since $x$ is palindrome and $T$ is closed under reversal, the graph $\E_T(x)$ is closed under reversal in the sense that it contains an edge $(1 \otimes a, b \otimes 1)$ if and only if it contains the edge $(1 \otimes b, a \otimes 1)$.
One may verify that, as a consequence, there is at least one $a \in A$ such that $axa \in T$.
Indeed, this can be proved as follows by induction on $\Card(A)$.
It is true if $\Card(A)=1$.
Otherwise, let $a \in A$ be such that $1 \otimes a$ is a leaf of $\E_T(x)$.
Then, since the graph is closed under reversal, the vertex $a \otimes 1$ is also a leaf.
Set $A' = A \setminus \{ a \}$.
The restriction of the graph to the vertices in $A'$ is a tree closed under reversal, and thus the property follows by induction.
But if there is another one, the graph would have a cycle.
Indeed, assume that $axa, bxb \in T$.
Consider a simple path $\gamma$ of minimal length from one of $1 \otimes a, a \otimes 1$ to one of $1\otimes b, b \otimes 1$.
This path cannot contain the edges corresponding to $axa, bxb$.
Using these edges and the symmetric of $\gamma$, one obtains a cycle.
Thus $T$ is full.
\end{proof}

In~\cite{PelantovaStarosta2014}, this notion was extended to that of $G$-full, where $G$ is a finite group of morphisms and antimorphisms of $A^*$ (an antimorphism is the composition of a morphism
and reversal) containing at least one antimorphism.
As one of the equivalent definitions, a set $S$ closed under $G$ is $G$-full if for every $x \in S$, every complete return word to the $G$-orbit of $x$ is fixed by a nontrivial element of $G$.

Let us consider a tree set $T$ of characteristic $1$ and a specular set $S$ obtained as the image of $T$ by a doubling map $\delta$.

Let us define the antimorphism $\sigma: u \mapsto u^{-1}$ for $u\in G_\theta$.
From Section~\ref{sec:doubling} it follows that both edges $(i, \alpha, a, j)$ and $(1-i, \alpha, \sigma(a), 1-j)$ are in the doubling transducer.
Let us define also the morphism $\tau$ obtained by replacing each letter $a\in A$ by $\tau(a)$ if there are edges $(i,\alpha,a,j)$ and $(1-j,\alpha,\tau(a),1-i)$ in the doubling transducer.

We denote by $G_\mathcal{A}$ the group generated by the $\sigma$ and $\tau$.
Actually, we have $G_\mathcal{A} = \Z/2\Z \times \Z/2\Z$.

\begin{example}
\label{ex:h2}
Let $S$ be the specular set defined in Example~\ref{ex:fibodouble2}.
The group $G_\mathcal{A}$ is generated by
$$\sigma : a t\mapsto c, \ b \mapsto d, \ c \mapsto a, \ d \mapsto b,$$
and
$$\tau : a \mapsto c, \ b \mapsto d, \ c \mapsto a, \ d \mapsto b.$$
Note that, even if the images of $\sigma$ and $\tau$ over the alphabet are the same, the latter is a morphism, while the first is an antimorphism.
Moreover, in that case, we have $\sigma \tau = \tau \sigma : w \mapsto \tilde{w}$ for every $w \in S$.
\end{example}

\begin{example}
\label{ex:h}
Let $S$ be the recurrent specular set defined in Example~\ref{exampleFiboDouble}.
The group $G_\mathcal{A}$ is generated by the antimorphism
$$\sigma : a \mapsto a, \ b \mapsto d, \ c \mapsto c, \ d \mapsto a,$$
and the morphism
$$\tau : a \mapsto c, \ b \mapsto d, \ c \mapsto a, \ d \mapsto b.$$
We have $H = \{ \id, \sigma, \tau, \sigma \tau \}$, where $\sigma \tau = \tau \sigma$ is the antimorphism fixing $b,d$ and exchanging $a$ and $c$.

\end{example}

We now connect the notions of fullness and $G_\mathcal{A}$-fullness, proving an analogous result of Proposition~\ref{propositionfull} for specular sets.

\begin{proposition}
\label{pro:Hfull}
Let $T$ be a recurrent tree set of characteristic $1$ on the alphabet $\Sigma$, closed under reversal and let $S$ be the image of $T$ under a doubling map.
Then $S$ is $G_\mathcal{A}$-full.
\end{proposition}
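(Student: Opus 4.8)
The plan is to reduce $G_{\mathcal A}$-fullness of $S$ to the fullness of $T$, which is available by Proposition~\ref{propositionfull}, using the doubling map $\delta$ as a dictionary between the two situations. Recall that, by one of the equivalent definitions in~\cite{PelantovaStarosta2014}, $S$ is $G_{\mathcal A}$-full if and only if for every $x\in S$, every complete return word in $S$ to the $G_{\mathcal A}$-orbit of $x$ is fixed by a nontrivial element of $G_{\mathcal A}$. So I would fix $x\in S$ and a complete return word $w$ to the $G_{\mathcal A}$-orbit $\{x,\sigma(x),\tau(x),\sigma\tau(x)\}$, and show $w$ is fixed by one of $\sigma$, $\tau$ or $\sigma\tau$.

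First I would set up the correspondence. Write $x=\delta_i(y)$ for some $i\in\{0,1\}$ and $y\in T$, and let $j$ be the endpoint of the path in $\mathcal A$ starting at $i$ with input label $y$. From Section~\ref{sec:doubling} we know $x^{-1}=\sigma(x)=\delta_{1-j}(\tilde y)$, and one checks similarly from the definition of $\tau$ (swapping the role of the two states while keeping the input label) that $\tau(x)=\delta_{1-i}(y)$ and $\sigma\tau(x)=\tau\sigma(x)=\delta_{j}(\tilde y)$. Thus the full $G_{\mathcal A}$-orbit of $x$ in $S$ is exactly the preimage under $(\delta_0,\delta_1)$ of the pair $\{y,\tilde y\}$ in $T$, namely $\delta_0(\{y,\tilde y\})\cup\delta_1(\{y,\tilde y\})$ with matching state bookkeeping. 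Since the output labels of the transducer are all distinct, each word of $S$ has a unique decoding $\delta_k(u)$; so a complete return word $w\in S$ to the $G_{\mathcal A}$-orbit of $x$ decodes uniquely as $w=\delta_k(v)$ for some $k\in\{0,1\}$ and $v\in T$, and the occurrences of the orbit in $w$ correspond, under this decoding, exactly to the occurrences of $y$ or $\tilde y$ in $v$. Hence $v$ is a complete return word in $T$ to the reversal-orbit $\{y,\tilde y\}$ of $y$.

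Now I invoke Proposition~\ref{propositionfull}: $T$ is full, and by the characterization of~\cite{GlenJustinWidmerZamboni2009} recalled in Section~\ref{sec:palindromes}, in a full recurrent set closed under reversal every complete return word to the reversal-orbit of any factor is fixed by reversal, i.e. $\tilde v=v$. (If $y$ is a palindrome the orbit is a single word and this is the usual complete-return-word statement; if not, one applies the "complete return to a set of words of the same length" version, noting $|y|=|\tilde y|$.) It remains to translate $\tilde v=v$ back through $\delta$ into a statement about $w=\delta_k(v)$. Using again $x^{-1}=\delta_{1-j}(\tilde y)$-type identities: if $v=\tilde v$ then applying the relevant element of $G_{\mathcal A}$ to $w=\delta_k(v)$ produces $\delta_{k'}(\tilde v)=\delta_{k'}(v)$, and one checks the state index $k'$ lands correctly (either $k'=k$, giving $\tau(w)=w$, or the state flips together with the reversal, giving $\sigma(w)=w$ or $\sigma\tau(w)=w$, the case depending on whether the path of $v$ in $\mathcal A$ returns to its starting state). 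In every case $w$ is fixed by a nontrivial element of $G_{\mathcal A}$, which is what we had to prove.

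The main obstacle I anticipate is the bookkeeping of the two states of the doubling transducer throughout: making precise which element of $G_{\mathcal A}$ fixes $w$ requires tracking whether the path spelled by the decoded word $v$ (equivalently $w$) returns to its initial state, and checking that a complete return word in $S$ really does decode to a complete return word in $T$ — this uses crucially that the output labels are distinct (so decoding is unambiguous) and that occurrences of the $G_{\mathcal A}$-orbit in $S$ are in bijection with occurrences of $\{y,\tilde y\}$ in $T$ at the decoded positions. A secondary point to handle carefully is the degenerate case where $x$ is not a proper factor of the orbit in the expected way, and the distinction between $y$ being a palindrome or not, so that the correct version of the complete-return-word criterion of~\cite{GlenJustinWidmerZamboni2009} is applied.
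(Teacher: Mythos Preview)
Your overall strategy coincides with the paper's: decode $w=\delta_k(v)$, show that the decoded word $v\in T$ is a palindrome, and then observe that any $\delta_k$-image of a palindrome is fixed by a nontrivial element of $G_{\mathcal A}$. The correspondence you set up between the $G_{\mathcal A}$-orbit of $x$ and $\{y,\tilde y\}$, and between complete return words upstairs and downstairs, is exactly what the paper uses.

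The gap is at the step where you assert that ``in a full recurrent set closed under reversal, every complete return word to the reversal-orbit $\{y,\tilde y\}$ of any factor is a palindrome'', attributing this to the characterization recalled from~\cite{GlenJustinWidmerZamboni2009}. What is actually recalled in Section~\ref{sec:palindromes} is the statement for a \emph{single palindrome} $y=\tilde y$, not for the pair $\{y,\tilde y\}$ when $y$ is not a palindrome. Your proposal acknowledges the two cases but then simply asserts that ``one applies the set version''; this is precisely the point that requires work, and it is the heart of the paper's argument. The paper does not cite such a set version; instead it proves directly that the decoded word $u$ is a palindrome, using the fact (from~\cite{GlenJustinWidmerZamboni2009}) that in a full set the longest palindromic prefix of any word is unioccurrent: one takes $z$ the longest palindromic prefix of $u$, argues that $y$ must be a prefix of $z$ (otherwise $z$ would reoccur), hence $\tilde y$ is a suffix of $z$, so $z$ itself is a complete return word to $\{y,\tilde y\}$, hence $\delta_i(z)$ is already a complete return word to the $G_{\mathcal A}$-orbit of $x$, forcing $z=u$.

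So your outline is correct in shape but incomplete at exactly the nontrivial point; to close it you either need a precise reference for the pair version (not the one recalled in the paper), or you need to insert the unioccurrent-longest-palindromic-prefix argument the paper gives.
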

\begin{proof}
By Proposition~\ref{propositionfull} we know that $T$ is full.

To show that $S$ is $G_\mathcal{A}$-full, we will use several properties of the map $\delta_i$.
We note that it is injective, that it preserves prefixes and conversely: $u$ is a prefix of $v$ if and only if $\delta_i(u)$ is a prefix of $\delta_i(v)$.
Also, for any $y \in T$ and $x=\delta_i(y)$, the images of $y,\tilde{y}$ by $\delta_0,\delta_1$ form the $G_\mathcal{A}$-orbit of $x$.

Consider $x\in S$ and a word $w$ which is a complete return word to the $G_\mathcal{A}$-orbit of $x$.
We may assume that $x$ is a prefix of $w$ and that $\gamma(x)$ is a prefix of $w$, with $\gamma \in H$.
Let $y,u \in T$ and $i \in \{0,1\}$ be such that $x=\delta_i(y)$ and $w=\delta_i(u)$.
Then $y$ is a prefix of $u$.

We first show that $u$ is a palindrome.
First observe that $u$ has a suffix in the set $\{ y, \tilde{y} \}$.
Indeed, if $\gamma \in \{ \id, \tau \}$ then $y$ is a suffix of $u$. Otherwise, if $\gamma \in \{ \sigma, \tau \sigma \}$, one has that $\tilde{y}$ is a suffix of $u$.
Let now $z$ be the longest palindrome prefix of $u$.
Then $y$ is a prefix of $z$ since otherwise $z$ would have a second occurrence in $u$ (in a full set, the longest palindrome prefix of a word is unioccurrent, see~\cite{GlenJustinWidmerZamboni2009}).
Consequently $\tilde{y}$ is a suffix of $z$ and $z$ cannot have another occurrence of $y$ or $\tilde{y}$ except as a prefix or a suffix (otherwise, $w$ would have an internal factor in the $G_\mathcal{A}$-orbit of $x$).
Thus $z$ is a complete return word to $\{y,\tilde{y}\}$.
Consequently, $\delta_i(z)$ is a complete return word to the $G_\mathcal{A}$-orbit of $x$ and thus $\delta_i(z)=w$, which implies that $u=z$ and that $u$ is a palindrome.

Now, the $G_\mathcal{A}$-orbit of any word $w=\delta_i(u)$ with $u$ palindrome has two elements.
Indeed, either $w$ is even and $w^{-1}=\tau(w)$, or $w$ is odd and $w^{-1}=w$.
Thus such a $w$ is fixed by a nontrivial element of $G_\mathcal{A}$.
\end{proof}

\begin{example}
\label{ex:fibodoubleh2}
Let $S$ be the specular set of Example~\ref{ex:fibodouble2}.
Since it is a doubling of the Fibonacci set (which is Sturmian and thus full), it is $G_\mathcal{A}$-full with respect to the group $G_\mathcal{A}$ generated by the antimorphism $\sigma$ and the morphism $\tau$ of Example~\ref{ex:h2}.
The $G_\mathcal{A}$-orbit of $x=a$ is the set $X=\{a,c\}$.
The set of complete return words to $X$ (see also Section~\ref{sec:return}) is given by
\begin{displaymath}
\CR_S(X)=\{ aa, aba, cc, cdc\}.
\end{displaymath}
The four words are palindromes and thus they are fixed by $\sigma\tau$.

As another example, consider $x=ab$.
Its $G_\mathcal{A}$-orbit is the set $X = \{ ab, ba, cd, dc \}$ and the set of complete return words to $X$ is given by
$$
\CR_S(X)=\{ aba,baab,bab,cdc,dccd,dcd\}.
$$
Each of them is a palindrome, thus is fixed by $\sigma \tau$.
\end{example}

\begin{example}
Let $S$ be the specular set of Example~\ref{exampleFiboDouble}.
Since it is  a doubling of the Fibonacci set (which is Sturmian and thus full), it is $G_\mathcal{A}$-full with respect to the group $G_\mathcal{A}$ generated by the map $\sigma$ taking the inverse (that is fixing $a,c$ and exchanging $b$ and $d$) and the morphism $\tau$ (which exchanges $a,c$ and $b,d$ respectively).
The $G_\mathcal{A}$-orbit of $x=a$ is the set $X=\{a,c\}$.
We have
\begin{displaymath}
\CR_S(X)=\{abc,ac,ca,cda\}.
\end{displaymath}
The four words are fixed by $\sigma\tau$.
 As another example, consider $x=ab$. Then
$X=\{ab,bc,cd,da\}$ and $\CR_S(X)=\{abc,bcad,bcd,cda,dab,dacb\}$.
Each of them is fixed by some nontrivial element of $G_\mathcal{A}$.
\end{example}

\section{Linear involutions}
\label{sectionInvolutions}
In this section we define linear involutions and connections.
We prove that the natural coding of a linear involution without connections is a specular set (Theorem~\ref{theoremInvolutionSpecular}).

\subsection{Definition}
Let $A$ be an alphabet of cardinality $k$ with an involution $\theta$ and the corresponding specular group $G_\theta$. 
Note that we allow
$\theta$ to have fixed points. 
This leads to a definition of linear involutions which is somewhat more general than the one used in~\cite{DanthonyNogueira1988,BertheDelecroixDolcePerrinReutenauerRindone2014}.

We consider two copies $I\times \{0\}$ and
$I\times \{1\}$ of an open interval $I$ of the real line and denote  $\hat{I}=I\times \{0,1\}$.
We call the sets $I\times \{0\}$ and $I\times \{1\}$ the two
\emph{components} of $\hat{I}$. We consider each component as an open interval.

A \emph{generalized permutation} on $A$ of type $(\ell,m)$, with $\ell+m=k$,
  is a bijection $\pi:\{1,2,\ldots,k\}\rightarrow A$.
We represent it by a two line array
\begin{displaymath}
\pi=\begin{pmatrix} \pi(1)\ \pi(2)\ \ldots \pi(\ell)\\
\pi(\ell +1)\ \ldots \pi(\ell+m)
\end{pmatrix}
\end{displaymath}
A \emph{length data} associated with $(\ell,m,\pi)$ is a nonnegative
vector $\lambda\in \R_+^{A}=\R_+^{k}$ such that
\begin{displaymath}
\lambda_{\pi(1)}+\ldots+\lambda_{\pi(\ell)}=
\lambda_{\pi(\ell+1)}+\ldots+\lambda_{\pi(k)}
\text{ and }\lambda_a=\lambda_{a^{-1}}\text{ for all }a\in A.
\end{displaymath}

We consider a partition of $I\times \{0\}$ (minus $\ell-1$
points) in $\ell$ open intervals
$I_{\pi(1)},\ldots,I_{\pi(\ell)}$ of lengths $\lambda_{\pi(1)},\ldots,\lambda_{\pi(\ell)}$
and a partition of $I\times \{1\}$ (minus $m-1$ points) in $m$ open intervals
$I_{\pi(\ell+1)},\ldots,I_{\pi(\ell+m)}$ of lengths $\lambda_{\pi(\ell+1)},\ldots,\lambda_{\pi(\ell+m)}$. Let $\Sigma$ be the set of $k-2$ \emph{division points} separating
the intervals $I_a$ for $a\in A$.

The \emph{linear involution} on $I$ relative to these data is the
map $T=\sigma_2\circ\sigma_1$ defined on the set
$\hat{I}\setminus\Sigma$ as 
the composition
of two involutions defined as follows. 
\begin{enumerate}
\item[(i)]The first involution $\sigma_1$ is defined on $\hat{I}\setminus\Sigma$.
It is such that for each $a\in A$, its restriction to $I_a$
is either a translation or a symmetry from $I_a$ onto $I_{a^{-1}}$.

\item[(ii)]The second involution exchanges the two components of
  $\hat{I}$.
It  is defined for $(x,\delta)\in \hat{I}$
by $\sigma_2(x,\delta)=(x,1-\delta)$. The image of $z$ by $\sigma_2$
is called the \emph{mirror image} of $z$.
\end{enumerate}
We also say that $T$ is a linear involution on $I$ and relative to
 the alphabet $A$
or that it is a $k$-linear involution to express the fact
that the alphabet $A$ has $k$ elements.

\begin{example}\label{exampleLinear}
Let $A=\{a,b,c,d,a^{-1},b^{-1},c^{-1},d^{-1}\}$ and
\begin{displaymath}
\pi=\begin{pmatrix}a&b&a^{-1}&c\\c^{-1}&d^{-1}&b^{-1}&d
\end{pmatrix}.
\end{displaymath}
Let $T$ be the $8$-linear involution corresponding to the length data
represented in Figure~\ref{figureLinear} (we represent $I\times\{0\}$
above $I\times \{1\}$) with the assumption that the restriction
of $\sigma_1$ to $I_a$ and $I_d$ is a symmetry while its restriction
to $I_b,I_c$ is a translation.

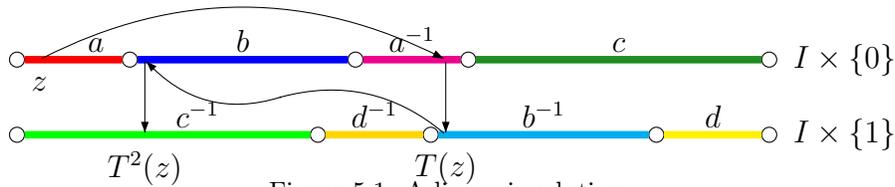
\begin{figure}[hbt]
\centering
\gasset{Nadjust=wh,AHnb=0}
\begin{picture}(115,17)(0,-1)
\node(h0)(0,10){}\node(h1)(15,10){}\node(h2)(45,10){}\node(h3)(60,10){}\node(h4)(100,10){}\node[Nframe=n](0)(110,10){$I\times\{0\}$}
\node(b0)(0,0){}\node(b1)(40,0){}\node(b2)(55,0){}\node(b3)(85,0){}\node(b4)(100,0){}\node[Nframe=n](1)(110,0){$I\times\{1\}$}
\gasset{Nh=.1,Nw=.1,Nadjust=n}
\node[ExtNL=y,Nh=.6,Nw=.6,Nfill=y,NLangle=-90,NLdist=2](z)(3,10){$z$}\node[Nh=.6,Nw=.6,Nfill=y](sz)(57,10){}\node[ExtNL=y,Nh=.6,Nw=.6,Nfill=y,NLangle=-90,NLdist=2](Tz)(57,0){$T(z)$}
\node[Nh=.6,Nw=.6,Nfill=y](sTz)(17,10){}\node[ExtNL=y,Nh=.6,Nw=.6,Nfill=y,NLangle=-90,NLdist=2](TTz)(17,0){$T^2(z)$}
\node(m)(35,5){}


\drawedge[linecolor=red,linewidth=1,ELpos=70](h0,h1){$a$}
\drawedge[linecolor=blue,linewidth=1](h1,h2){$b$}
\drawedge[linecolor=magenta,linewidth=1](h2,h3){$a^{-1}$}
\drawedge[linecolor=forestgreen,linewidth=1](h3,h4){$c$}
\drawedge[linecolor=green,linewidth=1,ELpos=60](b0,b1){$c^{-1}$}
\drawedge[linecolor=golden,linewidth=1](b1,b2){$d^{-1}$}
\drawedge[linecolor=cyan,linewidth=1](b2,b3){$b^{-1}$}
\drawedge[linecolor=yellow,linewidth=1](b3,b4){$d$}
\gasset{AHnb=1}
\drawedge[AHnb=1,curvedepth=7](z,sz){}\drawedge[AHnb=1](sz,Tz){}
\drawedge[curvedepth=-3,AHnb=0](Tz,m){}\drawedge[curvedepth=3](m,sTz){}\drawedge(sTz,TTz){}
\end{picture}
\caption{A linear involution.}\label{figureLinear}
\end{figure}
We indicate on the figure the effect of the transformation $T$ on a point
$z$ located in the left part of the interval $I_a$. The point
$\sigma_1(z)$ is located in the right part of $I_{a^{-1}}$ and the point
$T(z)=\sigma_2\sigma_1(z)$ is just below on the left of $I_{b^{-1}}$.
 Next, the point $\sigma_1T(z)$ is located on the left part of $I_b$
and the point $T^2(z)$ just below.
\end{example}
Thus the notion of linear involution is an extension of the notion 
of  interval exchange transformation in the following sense.
Assume that 
\begin{enumerate}
\item[(i)] $\ell=m$, 
\item[(ii)] for each letter $a\in A$, the interval
$I_a$ belongs to $I\times\{0\}$ if and only if $I_{a^{-1}}$ belongs
to $I\times\{1\}$,  
\item[(iii)] the restriction of $\sigma_1$
to each subinterval is a translation. 
\end{enumerate}
Then, the restriction of $T$
to $I\times \{0\}$ is an interval exchange (and so is its restriction to 
$I\times \{1\}$ which is the inverse of the first one). Thus,
in this case, $T$ is a pair of mutually inverse interval exchange transformations.

It is also an extension of the notion of interval exchange with flip
\cite{Nogueira1989,NogueiraPiresTroubetzkoy2013}. Assume again  conditions (i)
and (ii), but now that the restriction of $\sigma_1$
to at least one  subinterval is a symmetry. Then the restriction of $T$
to $I\times \{0\}$ is an interval exchange  with flip.

Note that for convenience we consider in this paper interval exchange transformations
defined by a partition of an open interval  minus
$\ell-1$ points in $\ell$
 open intervals. The usual notion of interval exchange transformation
uses a partition of a semi-interval in a finite number of semi-intervals.

A linear involution $T$ is a bijection from $\hat{I}\setminus\Sigma$
onto $\hat{I}\setminus\sigma_2(\Sigma)$.
Since $\sigma_1,\sigma_2$ are involutions and $T=\sigma_2\circ\sigma_1$,
 the inverse of $T$
is $T^{-1}=\sigma_1\circ\sigma_2$.

The set $\Sigma$ of division points is also the set of singular points
of $T$ and their mirror images are the singular points of $T^{-1}$
(which are the points where $T$ (resp. $T^{-1}$) is not defined).
Note that these singular points $z$ may be `false' singularities, in the sense
that $T$ can have a continuous extension to an open neighborhood of $z$.

Two particular cases of linear involutions deserve attention.

A linear involution $T$ on the alphabet $A$ 
relative to a generalized permutation $\pi$ of type
$(\ell,m)$ 
is said to be \emph{nonorientable} if there are indices $i,j\le \ell$ such that
$\pi(i)=\pi(j)^{-1}$  (and thus indices $i,j\ge \ell+1$
such that $\pi(i)=\pi(j)^{-1}$).  In other words,  there is  some $a\in A$ for which  $I_a$
and $I_{a^{-1}}$ belong to  the same component  of $\hat{I}$. Otherwise
$T$ is said to be \emph{orientable}.

A linear involution $T=\sigma_2\circ \sigma_1$ on $I$
relative to the alphabet $A$
is said to be \emph{coherent} if, for each $a\in A$, the restriction
of $\sigma_1$ to $I_a$ is a translation if and only if $I_a$
and $I_{a^{-1}}$ belong to distinct components of $\hat{I}$. 

\begin{example}
The linear involution of Example~\ref{exampleLinear} is coherent.
\end{example}

Linear involutions which are orientable and coherent
correspond to interval exchange transformations, 
whereas  orientable but noncoherent  linear
involutions are interval exchanges with flip.

Orientable linear involutions correspond to orientable laminations (see
\cite{BertheDelecroixDolcePerrinReutenauerRindone2014}), whereas 
 coherent
linear involutions correspond to orientable surfaces. Thus coherent
nonorientable involutions correspond to nonorientable laminations
on orientable surfaces.

\subsection{Minimal involutions}

A \emph{connection} of a linear involution $T$ is a triple $(x,y,n)$ 
where $x$ is a singularity of $T^{-1}$, $y$ is a singularity of $T$,
$n\ge 0$ and $T^n x = y$. 

\begin{example}\label{exampleConnection}
Let us consider the linear involution $T$ which is the same as in Example~\ref{exampleLinear} but such that the restriction of $\sigma_1$ to $I_c$ 
is a symmetry. Thus $T$ is not coherent. We assume that $I=]0,1[$,
that $\lambda_a=\lambda _d$. Let $x=(1-\lambda_d,0)$ and $y=(\lambda_a,0)$.

Then $x$ is a singularity of $T^{-1}$ ($\sigma_2(x)$ is the left endpoint
of $I_d$), $y$ is a singularity of $T$ (it is the right endpoint of $I_a$)
and $T(x)=y$. Thus $(x,y,1)$ is a connection.
\end{example}

\begin{example}\label{exampleInvolution3}
Let $T$ be the  linear involution on $I=]0,1[$
represented in Figure~\ref{figureLinear3}. We assume that the
restriction of $\sigma_1$ to $I_a$ is a translation
whereas the restriction to $I_b$ and $I_c$ is a symmetry.
We choose $(3-\sqrt{5})/2$ for the length of the interval $I_c$
(or $I_b$). With this choice, $T$ has no connections.

\begin{figure}[hbt]
\centering
\gasset{AHnb=0,Nadjust=wh}
\begin{picture}(100,15)
\node(h0)(0,10){}\node(b)(23.6,10){}\node(bbar)(61.8,10){}\node(h1)(100,10){}
\node(b0)(0,0){}\node(cbar)(38.2,0){}\node(abar)(76.4,0){}\node(b1)(100,0){}

\drawedge[linecolor=red,linewidth=1](h0,b){$a$}
\drawedge[linecolor=blue,linewidth=1](b,bbar){$b$}
\drawedge[linecolor=cyan,linewidth=1](bbar,h1){$b^{-1}$}
\drawedge[linecolor=forestgreen,linewidth=1](b0,cbar){$c$}
\drawedge[linecolor=green,linewidth=1](cbar,abar){$c^{-1}$}
\drawedge[linecolor=magenta,linewidth=1](abar,b1){$a^{-1}$}
\end{picture}
\caption{A linear involution without connections.}\label{figureLinear3}
\end{figure}
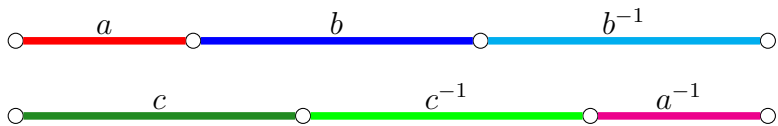
\end{example}

Let $T$ be a  linear involution without connections. Let 
\begin{equation}
O=\bigcup_{n\ge 0}T^{-n}(\Sigma) \quad \text{and }\quad \hat{O}=O\cup \sigma_2(O)\label{eqO}
\end{equation}
be respectively the negative orbit of the singular points and its
closure under mirror image.
Then $T$ is a bijection from $\hat{I}\setminus \hat{O}$ onto itself.
Indeed, assume that $T(z)\in \hat{O}$. If $T(z)\in O$ then
$z\in O$. Next if $T(z)\in \sigma_2(O)$, then
$T(z)\in \sigma_2(T^{-n}(\Sigma))=T^n(\sigma_2(\Sigma))$ for some $n\ge 0$. We cannot
have $n=0$ since $\sigma_2(\Sigma)$ is not in the image of $T$.
Thus $z\in T^{n-1}(\sigma_2(\Sigma))=\sigma_2(T^{-n+1}(\Sigma))\subset
\sigma_2(O)$. Therefore in both cases $z\in \hat{O}$. The converse
implication
is proved in the same way.

 A linear involution $T$ 
on $I$ without connections
is  minimal if for any point $z\in\hat{I}\setminus\hat{O}$ 
the nonnegative orbit of $z$  is dense in $\hat{I}$. 

Note that when a linear involution is  orientable, that is,  when it is a pair
of interval exchange transformations (with or without flips),   the interval exchange transformations
can be minimal although the linear involution is not since each component
of $\hat{I}$ is stable by the action of $T$. 
Moreover, it is shown in~\cite{DanthonyNogueira1990} that noncoherent linear 
involutions are almost surely not minimal.

\begin{example}\label{exampleNonCoherent}
Let us consider the noncoherent linear involution $T$ which is the same as in Example~\ref{exampleLinear} but such that the restriction of $\sigma_1$ to $I_c$ 
is a symmetry, as in Example~\ref{exampleConnection}. We assume that $I=]0,1[$,
that $\lambda_a=\lambda _d$ and that $1/4<\lambda_c<1/2$
 and that $\lambda_a+\lambda_b<1/2$.
 Let $x=1/2+\lambda_c$ and $z = (x,0)$ (see Figure~\ref{figureLinearnonCoherent}).
We have then $T^3(z)=z$, showing that $T$ is not minimal. Indeed,
since $z\in I_c$, we have
$T(z)=(1-x,0)=(1/2-\lambda_c,0)$. Since
$T(z)\in I_a$ we have $T^2(z)=((\lambda_a+\lambda_b)+(\lambda_a-1+x),1)=(x-\lambda_c,1)=(1/2,1)$.
Finally, since $T^2(z)\in I_{d^{-1}}$, we obtain $(1,0)-T^3(z)=T^2(z)-(\lambda_c,1)=(1,0)-z$
and thus $T^3(z)=z$.
\begin{figure}[hbt]
\centering
\gasset{Nadjust=wh,AHnb=0}
\begin{picture}(100,18)(0,-2)
\node(h0)(0,10){}\node(h1)(15,10){}\node(h2)(45,10){}\node(h3)(60,10){}\node(h4)(100,10){}
\node(b0)(0,0){}\node(b1)(40,0){}\node(b2)(55,0){}\node(b3)(85,0){}\node(b4)(100,0){}
\gasset{Nh=.6,Nw=.6,Nfill=y,Nadjust=n,ExtNL=y}
\node[NLdist=2](z)(90,10){$z=T^3(z)$}\node[Nh=.1,Nw=.1](m)(50,5){}\node(sz)(10,0){}\node[NLangle=-140,NLdist=2](Tz)(10,10){$T(z)$}\node(sTz)(50,10){}\node[ExtNL=y,NLangle=-90,NLdist=2](TTz)(50,0){$T^2(z)$}\node(sTTz)(90,0){}
\drawedge[linecolor=red,linewidth=1,ELpos=70](h0,h1){$a$}
\drawedge[linecolor=blue,linewidth=1](h1,h2){$b$}
\drawedge[linecolor=magenta,linewidth=1](h2,h3){$a^{-1}$}
\drawedge[linecolor=forestgreen,linewidth=1](h3,h4){$c$}
\drawedge[linecolor=green,linewidth=1,ELpos=60](b0,b1){$c^{-1}$}
\drawedge[linecolor=yellow,linewidth=1](b1,b2){$d^{-1}$}
\drawedge[linecolor=cyan,linewidth=1](b2,b3){$b^{-1}$}
\drawedge[linecolor=golden,linewidth=1](b3,b4){$d$}
\gasset{AHnb=1}
\drawedge[curvedepth=3,AHnb=0](z,m){}\drawedge[curvedepth=-3](m,sz){}
\drawedge(sz,Tz){}\drawedge[curvedepth=4](Tz,sTz){}\drawedge(sTz,TTz){}
\drawedge[curvedepth=-4](TTz,sTTz){}\drawedge(sTTz,z){}
\end{picture}
\caption{A noncoherent linear involution.}\label{figureLinearnonCoherent}
\end{figure}
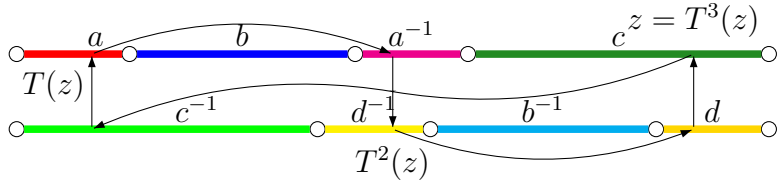

\end{example}

The following result (already  proved in~\cite[Proposition 4.2]{BoissyLanneau2009}
 for the class of coherent involutions)
is~\cite[Proposition 3.7]{BertheDelecroixDolcePerrinReutenauerRindone2014}.
The proof uses Keane's theorem proving that an
interval exchange transformation without connections is minimal~\cite{Keane1975}. 
\begin{proposition}\label{propBL}
Let $T$ be a linear involution without connections on $I$. If $T$
is nonorientable, it is minimal. Otherwise, its restriction to each component
of $\hat{I}$ is minimal. 
\end{proposition}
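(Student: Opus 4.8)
The plan is to reduce the statement to Keane's theorem on interval exchange transformations without connections. The key observation is that the square $T^2$ of a linear involution $T = \sigma_2 \circ \sigma_1$ preserves each component of $\hat I$, and its restriction to each component is, up to the identification of division points, an interval exchange transformation (possibly with flips). So the first step is to make this precise: writing $T^2 = (\sigma_2\sigma_1)(\sigma_2\sigma_1)$ and using that $\sigma_2$ swaps the two components while $\sigma_1$ maps $I_a$ to $I_{a^{-1}}$, one checks that $T^2$ restricted to $I\times\{0\}$ is an interval exchange on the refined partition obtained by cutting along $\Sigma$ and its first few preimages. Then I would verify that $T$ has no connections if and only if this induced interval exchange on $I\times\{0\}$ has no connections, using the definition~\eqref{eqO} of $\hat O$ and the fact that $T$ is a bijection of $\hat I\setminus\hat O$; the singularities of $T^2|_{I\times\{0\}}$ are exactly the points of $\Sigma \cup T^{-1}(\sigma_2\Sigma)$ lying in that component.

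Next I would handle the two cases. In the orientable case, $T$ itself preserves each component (this is essentially the content of conditions (ii) in the definition of orientability, or rather its negation gives non-orientability), so $T|_{I\times\{0\}}$ is already an interval exchange transformation, possibly with flips. Having no connections, it is minimal by Keane's theorem (in the flipped case one invokes the analogous result for interval exchanges with flips, as cited via~\cite{Nogueira1989,NogueiraPiresTroubetzkoy2013}, or one passes to $T^2$ which is an honest interval exchange since an even number of flips compose to no flip on each exchanged pair). This gives the "otherwise" clause. In the non-orientable case, there is some $a\in A$ with $I_a, I_{a^{-1}}$ in the same component, say $I\times\{0\}$; then $\sigma_1$ already maps a subinterval of $I\times\{0\}$ into $I\times\{0\}$, and one shows that the forward $T$-orbit of a generic point visits both components, so that minimality of $T^2$ on each component (from Keane applied to the induced IET without connections) upgrades to minimality of $T$ on all of $\hat I$. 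The point here is that $T$ maps $I\times\{0\}$ onto $I\times\{1\}$ via $\sigma_2$, so density of the $T^2$-orbit in $I\times\{0\}$ forces density of the $T$-orbit in $I\times\{1\}$ as well.

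The main obstacle I anticipate is the bookkeeping in the first step: identifying precisely which interval exchange transformation $T^2$ induces on a component, keeping track of the flips coming from the symmetries in $\sigma_1$, and checking that the no-connections hypothesis transfers correctly (a connection of $T$ of length $n$ should correspond to a connection of the induced map of length $\lceil n/2\rceil$ or so, and conversely — the parity argument needs care because singularities of $T$ and of $T^{-1}$ are mirror images of each other, cf.\ the discussion after~\eqref{eqO}). A secondary subtlety is the flipped case of Keane's theorem: if one does not want to rely on the literature for interval exchanges with flips, one must argue that $T^2$ on each component has no flips — which is true because each interval $I_a$ of $I\times\{0\}$ gets acted on by $\sigma_1$ then (after $\sigma_2$) by $\sigma_1$ again on its image, and the two symmetries/translations compose to a translation when and only when the coherence pattern works out, so in general one may still have flips and must use the flipped version. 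I would therefore state the flipped Keane theorem as a cited black box and apply it directly to $T|_{I\times\{0\}}$ in the orientable case and to the induced map on each component in the non-orientable case.
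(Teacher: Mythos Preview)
The paper does not give its own proof of this proposition: it simply records it as~\cite[Proposition 3.7]{BertheDelecroixDolcePerrinReutenauerRindone2014} (and~\cite[Proposition 4.2]{BoissyLanneau2009} in the coherent case) and notes that the argument rests on Keane's theorem. So there is no in-paper proof to compare against, but your plan can still be assessed on its own merits, and it contains a real error.

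Your orientable case is fine: when $T$ is orientable, $I_a$ and $I_{a^{-1}}$ lie in different components for every $a$, so $\sigma_1$ always swaps components and $T=\sigma_2\sigma_1$ preserves each one; the restriction is then an interval exchange (with or without flips) to which Keane applies. The gap is in the nonorientable case. Your key claim that $T^2$ preserves each component of $\hat I$ is false. In the nonorientable situation $\sigma_1$ may keep a point in its component (when $I_a$ and $I_{a^{-1}}$ are on the same side) or move it to the other one, so $T=\sigma_2\sigma_1$ sends points of $I\times\{0\}$ into \emph{either} component, and the same is then true of $T^2$. Consequently ``$T^2|_{I\times\{0\}}$ is an interval exchange'' and the closing step ``$T$ maps $I\times\{0\}$ onto $I\times\{1\}$ via $\sigma_2$'' both fail as written. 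The device actually used in the cited reference (and visible in Example~\ref{exampleFiboDoubleInvolution}) is not $T^2$ but the \emph{orientation double cover}: an ordinary interval exchange $V$ on an interval of twice the length together with a $2$-to-$1$ semiconjugacy $\alpha\circ V=T\circ\alpha$. One checks that $T$ without connections forces $V$ without connections, applies Keane's theorem to $V$, and pushes minimality down through $\alpha$; nonorientability of $T$ is precisely what prevents the cover from splitting into two invariant halves. If you prefer to stay on a single component, the correct object is the \emph{first return map} of $T$ to $I\times\{0\}$, which is an interval exchange with flips there---but that map is not $T^2$.
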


\subsection{Natural coding}
Let $T$ be a linear involution on $I$, let $\hat{I}=I\times\{0,1\}$ and let
$\hat{O}$ be the set defined by Equation~\eqref{eqO}.

Given $z\in \hat{I}\setminus \hat{O}$, the \emph{infinite natural
  coding} of $T$ relative to $z$ is the infinite word
$\Sigma_T(z)=a_0a_1\ldots$
on the alphabet $A$ defined by
\begin{displaymath}
a_n=a\quad\text{ if }\quad T^n(z)\in I_a.
\end{displaymath}
We first observe that the infinite word $\Sigma_T(z)$ is
reduced. Indeed, assume that $a_n=a$ and $a_{n+1}=a^{-1}$ with $a\in
A$. Set $x=T^n(z)$ and $y=T(x)=T^{n+1}(z)$. Then $x\in I_a$
and $y\in I_{a^{-1}}$. But $y=\sigma_2(u)$ with $u=\sigma_1(x)$. Since $x\in
I_a$, we have $u\in I_{a^{-1}}$. This implies that
$y=\sigma_2(u)$ and $u$ belong to the same component of $\hat{I}$, a contradiction.

We denote by $\LL(T)$ the set of factors of the infinite natural codings
of $T$. We say that $\LL(T)$ is 
the \emph{natural coding} of $T$. 

\begin{example}
Let $T$ be the linear involution of Example~\ref{exampleInvolution3}.
The words of length at most $3$ of $S=\LL(T)$ are represented 
in Figure~\ref{figureSetS}.
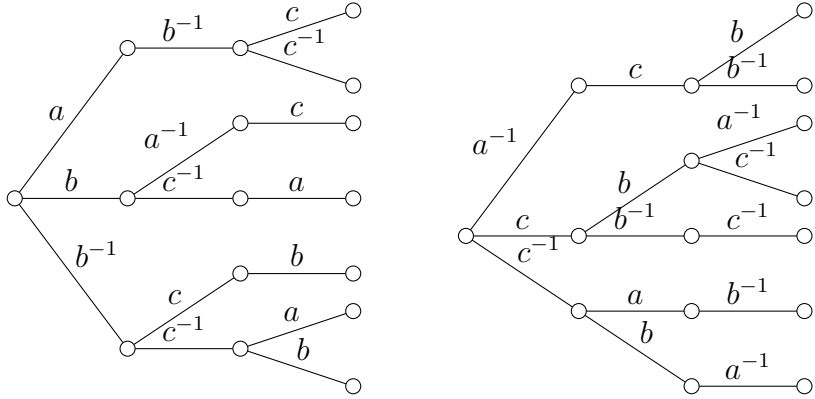
\begin{figure}[hbt]
\centering\gasset{Nadjust=wh,AHnb=0}
\begin{picture}(120,50)
\put(0,0){
\begin{picture}(60,40)(0,-5)
\node(1)(0,20){}
\node(a)(15,40){}
\node(b)(15,20){}\node(bbar)(15,0){}
\node(abbar)(30,40){}
\node(babar)(30,30){}\node(bcbar)(30,20){}
\node(bbarc)(30,10){}\node(bbarcbar)(30,0){}
\node(abbarc)(45,45){}\node(abbarcbar)(45,35){}
\node(babarc)(45,30){}\node(bcbara)(45,20){}
\node(bbarcb)(45,10){}
\node(bbarcbara)(45,5){}\node(bbarcbarb)(45,-5){}

\drawedge(1,a){$a$}\drawedge(1,b){$b$}\drawedge(1,bbar){$b^{-1}$}
\drawedge(a,abbar){$b^{-1}$}
\drawedge(b,bcbar){$c^{-1}$}\drawedge(b,babar){$a^{-1}$}
\drawedge(bbar,bbarc){$c$}\drawedge(bbar,bbarcbar){$c^{-1}$}
\drawedge(abbar,abbarc){$c$}\drawedge(abbar,abbarcbar){$c^{-1}$}
\drawedge(babar,babarc){$c$}\drawedge(bcbar,bcbara){$a$}
\drawedge(bbarc,bbarcb){$b$}
\drawedge(bbarcbar,bbarcbara){$a$}\drawedge(bbarcbar,bbarcbarb){$b$}
\end{picture}
}
\put(60,0){
\begin{picture}(60,40)
\node(1)(0,20){}
\node(abar)(15,40){}\node(c)(15,20){}\node(cbar)(15,10){}
\node(abarc)(30,40){}
\node(cb)(30,30){}\node(cbbar)(30,20){}
\node(cbara)(30,10){}\node(cbarb)(30,0){}
\node(abarcb)(45,50){}\node(abarcbbar)(45,40){}
\node(cbabar)(45,35){}\node(cbcbar)(45,25){}
\node(cbbarcbar)(45,20){}
\node(cbarabbar)(45,10){}\node(cbarbabar)(45,0){}

\drawedge(1,abar){$a^{-1}$}\drawedge(1,c){$c$}\drawedge(1,cbar){$c^{-1}$}
\drawedge(abar,abarc){$c$}
\drawedge(c,cb){$b$}\drawedge(c,cbbar){$b^{-1}$}
\drawedge(cbar,cbara){$a$}\drawedge(cbar,cbarb){$b$}
\drawedge(abarc,abarcb){$b$}\drawedge(abarc,abarcbbar){$b^{-1}$}
\drawedge(cb,cbabar){$a^{-1}$}\drawedge(cb,cbcbar){$c^{-1}$}
\drawedge(cbbar,cbbarcbar){$c^{-1}$}
\drawedge(cbara,cbarabbar){$b^{-1}$}\drawedge(cbarb,cbarbabar){$a^{-1}$}
\end{picture}
}
\end{picture}
\caption{The words of length at most $3$ of $S$.}\label{figureSetS}
\end{figure}

The set $S$ can  actually be defined directly as the set of factors
of the substitution
\begin{displaymath}
f:a\mapsto cb^{-1},\quad b\mapsto c,\quad c\mapsto ab^{-1}.
\end{displaymath}
which extends to an automorphism of the free group on $\{a,b,c\}$
(see~\cite{BertheDelecroixDolcePerrinReutenauerRindone2014}).

\end{example}

The following is Proposition 5.3 in~\cite{BertheDelecroixDolcePerrinReutenauerRindone2014}.
\begin{proposition} \label{prop:inverse}
The natural coding of a linear involution is closed under taking inverses.
\end{proposition}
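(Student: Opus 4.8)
The plan is to trace how the dynamics of $T$ interact with the two involutions $\sigma_1,\sigma_2$ whose composition is $T$, using the mirror image $\sigma_2$ to convert a finite orbit segment into its reverse. Let $w=b_0b_1\cdots b_{n-1}\in\LL(T)$ be nonempty (the case $w=\varepsilon$ is trivial, $\LL(T)$ being factorial). Since $w$ occurs in some infinite natural coding, there is a point $z_0\in\hat{I}\setminus\hat{O}$ with $T^i(z_0)\in I_{b_i}$ for $0\le i\le n-1$. I would set $y=\sigma_1(T^{n-1}(z_0))$ and show that $\Sigma_T(y)$ begins with $w^{-1}=b_{n-1}^{-1}\cdots b_0^{-1}$; this yields $w^{-1}\in\LL(T)$ and hence the statement.

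The computational heart is the identity $T^i(y)=\sigma_1(T^{\,n-1-i}(z_0))$ for $0\le i\le n-1$, proved by induction on $i$. The base case $i=0$ is the definition of $y$. For the inductive step one uses $T=\sigma_2\circ\sigma_1$ and that $\sigma_1,\sigma_2$ are involutions: $T^{i+1}(y)=\sigma_2\sigma_1\sigma_1(T^{\,n-1-i}(z_0))=\sigma_2(T^{\,n-1-i}(z_0))$, and rewriting $T^{\,n-1-i}(z_0)=\sigma_2\sigma_1(T^{\,n-2-i}(z_0))$ (valid since $n-1-i\ge1$) this equals $\sigma_1(T^{\,n-2-i}(z_0))$. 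Since $T^{\,n-1-i}(z_0)\in I_{b_{n-1-i}}$ and the restriction of $\sigma_1$ to each $I_c$ is a bijection onto $I_{c^{-1}}$, we get $T^i(y)\in I_{b_{n-1-i}^{-1}}$. Reading off the $i$-th letters of $\Sigma_T(y)$ for $i=0,\dots,n-1$ gives precisely $b_{n-1}^{-1}b_{n-2}^{-1}\cdots b_0^{-1}=w^{-1}$ as a prefix of $\Sigma_T(y)$.

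The one delicate point — and the step I expect to be the main obstacle — is checking that $y$ is a legitimate base point, i.e.\ $y\in\hat{I}\setminus\hat{O}$, so that $\Sigma_T(y)$ is a genuine infinite natural coding. Here I would note that $y=\sigma_1(T^{n-1}(z_0))=\sigma_2(T^{n}(z_0))$, again using $\sigma_2\circ\sigma_2=\id$. Because $T$ has no connections, $T$ restricts to a bijection of $\hat{I}\setminus\hat{O}$ onto itself (as established earlier in the section), so $T^n(z_0)\in\hat{I}\setminus\hat{O}$; and $\hat{O}=O\cup\sigma_2(O)$ is $\sigma_2$-invariant, so $\sigma_2$ preserves $\hat{I}\setminus\hat{O}$. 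Hence $y\in\hat{I}\setminus\hat{O}$, which completes the argument. No separate check of reducedness of $\Sigma_T(y)$ is needed, since every infinite natural coding was already observed to be reduced.
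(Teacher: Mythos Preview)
The paper does not prove this proposition; it is quoted from the companion paper on linear involutions. Your argument is correct and is the natural one: the identity $\sigma_2 T^{-1}=T\sigma_2$ (equivalently, your inductive computation $T^i(y)=\sigma_1(T^{\,n-1-i}(z_0))$) converts the forward orbit segment of $z_0$ into the orbit segment of the mirror point $y$, with inverted labels.

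One small point deserves mention. The proposition is stated for an arbitrary linear involution, whereas you invoke ``$T$ has no connections'' in order to quote that $T$ restricts to a bijection of $\hat{I}\setminus\hat{O}$. If you inspect the paper's verification of that bijection (immediately preceding the definition of minimality), you will see that the argument nowhere uses the absence of connections --- it relies only on $T=\sigma_2\sigma_1$ with $\sigma_1,\sigma_2$ involutions --- and the $\sigma_2$-invariance of $\hat{O}=O\cup\sigma_2(O)$ is immediate from its form. Hence your proof goes through verbatim for a general linear involution once the superfluous hypothesis is dropped. Since the sole application of the proposition in this paper is to involutions without connections, the restriction is harmless for the paper's purposes, but you have in fact established the full statement.
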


We prove the following result.
\begin{theorem}\label{theoremInvolutionSpecular}
The natural coding of a linear involution without connections
 is a specular set.
\end{theorem}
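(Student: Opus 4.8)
The goal is to show that $S = \LL(T)$ for a linear involution $T$ without connections satisfies all the defining properties of a specular set: it is a symmetric, biextendable set of reduced words on $A$ (relative to $\theta$), and it is a tree set of characteristic $2$. The plan is to verify these properties in turn, with the tree condition on the extension graphs being the substantive step.

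First I would dispose of the easy properties. The words of $S$ are reduced: this is exactly the observation already made in the text following the definition of infinite natural coding (the argument that $a_n = a$, $a_{n+1} = a^{-1}$ would force $I_a$ and $I_{a^{-1}}$ into the same component via $\sigma_2\sigma_1$, contradiction). Symmetry (closure under inverses) is Proposition~\ref{prop:inverse}. Biextendability and factoriality follow from the definition of $\LL(T)$ as the set of factors of the infinite codings $\Sigma_T(z)$, together with the fact (Proposition~\ref{propBL}) that $T$ without connections is minimal, or at least that every point of $\hat I \setminus \hat O$ has orbit visiting every $I_a$, so every word of $S$ extends both left and right inside $S$. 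Minimality also gives that $S$ is uniformly recurrent, which we will want in order to invoke the results on $S$-maximal bifix codes if needed.

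The core of the proof is to show $\chi(S) = 2$ and that every nonempty $w \in S$ has $\E_S(w)$ a tree, while $\E_S(\varepsilon)$ is a disjoint union of two trees. The natural approach is to set up a bijection between the admissible extensions of $w$ in $S$ and a combinatorial–geometric structure coming from $T$. For a word $w \in S$, the set of points $z$ whose coding starts with $w$ is an open subinterval $I_w$ of one of the two components of $\hat I$ (an intersection of the cylinder sets $T^{-i}I_{w_i}$), cut only by finitely many singular points; the left extensions $a$ with $aw \in S$ correspond to the subintervals of $\sigma_1 \sigma_2$-preimages meeting $I_w$, i.e. to the singular points of $T^{-1}$ lying inside $I_w$ (plus the endpoints), and similarly the right extensions correspond to singular points of $T$ (division points of $\Sigma$) inside the image interval $T^{|w|}I_w$. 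Because $T$ has no connections, the relevant singular points are in "general position": no two of them coincide along an orbit segment of length $|w|$. This is exactly the mechanism by which $\E_S(w)$ becomes a tree — it is the same argument as for interval exchanges and linear involutions in the orientable case, where the extension graph records which left-interval/right-interval pairs are joined by a sub-subinterval on which $T^{|w|}$ is continuous, and the no-connection hypothesis forbids cycles (a cycle would force two singular points to be identified along the orbit, which is a connection) while connectivity comes from the fact that $I_w$ is a single interval. For $\E_S(\varepsilon)$ the same picture gives two components, one for each component of $\hat I$, each a tree; hence characteristic $2$. I would carry this out by: (1) describing the cylinder interval $I_w$ and the two families of cut points; (2) showing edges of $\E_S(w)$ correspond bijectively to the maximal subintervals of $I_w$ on which $T^{|w|}$ (followed by the next-letter selection) is continuous; (3) using "no connections" to prove acyclicity; (4) using "$I_w$ is an interval / $\hat I$ has two components" to prove the right number of connected components, so that acyclic $+$ correct component count $=$ forest with the right number of trees, i.e. each component is a tree.

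The main obstacle I expect is step (3)–(4): making precise the correspondence between cycles in $\E_S(w)$ and connections of $T$, and being careful about the role of the mirror involution $\sigma_2$ and of symmetries (flips) in $\sigma_1$, since the non-coherent/non-orientable cases behave differently from plain interval exchanges and one must not accidentally assume orientability. A clean way to sidestep some of this is to reduce to the known orientable case by a doubling-type argument, or conversely to invoke Theorem~\ref{theoremConverseCard}: once we know $S$ is uniformly recurrent, laminary, with $\E_S(\varepsilon)$ acyclic, it suffices to prove the cardinality formula $\Card(X) = d_X(S)(\Card(A)-2)+2$ for finite $S$-maximal bifix codes $X$, and the latter can be established by the interval-exchange-style counting of parses (each parse of a word of maximal $S$-degree corresponds to a point of a cylinder, and the count of intervals is governed by the singularities, giving the linear formula). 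I would present the direct extension-graph argument as the primary route and mention the reduction via Theorem~\ref{theoremConverseCard} as the alternative, since the latter cleanly isolates the only real computation — the bifix code cardinality — into the form already handled in~\cite{BertheDelecroixDolcePerrinReutenauerRindone2014} for the geometric side.
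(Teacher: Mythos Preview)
Your proposal is correct in substance but takes a much longer route than the paper does. The paper's proof is three lines: it notes that $\LL(T)$ is symmetric by Proposition~\ref{prop:inverse}, that it is biextendable and formed of reduced words (hence laminary), and then simply cites Theorem~9.5 of~\cite{DolcePerrin2016} for the fact that $\LL(T)$ is a tree set of characteristic~$2$. In other words, the substantive step you propose to carry out---the cylinder-interval analysis showing that $\E_S(w)$ is a tree for nonempty $w$ and a union of two trees for $w=\varepsilon$---is exactly what that external reference does, and the present paper does not reproduce it. Your geometric sketch (cut points of the two partitions of $I_w$, no connections forcing general position, connectedness from $I_w$ being a single interval, two components from the two copies of $I$) is the standard argument and would work.

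One caution about your alternative route via Theorem~\ref{theoremConverseCard}: that theorem requires $S$ to be uniformly recurrent, but Proposition~\ref{propBL} only gives minimality of $T$ in the nonorientable case. When $T$ is orientable, $\LL(T)$ splits as $S_+\cup S_-$ with $S_+\cap S_-=\{\varepsilon\}$, and $\LL(T)$ itself is not recurrent (compare Example~\ref{ex:fibodouble2}). So the bifix-code-cardinality shortcut would need a separate treatment of the orientable case, whereas the direct extension-graph argument (and the citation the paper uses) covers both uniformly. Your primary route avoids this pitfall; just be aware that the secondary one does not go through without modification.
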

\begin{proof}
Let $T$ be a linear involution without connections. By Proposition~\ref{prop:inverse}, the set $\LL(T)$ is symmetric. Since it is by definition biextendable
and formed of reduced words,
it is a laminary set. 
By~\cite[Theorem 9.5]{DolcePerrin2016}, $\LL(T)$ is a tree set of characteristic $2$. Thus $\LL(T)$ is specular.
\end{proof}

We now present an example of a linear involution on an alphabet $A$ where the involution $\theta$ has fixed points.

\begin{example}
\label{exampleFiboDoubleInvolution}
Let $A=\{a,b,c,d\}$ be as in Example~\ref{exampleSpecularGroup} (in particular, $d=b^{-1}$, $a=a^{-1}$, $c=c^{-1}$).
\begin{figure}[hbt]
\centering
\gasset{AHnb=0,Nadjust=wh}
\begin{picture}(100,15)
\node(h0)(0,10){}\node(bbar)(61.8,10){}\node(h1)(100,10){}
\node(b0)(0,0){}\node(c)(38.2,0){}\node(b1)(100,0){}

\drawedge[linecolor=red,linewidth=1](h0,bbar){$a$}
\drawedge[linecolor=yellow,linewidth=1](bbar,h1){$d$}
\drawedge[linecolor=blue,linewidth=1](b0,c){$b$}
\drawedge[linecolor=green,linewidth=1](c,b1){$c$}
\end{picture}
\caption{A linear involution on $A=\{a,b,c,d\}$.}\label{figureLinearFiboDouble}
\end{figure}
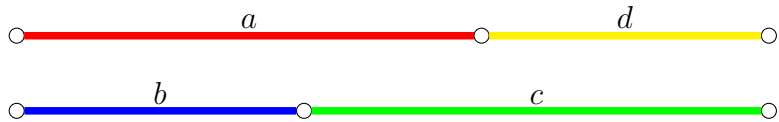
Let $T$ be the linear involution represented in Figure~\ref{figureLinearFiboDouble} with $\sigma_1$ being a translation on $I_b$ and a symmetry on $I_a,I_c$.
Choosing $(3-\sqrt{5})/2$ for the length of $I_b$, the involution is without connections. 
Thus $S = \LL(T)$ is a specular set.
Let us show it is equal to the specular set obtained by the doubling transducer in Example~\ref{exampleFiboDouble}.
Indeed, consider the interval exchange $V$ on the interval $Y=]0,2[$ represented in Figure~\ref{figureFibonacciDouble} on the right, which is obtained by using two copies of the  interval exchange $U$ defining the Fibonacci set (represented in Figure~\ref{figureFibonacciDouble} on the left).

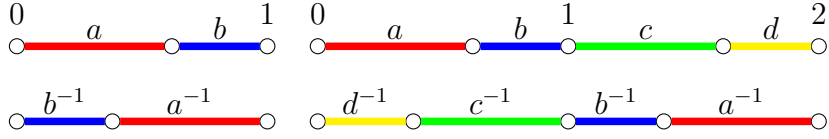
\begin{figure}[hbt]
\centering\gasset{AHnb=0,Nh=2,Nw=2,ExtNL=y,NLdist=2}
\begin{picture}(100,15)
\put(0,0){
\begin{picture}(40,15)
\node(h0)(0,10){$0$}\node(b)(20.6,10){}\node(h1)(33.3,10){$1$}
\node(b0)(0,0){}\node(a)(12.7,0){}\node(b1)(33.3,0){}

\drawedge[linecolor=red,linewidth=1](h0,b){$a$}\drawedge[linecolor=blue,linewidth=1](b,h1){$b$}
\drawedge[linecolor=blue,linewidth=1](b0,a){$b^{-1}$}\drawedge[linecolor=red,linewidth=1](a,b1){$a^{-1}$}
\end{picture}
}
\put(40,0){
\begin{picture}(40,15)
\node(h0)(0,10){$0$}\node(b)(20.6,10){}\node(h1)(33.3,10){$1$}
\node(d)(53.9,10){}\node(h2)(66.6,10){$2$}

\node(b0)(0,0){}\node(cbar)(12.7,0){}\node(b1)(33.3,0){}
\node(abar)(46,0){}\node(b2)(66.6,0){}

\drawedge[linecolor=red,linewidth=1](h0,b){$a$}\drawedge[linecolor=blue,linewidth=1](b,h1){$b$}
\drawedge[linecolor=green,linewidth=1](h1,d){$c$}
\drawedge[linecolor=yellow,linewidth=1](d,h2){$d$}

\drawedge[linecolor=yellow,linewidth=1](b0,a){$d^{-1}$}
\drawedge[linecolor=green,linewidth=1](a,b1){$c^{-1}$}
\drawedge[linecolor=blue,linewidth=1](b1,abar){$b^{-1}$}
\drawedge[linecolor=red,linewidth=1](abar,b2){$a^{-1}$}
\end{picture}
}
\end{picture}
\caption{Interval exchanges $U$ and $V$ for the Fibonacci set and its doubling.}
\label{figureFibonacciDouble}
\end{figure}

Let $X=]0,1[\times\{0,1\}$ and let $\alpha:Y\rightarrow X$ be the
map defined by
\begin{displaymath}
\alpha(z)=\begin{cases}(z,0)&\text{if $z\in]0,1[$}\\(2-z,1)&\text{otherwise.}
\end{cases}
\end{displaymath}
Then $\alpha\circ V=T\circ\alpha$ and thus $\LL(V)=\LL(T)$.
The interval exchange $V$ is actually the orientation covering
of the linear involution $T$ (see~\cite{BertheDelecroixDolcePerrinReutenauerRindone2014}).
\end{example}

\section{Return words}
\label{sec:return}
In this section we introduce three variants of the notion of return words, namely complete, right and mixed return words.
We prove several results concerning sets of return words (Theorems~\ref{theoremCardRightReturns}, \ref{theoremCardReturns}, \ref{theoremCardFirstMixed}).
We also prove that the set of return words to a given word forms a basis of the even subgroup (Theorem~\ref{theoremReturns} referred to as the First Return Theorem) and that the mixed return words form a monoidal basis of the specular group (Theorem \ref{theoremFirstMixed}).

\subsection{Cardinality Theorems for return words}
\label{subsec:cardret}
In this section, we introduce several notions of return words: complete return words, right (or left) return words and  mixed return words.
For each of them, we prove a cardinality theorem (Theorems~\ref{theoremCardRightReturns}, \ref{theoremCardReturns} and \ref{theoremCardFirstMixed}).

Here, when we consider a recurrent set $S$ containing the alphabet $A$, we implicitly assume that all words of $S$ are on the alphabet $A$.

\subsubsection{Complete return words}
Let $S$ be a factorial set of words and let $X \subset S$ be a set of nonempty words.
A \emph{complete return word} to $X$ is a word of $S$ with a proper prefix in $X$, a proper suffix in $X$ but no internal factor in $X$.
We denote by $\CR_S(X)$ the set of complete return words to $X$.

The set $\CR_S(X)$ is a bifix code.
If $S$ is uniformly recurrent, $\CR_S(X)$ is finite for any finite set $X$.
For $x\in S$, we denote $\CR_S(x)$ instead of $\CR_S(\{x\})$.
Thus $\CR_S(x)$ is the usual notion of a complete return word (see~\cite{Durand1998} for example).

\begin{example}
\label{exampleCompleteFiboDouble}
Let $S$ be the specular set of Example~\ref{exampleFiboDouble}.
One has
\begin{eqnarray*}
\CR_S(a)&=&\{abca,abcda,acda\}\\
\CR_S(b)&=&\{bcab,bcdacdab,bcdacdacdab\}\\
\CR_S(c)&=&\{cabc,cdabc,cdac\}\\
\CR_S(d)&=&\{dabcabcabcd,dabcabcd,dacd\}.
\end{eqnarray*}
\end{example}

The following result is proved in~\cite[Theorem 5.2]{DolcePerrin2016}.

\begin{theorem}
\label{propositionNew}
Let $S$ be a recurrent neutral set containing the alphabet $A$.
For any finite nonempty bifix code $X \subset S$ with empty kernel, we have
\begin{equation}
\Card(\CR_S(X)) = \Card(X)+\Card(A)-\chi(S).
\label{formulaComplete}
\end{equation}
\end{theorem}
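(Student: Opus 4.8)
The plan is to derive Theorem~\ref{propositionNew} by the same kind of extension-graph bookkeeping that underlies the Cardinality Theorems~\ref{theoremCardinality} and~\ref{corollaryCardinality}, now applied to the word $\varepsilon$ with respect to the pair $(X,X)$. First I would recall that, since $S$ is recurrent and $X$ is a finite bifix code with empty kernel, $\CR_S(X)$ is finite; moreover a word of $\CR_S(X)$ is precisely a word $w=xuy$ with $x,y\in X$ and $u\in S$ having no internal occurrence of a word of $X$. Writing $x$ and $y$ as the prefix and suffix copies, the natural object is the bipartite graph $\E_S^{X,X}(\varepsilon)$ whose left vertices are $L_S^X(\varepsilon)=X$, whose right vertices are $R_S^X(\varepsilon)=X$, and whose edges are the pairs $(x,y)$ with $xy\in S$ — but more useful is the variant where we record edges through arbitrary ``middles'' $u$, i.e.\ complete return words. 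The key combinatorial identity to set up is that $\Card(\CR_S(X))$ counts edges in a certain graph on $2\Card(X)$ vertices, and its number of connected components is controlled by acyclicity/neutrality via Proposition~\ref{propositionExtended}.

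The key steps, in order, would be: (1) Observe that since $X$ has empty kernel, $X$ is an $S$-maximal prefix code and an $S$-maximal suffix code — here one uses that a bifix code with empty kernel cannot be extended, combined with recurrence (Theorem 4.2.2/4.2.8 of~\cite{BerstelDeFelicePerrinReutenauerRindone2012} as quoted). Hence both Assertions of Proposition~\ref{propositionExtended} apply with $Y=X$: the graph $\E_S^{X,X}(\varepsilon)$ is acyclic (because $\E_S(\varepsilon)$ is acyclic, being a union of $\chi(S)$ trees in the neutral case — recall for a neutral set of characteristic $c$, $\E(\varepsilon)$ has $c$ connected components, each a tree when acyclic) and $m_S^{X,X}(\varepsilon)=m_S(\varepsilon)=1-\chi(S)$. (2) Interpret $\CR_S(X)$ as labelling the edges of a graph $\Gamma$: build $\Gamma$ on vertex set $X\sqcup X$ (a ``left'' copy and a ``right'' copy) with one edge for each $w\in\CR_S(X)$, joining its prefix $x$ (left copy) to its suffix $y$ (right copy). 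The number of edges of $\Gamma$ is $\Card(\CR_S(X))$. (3) Show $\Gamma$ is connected and acyclic on each of its components in a way that mirrors $\E_S^{X,X}(\varepsilon)$: more precisely, prove that the number of connected components of $\Gamma$ equals $1-m_S^{X,X}(\varepsilon)=\chi(S)$, and that $\Gamma$ is a forest. Then Euler's formula for a forest, $\#\text{edges}=\#\text{vertices}-\#\text{components}$, gives $\Card(\CR_S(X))=2\Card(X)-\chi(S)$... which is off by $\Card(X)$ from the claim, so the vertex set must instead be taken with multiplicity so that the true count is $\#\text{vertices}=\Card(X)+\Card(A)$: one identifies the ``middle'' structure of a complete return word with a path in the extension graphs $\E_S(u)$ for the internal letters, and the correct graph has vertices indexed by $X$ together with $A$. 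Concretely one shows $\Gamma$ has $\Card(X)+\Card(A)$ vertices and $\chi(S)$ components, whence $\Card(\CR_S(X))=\Card(X)+\Card(A)-\chi(S)$.

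The cleanest route for step (3), and the one I would actually write, is to avoid reconstructing $\Gamma$ from scratch and instead reduce to the known Cardinality Theorem. Consider the bifix code $Z=\CR_S(X)\setminus(\text{something})$ — more precisely, note that $\CR_S(X)$ together with the ``tail'' behaviour forms an $S$-maximal bifix code in a derived set, or directly: the set of complete return words to $X$ is in bijection with a basis-type object whose cardinality is computed by Theorem~\ref{theoremCardinality} after passing to the decoding of $S$ by a suitable $S$-maximal bifix code extending $X$. Alternatively — and this is likely the intended argument — one uses Proposition~\ref{propCANT}-style counting directly: summing $m(w)$ over an appropriate finite antichain. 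I would set $U$ = the (finite) set of proper prefixes of words of $\CR_S(X)$ that have a prefix in $X$, use that $\sum$ of $(r(w)-1)$ telescopes, and match the total with $\Card(X)+\Card(A)-\chi(S)$ via $\sum_{w} m(w)=0$ on nonempty words plus the $\varepsilon$-correction $m(\varepsilon)=1-\chi(S)$.

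The main obstacle will be step (3): getting the vertex count right, i.e.\ correctly identifying the graph whose edges are complete return words as a graph on $\Card(X)+\Card(A)$ vertices rather than $2\Card(X)$ vertices, and proving it is a forest with exactly $\chi(S)$ components. This is exactly where acyclicity and neutrality of $S$ are used and where the ``tree set'' hypothesis (via $\E_S(\varepsilon)$ being a union of $\chi(S)$ trees) enters; the bookkeeping that glues the left-copy of $X$, the internal-letter extension graphs, and the right-copy of $X$ into one forest is the delicate part, and I expect it to occupy the bulk of the proof. Everything else — finiteness of $\CR_S(X)$, the $S$-maximality of $X$ as a prefix and suffix code from empty kernel, and the application of Proposition~\ref{propositionExtended} — is routine given the results already stated.
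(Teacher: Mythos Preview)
First, note that the paper does not actually prove Theorem~\ref{propositionNew}: it is quoted from~\cite[Theorem~5.2]{DolcePerrin2016}, so there is no in-paper argument to compare your sketch against.

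Your proposal contains a genuine error in step~(1): the claim that a bifix code with empty kernel is an $S$-maximal prefix (and suffix) code is false. Any singleton $X=\{x\}$ has empty kernel (since $x$ is never an \emph{internal} factor of itself), yet $\{x\}$ is not $S$-maximal as a prefix code whenever $\Card(A)\ge 2$. Since $X$ need not be $S$-maximal, Assertion~2 of Proposition~\ref{propositionExtended} is unavailable, and the identity $m_S^{X,X}(\varepsilon)=m_S(\varepsilon)$ on which your counting rests cannot be invoked. More fundamentally, the edges of $\E_S^{X,X}(\varepsilon)$ are the pairs $(x,y)\in X\times X$ with $xy\in S$; these have nothing to do with complete return words, which are words $xuy$ with $u$ of arbitrary length and no internal factor in $X$. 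Your graph $\Gamma$ is never actually defined: you observe that the naive bipartite version on $2\Card(X)$ vertices gives the wrong count, then simply assert that the ``correct'' graph has $\Card(X)+\Card(A)$ vertices, with no construction and no argument that it is a forest with $\chi(S)$ components. The concluding ``cleanest route'' paragraph lists three further possibilities without carrying any of them out. Of those, the telescoping sum of $r(w)-1$ over the words having a prefix in $X$ but no proper prefix in $\CR_S(X)$ is indeed the one that leads to a proof, but executing that computation correctly --- and this is where neutrality of nonempty words is actually used --- is the entire content of the argument, and nothing in your sketch does it.
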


As a consequence of Theorem~\ref{propositionNew}, one has the following statement. 

\begin{corollary}
\label{corollaryCardCompleteReturns}
Let $S$ be a recurrent specular set on the alphabet $A$.
For any finite nonempty bifix code $X\subset S$ with empty kernel, one has
\begin{displaymath}
\Card(\CR_S(X)) = \Card(X)+\Card(A)-2.
\end{displaymath}
\end{corollary}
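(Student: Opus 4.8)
The plan is to derive this corollary directly from Theorem~\ref{propositionNew} by specializing to the case of a specular set. First I would recall that a specular set $S$ is by definition a tree set of characteristic $2$, and that a tree set of characteristic $c$ is in particular a neutral set of characteristic $c$ (as noted in Section~\ref{sectionPreliminaries}). Hence $\chi(S) = 2$. Moreover, a recurrent specular set is a recurrent neutral set containing the alphabet $A$, so the hypotheses of Theorem~\ref{propositionNew} are satisfied whenever $X \subset S$ is a finite nonempty bifix code with empty kernel.

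The single substantive step is then to substitute $\chi(S) = 2$ into Formula~\eqref{formulaComplete}, which gives
\begin{displaymath}
\Card(\CR_S(X)) = \Card(X) + \Card(A) - 2,
\end{displaymath}
exactly the claimed identity. This mirrors the way Theorem~\ref{corollaryCardinality} (the Cardinality Theorem for bifix codes) was deduced from Theorem~\ref{theoremCardinality}, so the corollary is genuinely immediate once the characteristic is identified.

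There is essentially no obstacle here: the only thing to be careful about is making explicit that "recurrent specular set" entails "recurrent neutral set" with $\chi(S)=2$, so that Theorem~\ref{propositionNew} applies verbatim. I would keep the proof to two sentences — invoking $\chi(S)=2$ and citing Theorem~\ref{propositionNew} — rather than reproving anything.

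\begin{proof}
Since $S$ is specular, it is in particular a recurrent neutral set containing $A$, and $\chi(S)=2$.
The statement then follows immediately from Theorem~\ref{propositionNew}.
\end{proof}
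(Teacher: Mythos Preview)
Your proposal is correct and matches the paper's approach exactly: the paper presents Corollary~\ref{corollaryCardCompleteReturns} simply as an immediate consequence of Theorem~\ref{propositionNew}, obtained by using that a specular set has $\chi(S)=2$.
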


The following example illustrates Corollary~\ref{corollaryCardCompleteReturns}.

\begin{example}
\label{exampleJulien3}
Let $S$ be the specular set on the alphabet $A=\{a,b,c,d\}$
of Example~\ref{exampleJulien}. 
We have
\begin{displaymath}
\CR_S(\{a,b\})=\{ab,acda,bca,bcda\}.
\end{displaymath}
It has four elements in agreement with Corollary~\ref{corollaryCardCompleteReturns}.
\end{example}

We note that when $X$ is a finite $S$-maximal bifix code
of $S$-degree $d$
with kernel $K(X)$, the
set $\CR_S(X)$ has the following property.
 For any set $K$ such that $K(X)\subset K\subset X$
with $K\ne X$, the set
$Y=K\cup \CR_S(X\setminus K)$
is an $S$-maximal bifix code of $S$-degree $d_S(X)+1$.
The code $X$ is the derived code of $Y$
(see~\cite[Section 4.3]{BerstelDeFelicePerrinReutenauerRindone2012}).
This gives a connection between Equations~\eqref{formulaBifixMax}
and \eqref{formulaComplete}. 
By Equation~\eqref{formulaBifixMax}, we have
\begin{displaymath}
\Card(Y)=(d+1)(\Card(A)-\chi(S))+\chi(S)=\Card(X)+\Card(A)-\chi(S).
\end{displaymath}
Thus 
\begin{eqnarray*}
\Card(\CR_S(X\setminus K))&=&\Card(Y)-\Card(K)\\
&=&\Card(X)-\Card(K)+\Card(A)-\chi(S)\\
&=&\Card(X\setminus K)+\Card(A)-\chi(S)\\
\end{eqnarray*}
which is Formula~\eqref{formulaComplete} since $X\setminus K$
is a bifix code with empty kernel.

\subsubsection{Right  return words}

Let $S$ be a factorial set. For any nonempty word 
$x\in S$, a \emph{right return
word} to $x$ in $S$ is a word $w$ such that $xw$ is a  complete return
word to $x$. One defines symmetrically the \emph{left  return words}
to $x\in S$
as the words $w$ such that $wx$ is a  complete return word.
We denote by $\RR_S(x)$ the set of right  return
words to $x$ in $S$ and by $\RR'_S(x)$ the corresponding set
of left  return words. 

Note that when $S$ is a laminary set $\RR_S(x)^{-1}=\RR'_S(x^{-1})$.

\begin{proposition}\label{propositionReturnsEven}
Let $S$ be a specular set and let $x\in S$ be a nonempty word.
All the words of $\RR_S(x)$ are even.
\end{proposition}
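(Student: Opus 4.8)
The statement to prove is that in a specular set $S$, every right return word $w$ to a nonempty word $x$ is even. The natural approach is to use the parity graph $\G$ and the partition $(S_{i,j})$ from Proposition~\ref{propositionPartition}, combined with the characterization of even words as $S_{0,0}\cup S_{1,1}$ (Assertion (3)) and the multiplicativity constraint of Assertion (2).

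\textit{Proof.}
Let $w\in\RR_S(x)$, so that $xw$ is a complete return word to $x$; in particular $xw\in S$ and $x$ is a proper prefix and a proper suffix of $xw$. Write $xw=xu$ where the suffix $x$ of $xw$ is a proper suffix, so $u$ is a nonempty word with $w=uy$ for some (possibly empty) $y$ — more precisely, since $x$ is both a prefix and a suffix of $xw$, there are words $p,q$ with $xw=xp=qx$ and $|p|=|w|$. Let $i,j\in\{0,1\}$ be the unique pair with $x\in S_{i,j}$ (Proposition~\ref{propositionPartition}, Assertion (1)). Applying Assertion (2) to the factorization $xw=xp$ with $x\in S_{i,j}$ and reading off where $p$ lives, and then applying it again to $xw=qx$, I get that $p\in S_{j,k}$ for some $k$, and from the suffix occurrence that $k=i$ (the occurrence of $x$ as a suffix forces the path labelled $p$ to end at the start state $i$ of the path labelled by the final $x$). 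Hence $w=p\in S_{j,i}$...

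\textit{Correction of the bookkeeping.} Let me restate cleanly. Since $x$ is a proper prefix of $xw$, the word $w$ satisfies $xw\in S$, so by Assertion (2) (applied to $x\in S_{i,j}$ and the factor following it) we have $w\in S_{j,\ell}$ for some $\ell$. Since $x$ is also a proper suffix of $xw$, writing $xw = w'x$ with $w'$ of the same length as $w$ and noting $w' \in S_{i,m}$ for some $m$ while the final occurrence of $x$ lies in $S_{i,j}$, Assertion (2) forces $m=i$. But $w$ and $w'$ are the same word (both equal the length-$|w|$ suffix resp.\ the content after the first $x$), so actually the relevant point is: the occurrence of $xw$ decomposed as (initial $x$)(rest) gives $w$ starting at state $j$, and decomposed as (stuff)(final $x$) the final $x$ starts at state $i$; since the path for $w$ inside $xw$ ends exactly where the final $x$ begins, $w$ ends at state $i$. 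Therefore $w\in S_{j,i}$. By Assertion (3), $w$ is even if and only if $j=i$.

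\textit{The main obstacle.} The crux is thus to show $i=j$, i.e.\ that any nonempty word $x\in S$ which occurs as a factor of some word of $S$ with an overlap structure allowing $x$ to return to itself must have $x\in S_{i,i}$. Equivalently: if $x\in S_{i,j}$ with $i\ne j$, then $x$ has no complete return word, which would contradict recurrence — but wait, $S$ need not be assumed recurrent here, so this argument is not available. Instead, the right tool is Proposition~\ref{propositionOddReturns} together with a direct path-counting argument: suppose $x\in S_{i,j}$ and $xw\in S$ is a complete return word to $x$. Then $xw$ contains two occurrences of $x$; the label of the path from the start of the first occurrence to the start of the second is $w\in S_{j,i'}$ where $i'$ is the start state of the second occurrence. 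Since the second occurrence of $x$ also reads a path labelled $x$, which goes from state $i$ to state $j$, consistency of the path for $xw$ forces $i'=i$. Thus $w\in S_{j,i}$, and applying Assertion (4), $w^{-1}\in S_{1-i,1-j}$. Now the point: $xwx^{-1}$ need not be in $S$, so Proposition~\ref{propositionOddReturns} does not directly apply either. The cleanest route is really the path argument: since $xw\in S$ and $xw=x\cdot w$ with the path for the whole word well-defined, the path for $w$ starts where the path for the first $x$ ends (state $j$) and ends where the path for the second (suffix) $x$ starts; and that suffix $x$ reads from $i$ to $j$ as part of $xw$, but it is a suffix of $xw$, so it ends at the same terminal state as $xw$ itself. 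Meanwhile $xw$, read as $x$ followed by $w$, ends at the terminal state of $w$. Hence the terminal state of $w$ equals the terminal state of the suffix $x$ which equals $j$ if we insist the suffix $x$ ends at $j$ — no, the suffix $x\in S_{i,j}$ ends at $j$. So $w$ ends at $i$ (the start of the suffix $x$) — I need $i$, and also $w$ ends at the terminus of $xw$. The terminus of $xw$ = terminus of $x$ (first copy) followed by $w$ = (state $j$, then $w$) = terminus of $w$. And the suffix $x$ occupies the last $|x|$ letters, starting at the state $w$ leaves us in, namely $i$; and ending at $j$. So $w\in S_{j,i}$ and $xw\in S_{i,j}$, consistent. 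Now $w\in S_{j,i}$: by Assertion (3) this is even iff $i=j$.

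To finish, I must rule out $i\ne j$. Suppose $i\ne j$; then $w\in S_{0,1}$ or $S_{1,0}$, i.e.\ $w$ is odd. I claim this forces the second occurrence of $x$ in $xw$ to overlap the first in a way producing a shorter complete return word, contradicting minimality — but a clean contradiction-free argument is: consider instead $x^{-1}\in S_{1-j,1-i}$. The word $(xw)^{-1}=w^{-1}x^{-1}\in S$ is a complete return word to $x^{-1}$, and $w^{-1}\in\RR_S(x^{-1})$ after adjusting, with $w^{-1}\in S_{1-i,1-j}$. Iterating/combining the two, $ww^{-1}$-type considerations... Honestly, the slick finish is: by recurrence of $S$ — which the surrounding Proposition~\ref{propositionOddReturns} context and the later theorems do assume for return-word statements, and which I will assume here as well — there is $v\in S$ with $xwvw x \in S$ or rather with $x$ returning; then $w$ and a conjugating element put us in position to apply Proposition~\ref{propositionOddReturns}. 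The expected main difficulty, then, is precisely pinning down why $i=j$: I anticipate handling it by observing that $xw\in S$ with $x$ a prefix AND a suffix means, reading the $\G$-path of $xw$, that the state after the prefix $x$ (which is $j$) must equal the state before the suffix $x$ (which is $i$), because these are both "the unique state of $\G$ reached after reading the first $|x|$ letters" versus "... before reading the last $|x|$ letters" and for a complete return word of a single word these coincide iff there's no shorter factoring — this is automatic from the fact that in $\G$ every letter labels a unique edge, so the state reached after any prefix is determined by that prefix; the prefix of length $|x|$ of $xw$ is $x$, ending at $j$; but also the suffix of length $|x|$ is $x$, and the state it begins at is the state reached after the prefix $w'$ of length $|w|$, which is $i$ (since the suffix $x$ reads $i\to j$). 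These need not be equal a priori — UNLESS we use that $x$ returns to itself, meaning there is a single "type" $(i,j)$ and the two occurrences of $x$ both being occurrences of the same word $x\in S_{i,j}$ is automatic, but their start states differ. The genuine resolution: $w$ ends at state $i$ and $w$ starts at state $j$ with $w\in S_{j,i}$; separately, $x$ read as a suffix of $xw$ ends at the terminal state of $xw$; $xw$ read as prefix-$x$-then-$w$ has terminal state = terminal state of $w$ = $i$; so the suffix $x$ ends at state $i$; but the suffix $x$ has type $(i',j')=(\cdot,i)$ as a standalone word $x\in S_{i,j}$, forcing $j=i$. That is the key step, and I expect it to be the main obstacle: correctly identifying that the terminal state of the word $xw$, computed two ways, yields $j = i$. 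Once $i=j$, Assertion (3) gives that $w\in S_{i,i}$ is even. $\QED$
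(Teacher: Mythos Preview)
Your approach---using the parity graph and Proposition~\ref{propositionPartition}---is exactly the paper's, but there is a bookkeeping error that sends you on a long and unnecessary detour.

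The error is the claim ``the path for $w$ inside $xw$ ends exactly where the final $x$ begins.'' In the word $xw$, the factor $w$ occupies the last $|w|$ positions and the suffix occurrence of $x$ occupies the last $|x|$ positions; both end at the \emph{end} of $xw$, not one where the other begins. (You seem momentarily to conflate $w$ with the left return word $w'$, where $xw=w'x$; it is $w'$ that ends where the suffix $x$ begins.) Consequently your conclusion $w\in S_{j,i}$ is wrong: since the terminal state of $w$ equals the terminal state of $xw$, and the terminal state of $xw$ equals that of its suffix $x\in S_{i,j}$, the terminal state of $w$ is $j$. Combined with $w$ starting at $j$ (after the prefix $x$), this gives $w\in S_{j,j}$, hence even by Assertion~(3). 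That is the whole proof, and it is essentially what the paper does in two lines.

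Your subsequent effort to prove $i=j$ is both unnecessary and false: when $x$ is odd (say $x\in S_{0,1}$) one has $i\ne j$, yet $w\in S_{1,1}$ is still even. Your ``derivation'' of $i=j$ at the end is circular, since it feeds the erroneous terminal state $i$ of $w$ back into the computation of the terminal state of $xw$. The ingredients of the correct argument are all present in your final paragraph (steps computing the terminal state of $xw$ two ways); you just need to drop the prior mistaken value and read off $w\in S_{j,j}$ directly.
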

\begin{proof}
If $w\in\RR_S(x)$, we have $xw=vx$ for some $v\in S$. If $x$
is odd, assume that $x\in S_{0,1}$. Then $w\in S_{1,1}$.
Thus $w$ is even. If $x$ is even, assume that $x\in S_{0,0}$. Then
$w\in S_{0,0}$ and $w$ is even again.
\end{proof}

\begin{theorem}[Cardinality Theorem for right return words]
\label{theoremCardRightReturns}
Let $S$ be a recurrent specular set. For any $x\in S$,
the set $\RR_S(x)$ has $\Card(A)-1$ elements.
\end{theorem}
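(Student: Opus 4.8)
The plan is to derive the cardinality of $\RR_S(x)$ from the already-established cardinality results for complete return words. By Proposition~\ref{propositionReturnsEven}, every word of $\RR_S(x)$ is even, so the map $w \mapsto xw$ is a bijection from $\RR_S(x)$ onto $\CR_S(x)$. Hence $\Card(\RR_S(x)) = \Card(\CR_S(x))$, and it suffices to compute the latter. Since $S$ is recurrent and specular, it is a recurrent neutral set with $\chi(S)=2$, and the singleton $X=\{x\}$ is a finite nonempty bifix code whose kernel is empty (a single word cannot be an internal factor of itself). Therefore Corollary~\ref{corollaryCardCompleteReturns} applies directly and gives
\begin{displaymath}
\Card(\CR_S(x)) = \Card(\{x\}) + \Card(A) - 2 = \Card(A) - 1.
\end{displaymath}
Combining the two displayed equalities yields $\Card(\RR_S(x)) = \Card(A) - 1$, as claimed.

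The only point requiring a word of care is the bijection $w \mapsto xw$ between $\RR_S(x)$ and $\CR_S(x)$. By the very definition of right return word, $w \in \RR_S(x)$ if and only if $xw \in \CR_S(x)$, so the map is well-defined and surjective onto $\CR_S(x)$; it is clearly injective since $w$ is recovered from $xw$ by deleting the prefix $x$. I would note in passing that Proposition~\ref{propositionReturnsEven} is not even needed for the counting argument here — it is the bijection with $\CR_S(x)$ that does the work — but it is the natural companion fact and explains why right return words land in the even part of $S$. (It will be used in the sequel, e.g.\ for the First Return Theorem.)

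I do not anticipate a genuine obstacle: the theorem is essentially a corollary of Theorem~\ref{propositionNew}/Corollary~\ref{corollaryCardCompleteReturns} specialized to a one-element code. The one thing to state explicitly is that the hypotheses of that corollary are met: $S$ recurrent and specular (hence $\chi(S)=2$), $X=\{x\}$ finite, nonempty, a bifix code, with empty kernel. Once that is checked, the computation is a single line.
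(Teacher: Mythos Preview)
Your proposal is correct and follows exactly the paper's approach: apply Corollary~\ref{corollaryCardCompleteReturns} to the singleton bifix code $X=\{x\}$ and use the obvious bijection $w\mapsto xw$ between $\RR_S(x)$ and $\CR_S(x)$. Your remark that Proposition~\ref{propositionReturnsEven} is not actually needed for the counting is accurate; the paper's one-line proof does not invoke it either.
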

\begin{proof}
This follows directly from Corollary~\ref{corollaryCardCompleteReturns}
with $X=\{x\}$
since $\Card(\RR_S(x))=\Card(\CR_S(x))$.
\end{proof}

\begin{example}\label{exampleFiboDoubleFirst}
Let $S$ be the specular set of Example~\ref{exampleFiboDouble}. We have
\begin{eqnarray*}
\RR_S(a)&=&\{bca,bcda,cda\},\\
\RR_S(b)&=&\{cab,cdacdab,cdacdacdab\},\\
\RR_S(c)&=&\{abc,dabc,dac\},\\
\RR_S(d)&=&\{abcabcd,abcabcabcd,acd\}.
\end{eqnarray*}
\end{example}

It is shown in~\cite{BertheDeFeliceDolceLeroyPerrinReutenauerRindone2013a} that if $S$ is a (uniformly) recurrent tree set of characteristic $1$ containing the alphabet $B$, then for any $x\in S$,
one has $\Card(\RR_S(x))=\Card(B)$.
The relation with Theorem~\ref{theoremCardRightReturns} is as follows.
Let $X$ be the even code and let $X_0=X\cap S_{0,0} ,X_1=X\cap S_{1,1}$. Thus $X=X_0\cup X_1$.

One has $\Card(X_0)=\Card(A)-1$ by Theorem~\ref{corollaryCardinality}
(indeed, $\Card(X)=2\Card(A)-2$ and $\Card(X_0)=\Card(X_1)$).

Let $f$ be a coding morphism for $X$.
Then for any $x\in S_{0,0}$, the set $\RR_S(x)$ is in bijection, via the decoding
by $X_0$, with the set of  right return words to $f^{-1}(x)$. Since
$f^{-1}(S_{0,0})$ is a tree set on $B_0=f^{-1}(X_0)$, the set $\RR_S(x)$ has 
$\Card(A)-1$ elements, in agreement with Theorem~\ref{theoremCardRightReturns}.

\subsubsection{Mixed  return words}
Let $S$ be a laminary set.
For $w\in S$ such that $w\ne w^{-1}$, we consider complete return words to the set $X=\{w,w^{-1}\}$. 

\begin{example}
\label{exampleInvolution3bis}
Let $T$ be the  linear involution of Example~\ref{exampleInvolution3}.
We have 
\begin{eqnarray*}
\CR_S(\{a,a^{-1}\})&=&\{ab^{-1}cba^{-1},ab^{-1}cbc^{-1}a,a^{-1}cb^{-1}c^{-1}a,\\
&&\qquad  ab^{-1}c^{-1}ba^{-1},a^{-1}cbc^{-1}a,a^{-1}cb^{-1}c^{-1}ba^{-1}\}\\
\CR_S(\{b,b^{-1}\})&=&\{ba^{-1}cb,ba^{-1}cb^{-1},bc^{-1}ab^{-1},b^{-1}cb,
b^{-1}c^{-1}ab^{-1},b^{-1}c^{-1}b\},\\
\CR_S(\{c,c^{-1}\})&=&\{cba^{-1}c,cbc^{-1},cb^{-1}c^{-1},c^{-1}ab^{-1}c,
c^{-1}ab^{-1}c^{-1},c^{-1}ba^{-1}c\}.
\end{eqnarray*}
\end{example}

\begin{theorem}
\label{theoremCardReturns}
Let $S$ be a recurrent specular set containing the alphabet $A$.
For any $w\in S$ such that $w\ne w^{-1}$, the set of complete return words to $\{w,w^{-1}\}$ has $\Card(A)$ elements.
\end{theorem}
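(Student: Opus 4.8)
The plan is to reduce this to the Cardinality Theorem for complete return words (Corollary~\ref{corollaryCardCompleteReturns}, derived from Theorem~\ref{propositionNew}). That theorem says that for a finite nonempty bifix code $X \subset S$ with empty kernel, $\Card(\CR_S(X)) = \Card(X) + \Card(A) - 2$. So if I apply it with $X = \{w, w^{-1}\}$, which has two elements since $w \neq w^{-1}$, I get $\Card(\CR_S(\{w,w^{-1}\})) = 2 + \Card(A) - 2 = \Card(A)$, which is exactly the claimed count. Hence the entire content of the proof is to verify the two hypotheses of Corollary~\ref{corollaryCardCompleteReturns}: that $\{w,w^{-1}\}$ is a bifix code, and that it has empty kernel.

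First I would check that $X = \{w, w^{-1}\}$ is a bifix code. Both words are reduced words of the same length $|w| = |w^{-1}|$, and they are distinct. Two distinct words of the same length can never be a prefix (or suffix) of one another, so $X$ is trivially both a prefix code and a suffix code, hence a bifix code. (This is the only place $w \neq w^{-1}$ is used beyond ensuring $\Card(X) = 2$.)

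Next I would check that the kernel of $X$ is empty, i.e.\ that neither $w$ nor $w^{-1}$ is an internal factor of $w$ or of $w^{-1}$. Since all three potential containing words have length $|w|$, a word of length $|w|$ can only be an internal factor of a word of length $|w|$ if that word is strictly longer — which it is not. More carefully: $u$ is an internal factor of $x$ means $x = pus$ with $p,s$ nonempty, forcing $|x| > |u|$; but here $|x| = |u| = |w|$, a contradiction. So no element of $X$ is an internal factor of any element of $X$, and the kernel is empty.

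With both hypotheses verified, Corollary~\ref{corollaryCardCompleteReturns} applies directly and yields $\Card(\CR_S(\{w,w^{-1}\})) = \Card(A)$, completing the proof. I do not expect any genuine obstacle here — the statement is essentially a specialization of the already-established complete-return-word cardinality formula, and the only subtlety is the trivial observation that equal-length distinct words cannot sit inside one another as prefixes, suffixes, or internal factors. If anything, the one point worth stating explicitly is why $w \neq w^{-1}$ guarantees $X$ has exactly two elements (so that the formula gives $\Card(A)$ rather than $\Card(A)-1$); everything else is immediate.
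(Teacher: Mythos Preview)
Your proof is correct and follows exactly the same approach as the paper: the paper's proof is the single sentence ``The statement results directly of Corollary~\ref{corollaryCardCompleteReturns},'' and you have simply filled in the (trivial) verification that $X=\{w,w^{-1}\}$ is a bifix code with empty kernel.
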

\begin{proof}
The statement results directly of Corollary~\ref{corollaryCardCompleteReturns}. 
\end{proof}

\begin{example}
\label{exampleCompleteFiboDoubleMixed}
Let $S$ be the specular set of Example~\ref{exampleFiboDouble}. In view of
the values of $\CR_S(b)$ and $\CR_S(d)$
given in Example~\ref{exampleCompleteFiboDouble},
we have
\begin{displaymath}
\CR_S(\{b,d\})=\{bcab,bcd,dab,dacd\}.
\end{displaymath}
\end{example}

Two words $u,v$
are said to \emph{overlap} if a nonempty suffix of one of them
is a prefix of the other. In particular a nonempty word overlaps with
itself.

We now consider  the return words to $\{w,w^{-1}\}$ with $w$
such that $w$ and $w^{-1}$ do not overlap. 
This is true for every $w$ in
a laminary set $S$ where the involution $\theta$
has no fixed point, in particular when $S$ is the natural coding of a linear
involution. In this case, the group $G_\theta$ is free and for any $w\in S$, the words $w$ and $w^{-1}$ do not overlap. 

With a complete return word $u$ to $\{w,w^{-1}\}$, we associate a word $N(u)$ obtained as follows. If $u$ has  $w$ as prefix, we erase it
and if $u$ has a suffix $w^{-1}$, we also erase it. Note that these two operations can be made in any order since
$w$ and $w^{-1}$ cannot overlap.

The \emph{mixed  return words to $w$} are the words $N(u)$ associated with complete return words $u$ to $\{w,w^{-1}\}$. We denote by $\MR_S(w)$ the set
of  mixed return words to $w$ in $S$.

Note that $\MR_S(w)$ is symmetric and that $w\MR_S(w)w^{-1}=\MR_S(w^{-1})$. Note
also that if $S$ is orientable, then
\begin{displaymath}
\MR_S(w)=\RR_S(w)\cup \RR_S(w)^{-1}=\RR_S(w)\cup \RR'_S(w^{-1}).
\end{displaymath}

The reason for this definition comes from the fact that, when $S$ is the natural coding of a linear involution, we are interested in the transformation induced on $I_w\cup\sigma_2(I_w)$,
where $I_w=I_{b_0}\cap T^{-1}(I_{b_1})\cap\ldots\cap T^{-m+1}(I_{b_{m-1}})$ for a word $w=b_0b_1\cdots b_{m-1}$ (see~\cite{BertheDelecroixDolcePerrinReutenauerRindone2014}).
The natural coding of a point in $I_w$ begins with $w$ while the natural coding of a point $z$ in $\sigma_2(I_w)$ `ends' with $w^{-1}$ in the sense that the natural coding of  $T^{-|w|}(z)$  begins with $w^{-1}$.

\begin{example}
\label{exampleMixedReturn}
Let $T$ be the linear involution of Example~\ref{exampleInvolution3}.
We have 
\begin{eqnarray*}
\MR_S(a)&=&\{b^{-1}cb,b^{-1}cbc^{-1}a,a^{-1}cb^{-1}c^{-1}a
,b^{-1}c^{-1}b,a^{-1}cbc^{-1}a,a^{-1}cb^{-1}c^{-1}b\}\\
\MR_S(b)&=&\{a^{-1}cb,a^{-1}c,c^{-1}a,b^{-1}cb,
b^{-1}c^{-1}a,b^{-1}c^{-1}b\},\\
\MR_S(c)&=&\{ba^{-1}c,b,b^{-1},c^{-1}ab^{-1}c,
c^{-1}ab^{-1},c^{-1}ba^{-1}c\}.
\end{eqnarray*}
\end{example}

Observe that any uniformly recurrent biinfinite word $x$ such that $F(x)=S$ can be uniquely written as a concatenation of  mixed return words (see Figure~\ref{figureFact}).
Note that successive occurrences of $w$ may overlap but that successive occurrences of $w$ and $w^{-1}$ cannot.

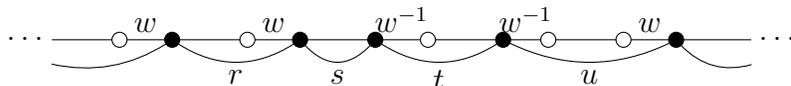
\begin{figure}[hbt]
\centering\gasset{Nadjust=wh,AHnb=0}
\begin{picture}(100,10)(0,-3)
\put(0,0){$\ldots$}
\node[Nframe=n](0)(5,0){}\node[Nframe=n](1b)(5,-3){}
\node(1)(15,0){}\node[fillcolor=black](2)(22,0){}
\node(3)(32,0){}\node[fillcolor=black](4)(39,0){}
\node[fillcolor=black](5)(49,0){}\node(6)(56,0){}\node[fillcolor=black](7)(66,0){}
\node(8)(72,0){}\node(9)(82,0){}\node[fillcolor=black](10)(89,0){}\node[Nframe=n](11)(100,0){}
\node[Nframe=n](11b)(100,-3){}

\drawedge(0,1){}\drawedge(1,2){$w$}\drawedge[curvedepth=-2](1b,2){}
\drawedge(2,3){}\drawedge(3,4){$w$}\drawedge[curvedepth=-3,ELside=r](2,4){$r$}
\drawedge(4,5){}\drawedge(5,6){$w^{-1}$}\drawedge[curvedepth=-3,ELside=r](5,7){$t$}
\drawedge[curvedepth=-3,ELside=r](4,5){$s$}\drawedge(6,7){}
\drawedge(7,8){$w^{-1}$}\drawedge(8,9){}\drawedge(9,10){$w$}
\drawedge[curvedepth=-3,ELside=r](7,10){$u$}\drawedge[curvedepth=-2](10,11b){}
\drawedge(10,11){}\put(100,0){$\ldots$}
\end{picture}
\caption{A uniformly recurrent infinite word factorized as an infinite product
$\cdots rstu\cdots$ of mixed   return words to $w$.}\label{figureFact}
\end{figure}

We have the following cardinality result. 

\begin{theorem}[Cardinality Theorem for mixed return words]
\label{theoremCardFirstMixed}
Let $S$ be a recurrent specular set on the alphabet $A$.
For any $w\in S$ such that $w,w^{-1}$ do not overlap, the set $\MR_S(w)$ has $\Card(A)$ elements.
\end{theorem}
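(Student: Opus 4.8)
The idea is to reduce the count of mixed return words $\MR_S(w)$ to the already-established count of complete return words to $\{w,w^{-1}\}$ (Theorem~\ref{theoremCardReturns}). First I would recall that, by Theorem~\ref{theoremCardReturns}, the set $\CR_S(\{w,w^{-1}\})$ has exactly $\Card(A)$ elements. The map $N:\CR_S(\{w,w^{-1}\})\to\MR_S(w)$ which erases the prefix $w$ or $w^{-1}$ and the suffix $w$ or $w^{-1}$ is, by definition of $\MR_S(w)$, surjective. So it suffices to prove that $N$ is injective, and then $\Card(\MR_S(w))=\Card(A)$ follows at once.

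\textbf{Injectivity of $N$.} This is where the hypothesis that $w$ and $w^{-1}$ do not overlap is essential. Let $u\in\CR_S(\{w,w^{-1}\})$. Then $u$ starts with $w$ or with $w^{-1}$ and ends with $w$ or with $w^{-1}$; in all four cases $u$ has length at least $|w|$ (in fact at least $2|w|$ since there are two distinct occurrences, one as proper prefix and one as proper suffix, and these cannot overlap). The word $N(u)$ together with the knowledge of which of $w,w^{-1}$ was the prefix and which was the suffix reconstructs $u$ uniquely. So I must show that $N(u)$ already determines these two choices. Suppose $u,u'\in\CR_S(\{w,w^{-1}\})$ with $N(u)=N(u')=v$. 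Then $u\in\{w,w^{-1}\}\,v\,\{w,w^{-1}\}$ and similarly for $u'$, so there are at most four candidates; I need to rule out that two genuinely distinct ones both lie in $S$ and are both complete return words. The point is that $u$ being a \emph{complete} return word to $\{w,w^{-1}\}$ means it has \emph{exactly} two factors in $\{w,w^{-1}\}$. If, say, $u=wvw$ were in $S$, could $u'=wvw^{-1}$ also be in $S$? I would argue using the non-overlap hypothesis: the occurrences of $w,w^{-1}$ as factors of $wvw$ are pinned down (no overlapping copies), so $v$ itself contains no factor from $\{w,w^{-1}\}$, and then whether the reconstruction ends in $w$ or $w^{-1}$ is forced — essentially because if both $wvw$ and $wvw^{-1}$ belonged to $S$ one could produce, via recurrence, a word of $S$ containing $wvw$ and $wvw^{-1}$ in overlapping position, yielding an overlap of $w$ with $w^{-1}$, or one uses Proposition~\ref{propositionPartition}(2) on the parity graph: the letter following the central block $v$ forces the parity type, contradicting that $w$ and $w^{-1}$ lie in $S_{i,j}$ and $S_{1-j,1-i}$ respectively.

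\textbf{Alternative / cleaner route.} Actually the smoothest argument is: because $w$ and $w^{-1}$ do not overlap, a complete return word $u$ to $\{w,w^{-1}\}$ has its prefix-occurrence and suffix-occurrence disjoint, so $u=p\,v\,q$ with $p,q\in\{w,w^{-1}\}$, $v\in S$, and $v$ has no factor in $\{w,w^{-1}\}$; moreover $|v|\ge 0$ is unambiguous from $u$. Conversely, given $v=N(u)$, I claim $p$ is recoverable: $p$ is the prefix of $u$ of length $|w|$, and $pv\cdot(\text{first letter after }v)$ — no, more simply: since $N$ is surjective and both sides have the same obvious surjection structure, I just need $|N^{-1}(v)|=1$. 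Distinctness of the four candidates $w v w$, $w v w^{-1}$, $w^{-1} v w$, $w^{-1} v w^{-1}$ as \emph{words} is automatic when $w\ne w^{-1}$ (given here) unless $v=\varepsilon$ and there are coincidences, but even then at most one can be a reduced word of $S$: e.g. $ww$ and $ww^{-1}$ cannot both be reduced, and in a laminary set all words are reduced. I expect the main obstacle to be precisely this bookkeeping — making rigorous that at most one of the four formal candidates is simultaneously in $S$ and a genuine complete return word — and the non-overlap hypothesis together with reducedness of words in a laminary set (plus, if needed, Proposition~\ref{propositionPartition}) is exactly what closes it. Once injectivity is in hand, $\Card(\MR_S(w))=\Card(\CR_S(\{w,w^{-1}\}))=\Card(A)$ by Theorem~\ref{theoremCardReturns}, completing the proof.
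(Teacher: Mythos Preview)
Your overall strategy is exactly the paper's: show that $N$ is a bijection from $\CR_S(\{w,w^{-1}\})$ onto $\MR_S(w)$ and then quote Theorem~\ref{theoremCardReturns}. The paper's proof is in fact the single sentence ``$\Card(\MR_S(w))=\Card(\CR_S(\{w,w^{-1}\}))$ when $w$ and $w^{-1}$ do not overlap'', so all the content lies in that bijection.

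The gap is that you have misread the definition of $N$. It does \emph{not} strip whichever of $w,w^{-1}$ occurs as prefix and as suffix; it strips the prefix \emph{only} when that prefix is $w$, and the suffix \emph{only} when that suffix is $w^{-1}$. A prefix $w^{-1}$ of $u$ therefore survives in $N(u)$, as does a suffix $w$. Your ``four candidates $wvw,\ wvw^{-1},\ w^{-1}vw,\ w^{-1}vw^{-1}$'' set-up is thus aimed at the wrong map, and your auxiliary claim that the prefix- and suffix-occurrences in $u$ are always disjoint (so that $|u|\ge 2|w|$) is also false: the paper explicitly allows them to overlap when both are $w$ or both are $w^{-1}$; the non-overlap hypothesis concerns only a copy of $w$ against a copy of $w^{-1}$, and in the definition of $N$ it serves merely to make the two erasures commute. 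With the correct $N$, injectivity is immediate and uses only that the words of $S$ are reduced: writing $w=a_1\cdots a_n$, the first letter of $N(u)$ is $a_n^{-1}$ if and only if $u$ begins with $w^{-1}$ (when $u$ begins with $w$ the next letter cannot be $a_n^{-1}$), and the last letter of $N(u)$ is $a_n$ if and only if $u$ ends with $w$; hence $u$ is recovered from $N(u)$. No recurrence argument and no appeal to Proposition~\ref{propositionPartition} are needed.
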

\begin{proof}
This is a direct consequence of Theorem~\ref{theoremCardReturns} since
$\Card(\MR_S(w))=\Card(\CR_S(\{w,w^{-1}\})$ when $w$ and $w^{-1}$ do not overlap.
\end{proof}

Note that the bijection between $\CR_S(w,w^{-1})$ and $\MR_S(w)$ is illustrated in Figure~\ref{figureFact}.

\begin{example}
\label{exampleCardFirstMixed}
Let $S$ be the specular set of Example~\ref{exampleFiboDouble}. The
value of $\CR_S(b,d)$ is given in Example~\ref{exampleCompleteFiboDoubleMixed}.
Since $b,d$ do not overlap, 
\begin{displaymath}
\MR_S(b)=\{cab,c,dac,dab\}
\end{displaymath}
has four elements in agreement with Theorem~\ref{theoremCardFirstMixed}.
\end{example}
As a corollary, we obtain the following result.
\begin{corollary}
Let $S$ be the natural coding of a linear involution without
connections on the alphabet $A$. For any $w\in S$, the set
$\MR_S(w)$ has $\Card(A)$ elements.
\end{corollary}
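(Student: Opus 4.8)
The plan is to obtain the statement as an immediate application of Theorem~\ref{theoremCardFirstMixed}, whose hypotheses are that the ambient set be a recurrent specular set and that $w$ and $w^{-1}$ not overlap. So the whole task is to verify these hypotheses for $S=\LL(T)$ and an arbitrary $w\in S$.

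First I would note that $S=\LL(T)$ is a specular set by Theorem~\ref{theoremInvolutionSpecular}. For recurrence I would appeal to Proposition~\ref{propBL}: a linear involution without connections is minimal as soon as it is nonorientable, and this is the case that concerns us here: for an orientable linear involution the intervals $I_a$ and $I_{a^{-1}}$ always lie in different components, so the natural coding splits over two disjoint sub-alphabets of $A$ and fails to be recurrent (compare Example~\ref{ex:fibodouble2}). Once $T$ is minimal, $S$ is uniformly recurrent by the usual cylinder argument: each factor $u$ of an infinite coding labels a nonempty open set $I_u\subseteq\hat I$, minimality forces every forward orbit to return to $I_u$, and a standard compactness argument gives uniformly bounded return times, so $u$ occurs with bounded gaps in every infinite coding of $T$. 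In particular $S$ is recurrent.

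It remains to check that $w$ and $w^{-1}$ never overlap. Here the involution $\theta$ attached to $T$ has no fixed point, so $G_\theta$ is a free group and $w$ is a reduced word $w=a_1\cdots a_n$ with $w^{-1}=a_n^{-1}\cdots a_1^{-1}$. An overlap in which a nonempty suffix $a_k\cdots a_n$ of $w$ is a prefix of $w^{-1}$ yields $a_{k+i}=a_{n-i}^{-1}$ for all $i=0,\ldots,n-k$; the index set $\{0,\ldots,n-k\}$ either contains an $i$ with $k+i=n-i$, forcing the impossible identity $a_{k+i}=a_{k+i}^{-1}$, or contains consecutive indices $i,i+1$ with $n-i=k+i+1$, forcing $a_{k+i}=a_{k+i+1}^{-1}$, i.e.\ a factor $bb^{-1}$ inside the reduced word $w$. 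Both are contradictions, and the symmetric case (a suffix of $w^{-1}$ being a prefix of $w$) is this statement applied to $w^{-1}$. Thus $w$ and $w^{-1}$ do not overlap, and Theorem~\ref{theoremCardFirstMixed} gives $\Card(\MR_S(w))=\Card(A)$.

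The verifications are routine; the one point that needs care is the recurrence of $S$, that is, recognizing that the statement concerns nonorientable linear involutions (for which Proposition~\ref{propBL} supplies minimality of $T$, hence uniform recurrence of the natural coding) and that the orientable case would fail recurrence.
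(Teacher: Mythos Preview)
Your approach is the paper's: the corollary is meant to follow at once from Theorem~\ref{theoremCardFirstMixed}, once one knows that for the natural coding of a linear involution the words $w$ and $w^{-1}$ never overlap—this is exactly what the paper asserts (without proof) in the paragraph preceding the definition of $\MR_S(w)$, and you supply the missing free-group argument. Your verification of recurrence via Proposition~\ref{propBL} and minimality is likewise an elaboration the paper leaves implicit; your observation that the orientable case must be set aside (since $S$ then fails to be recurrent) is in fact more careful than the paper's own treatment, which gives no proof at all.
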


\subsection{First Return Theorem}
By~\cite[Theorem 4.5]{BertheDeFeliceDolceLeroyPerrinReutenauerRindone2013a}, the set of right return words to a given word in a recurrent tree set of characteristic $1$ containing the alphabet $A$ is a basis of the free group on $A$.
We will see a counterpart of this result for recurrent specular sets.

Let $S$ be a specular set.
The \emph{even subgroup} is the group formed by the even words.
It is a subgroup of index $2$ of $G_\theta$ with symmetric rank $2(\Card(A)-1)$ by \eqref{SchreierSpecular} generated by the even code.
Since no even word is its own inverse (by Proposition~\ref{propositionPartition}), it is a free group.
Thus its rank is $\Card(A)-1$.

\begin{theorem}[First Return Theorem]
\label{theoremReturns}
Let $S$ be a recurrent specular set.
For any $w\in S$, the set of right return words to $w$ is a basis of the even subgroup.
\end{theorem}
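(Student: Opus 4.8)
The plan is to show two things: that the set $R = \RR_S(w)$ of right return words to $w$ generates the even subgroup $H$, and that it is a monoidal basis (equivalently, here, a free basis, since $H$ is free). The cardinality is already under control: by Theorem~\ref{theoremCardRightReturns} we have $\Card(R) = \Card(A) - 1$, and we noted above that the even subgroup $H$ is free of rank $\Card(A) - 1$. So the whole theorem reduces to proving that $R$ generates $H$: a generating set of a free group whose cardinality equals the rank is automatically a basis.

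First I would recall from Proposition~\ref{propositionReturnsEven} that every word of $R$ is even, hence $R \subseteq H$, so $\langle R \rangle \subseteq H$. For the reverse inclusion, the natural route is to pass through the even code $X$ and the decoding Theorem~\ref{theoremDecodingEven}. Write $X = X_0 \cup X_1$ with $X_i = X \cap S_{i,i}$, fix a coding morphism $f$ for $X$, and set $T_0 = f^{-1}(S_{0,0})$, $T_1 = f^{-1}(S_{1,1})$; by Theorem~\ref{theoremDecodingEven} these are recurrent tree sets of characteristic $1$. As explained in the discussion following Example~\ref{exampleFiboDoubleFirst}, for $x \in S_{0,0}$ the right return words of $S$ to $x$ are, via decoding by $X_0$, exactly the images under $f$ (restricted to the sub-alphabet $B_0 = f^{-1}(X_0)$) of the right return words of the tree set $T_0$ to $f^{-1}(x)$; and symmetrically for $S_{1,1}$. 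By~\cite[Theorem 4.5]{BertheDeFeliceDolceLeroyPerrinReutenauerRindone2013a}, the right return words of $T_0$ to any word form a basis of the free group $F_{B_0}$. Hence $\langle \RR_S(x) \rangle$, for $x$ even, contains $f(F_{B_0}) = \langle X_0 \rangle$ (and then by recurrence/uniform recurrence one sees it must contain $\langle X_1\rangle$ too, since conjugating by a connecting word moves between the two components); together $\langle X_0 \rangle$ and $\langle X_1 \rangle$ generate $\langle X \rangle = H$. This settles the case $w$ even.

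For $w$ odd one cannot apply the decoding directly to $w$ itself, since $w \notin X^*$. Here I would use a conjugation trick: since $S$ is recurrent there is $u$ with $wuw \in S$, and one can arrange (choosing $u$ of suitable parity) that $wu$ is even, so $wu \in X^*$, say $wu = f(v)$ with $v$ even in $T = T_0 \cup T_1$. One then relates $\RR_S(w)$ to the return structure of the even word $w u w$ or of $v$: a right return word $r$ to $w$ satisfies $w r = v' w$ for some $v' \in S$, and $u r u^{-1}$ (or rather the conjugate arrangement $w r w^{-1}$, landing in the even subgroup after the change of base point) corresponds to return words of an even word, to which the previous paragraph applies. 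More concretely, $\RR_S(w)$ and $\RR_S(wu)$ generate the same subgroup up to the conjugation $x \mapsto u x u^{-1}$, but since we are working inside the even subgroup and $u$ itself need not be even, a little care is needed; the clean statement is that $\langle \RR_S(w)\rangle$ and $\langle \RR_S(wu)\rangle$ are conjugate in $G_\theta$, both of rank $\Card(A)-1$, and one checks the conjugator normalizes $H$ (it does, since $H$ has index $2$, hence is normal). So $\langle \RR_S(w)\rangle$ is conjugate inside $H$ to the full-rank subgroup $\langle \RR_S(wu)\rangle = H$, forcing $\langle \RR_S(w)\rangle = H$.

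The main obstacle I expect is the bookkeeping in the odd case: making precise the correspondence between right return words to the odd word $w$ and return words to a nearby even word, and verifying that the relevant conjugation is by an element of $H$ (or, failing that, arguing directly that a conjugate of $H$ of index $2$ containing a full-rank subset must be $H$ itself, using normality of $H$). An alternative that sidesteps some of this: prove directly, by a Nielsen/rewriting argument on the biinfinite word (as in Figure~\ref{figureFact}), that any even word of $S$ is a product of elements of $\RR_S(w) \cup \RR_S(w)^{-1}$ up to a bounded correction, using uniform recurrence (Theorem~\ref{theo:recurrentur}) to control the correction terms; this is the approach taken in~\cite{BertheDeFeliceDolceLeroyPerrinReutenauerRindone2013a} for the characteristic-$1$ case and should transport, but the decoding argument above is cleaner since it reuses that result as a black box. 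Either way, once generation is established, freeness and the basis property are immediate from the rank count.
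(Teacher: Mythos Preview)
Your overall strategy matches the paper's: reduce to the characteristic-$1$ Return Theorem via the even-code decoding, then finish by the rank count. Two places where you overcomplicate.

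\textbf{Even case.} You worry about getting $\langle X_1\rangle$ into $\langle \RR_S(w)\rangle$ once you have $\langle X_0\rangle$. This step is unnecessary: by Proposition~\ref{propositionPartition}(4) one has $S_{0,0}^{-1}=S_{1,1}$, and since the even code $X$ is symmetric it follows that $X_1=X_0^{-1}$; hence $\langle X_0\rangle=\langle X\rangle=H$ already. Your proposed patch (``conjugating by a connecting word moves between the two components'') does not make sense as stated, since no word of $S$ connects $S_{0,0}$ to $S_{1,1}$; fortunately the patch is not needed.

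\textbf{Odd case.} The paper sidesteps all of your conjugation bookkeeping by extending $w$ on the \emph{left} rather than the right: choose $u$ odd with $uw\in S$ (so $uw$ is even). Then one has the direct monoid containment $\RR_S(uw)\subset\RR_S(w)^*$, because every complete return word to $uw$ has $w$ both as a prefix and as a suffix and therefore factors over $\RR_S(w)$. This yields $H=\langle\RR_S(uw)\rangle\subset\langle\RR_S(w)\rangle\subset H$ with no conjugation at all. Your right-extension approach can also be made to work (the identity $u\,\RR_S(wu)\,u^{-1}\subset\RR_S(w)^*$ holds, and normality of $H$ finishes it, as you note), but the left extension makes the ``main obstacle'' you anticipated simply vanish.
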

\begin{proof}
We first consider the case where $w$ is even.
Let $f:B^*\rightarrow A^*$ be a coding morphism for the even code $X\subset S$.
Consider the partition $(S_{i,j})$, as in Proposition~\ref{propositionPartition}, and set $X_0=X\cap S_{0,0}$, $X_1=X\cap S_{1,1}$.
By Theorem~\ref{theoremDecodingEven}, the set $f^{-1}(S)$ is the union of the two recurrent tree sets of characteristic $1$, $T_0=f^{-1}(S_{0,0})$ and $T_1=f^{-1}(S_{1,1})$ on the alphabets $B_0=f^{-1}(X_0)$ and $B_1=f^{-1}(X_1)$ respectively.
We may assume that $w\in S_{0,0}$.
Then $\RR_S(w)$ is the image by $f$ of the set $R=\RR_{T_0}(f^{-1}(w))$. By \cite[Theorem 4.5]{BertheDeFeliceDolceLeroyPerrinReutenauerRindone2013a}, the set $R$ is a basis of the free group on $B_0$.
Thus $\RR_S(w)$ is a basis of the image of $F_{B_0}$ by $f$, which is the even subgroup.

Suppose now that $w$ is odd. Since the even code is an $S$-maximal bifix code, there exists  an odd word $u$ such that $uw\in S$.
Then $\RR_S(uw)\subset \RR_S(w)^*$.
By what precedes, the set $\RR_S(uw)$ generates the even subgroup and thus the group generated by $\RR_S(w)$ contains the even subgroup.
Since all words in $\RR_S(w)$ are even, the group generated by $\RR_S(w)$ is contained in the even subgroup, whence the equality.
We conclude by Theorem~\ref{theoremCardRightReturns}.
\end{proof}

\begin{example}
Let $S$ be the specular set of Example~\ref{exampleFiboDouble}.
The sets of right return words to $a,b,c,d$ are given in Example~\ref{exampleFiboDoubleFirst}.
Each one is a basis of the even subgroup.
\end{example}

Concerning mixed return words, we have the following statement.

\begin{theorem}
\label{theoremFirstMixed}
Let $S$ be a recurrent specular set.
For any $w\in S$ such that $w,w^{-1}$ do not overlap, the set $\MR_S(w)$ is a monoidal basis of the group $G_\theta$.
\end{theorem}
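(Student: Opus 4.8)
The plan is to combine the cardinality result for mixed return words (Theorem~\ref{theoremCardFirstMixed}), the fact that $\MR_S(w)$ is symmetric, and the general criterion of Proposition~\ref{propSymBasis}, which says that in a specular group $G_\theta$ of type $(i,j)$, any symmetric generating set with $2i+j$ elements is automatically a monoidal basis. Since $\Card(A)=2i+j$ is the symmetric rank, and Theorem~\ref{theoremCardFirstMixed} gives $\Card(\MR_S(w))=\Card(A)$, the only thing left to establish is that $\MR_S(w)$ \emph{generates} $G_\theta$. So the whole proof reduces to a generation statement, and the main obstacle is exactly that: showing $\langle \MR_S(w)\rangle = G_\theta$.

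For generation I would argue as follows. First reduce to the case where $w$ is a single letter, or rather reduce the general $w$ to something already understood: recall (as in the proof of Theorem~\ref{theoremReturns}) that the complete return words to $\{w,w^{-1}\}$, hence the mixed return words $\MR_S(w)$, are closely tied to the return words $\RR_S(w)$ and $\RR'_S(w^{-1})=\RR_S(w)^{-1}$; more precisely each complete return word to $\{w,w^{-1}\}$ has one of the four forms $w\,r$, $w\,s\,w^{-1}$, $v\,w$, or $v\,s\,w^{-1}$ with the $N$-operation stripping off the boundary copies of $w^{\pm1}$. The group $\langle \RR_S(w)\rangle$ is the even subgroup $H$ by the First Return Theorem (Theorem~\ref{theoremReturns}), which has index $2$ in $G_\theta$. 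So $\langle \MR_S(w)\rangle$ contains $H$ (it contains all $\RR_S(w)$-type returns, possibly conjugated by $w$, but $w^{-1}\RR_S(w^{-1})w=\RR_S(w)$ anyway, keeping us inside computations over $H$ and its $w$-conjugate). Then I only need one mixed return word that is \emph{odd}, i.e. lies outside $H$: such a word together with $H$ generates all of $G_\theta$ since $[G_\theta:H]=2$.

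To produce an odd element of $\MR_S(w)$: consider a complete return word $u$ to $\{w,w^{-1}\}$ of ``mixed type'', one that starts with $w$ and ends with $w^{-1}$ (such a $u$ exists because $S$ is recurrent and symmetric — given any occurrence of $w$ one can recur to an occurrence of $w^{-1}$ with no intermediate occurrence of either, using Proposition~\ref{propositionOddReturns}-style arguments on the parity graph $\G$). Writing $u = w\,m\,w^{-1}$ with $m=N(u)\in\MR_S(w)$, Assertion~(2) of Proposition~\ref{propositionPartition} applied to the factorization $w\cdot m\cdot w^{-1}$ forces $m\in S_{j,1-j}$ where $w\in S_{i,j}$; by Assertion~(3) this means $m$ is odd. (This is essentially the content of Proposition~\ref{propositionOddReturns} with $x=w$, $y=m$.) Hence $\MR_S(w)$ contains an odd word, so $\langle\MR_S(w)\rangle$ is strictly larger than $H$, forcing $\langle\MR_S(w)\rangle=G_\theta$. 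Finally, $\MR_S(w)$ is symmetric (noted in the text, via $w\,\MR_S(w)\,w^{-1}=\MR_S(w^{-1})$ and the symmetry of $S$) and has $\Card(A)$ elements, so Proposition~\ref{propSymBasis} applies and concludes that $\MR_S(w)$ is a monoidal basis of $G_\theta$.

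The delicate point to get fully right is the reduction showing $\langle\MR_S(w)\rangle\supseteq H$: one must check that stripping the boundary $w$'s does not lose group-theoretic information, i.e. that the subgroup generated by the $N(u)$ equals (up to the $w$-conjugation bookkeeping) the subgroup generated by the complete return words themselves, and that this subgroup is exactly the even subgroup. The cleanest way is probably to note $w^{-1}\langle\CR_S(\{w,w^{-1}\})\rangle w \supseteq \langle\RR_S(w)\rangle\cup\langle\RR'_S(w^{-1})\rangle$ and then invoke Theorem~\ref{theoremReturns}; I would want to double-check the parity/coset bookkeeping carefully there, but no hard new idea is needed.
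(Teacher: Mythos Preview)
Your overall strategy is exactly the paper's: use Theorem~\ref{theoremCardFirstMixed} for the cardinality, produce an odd element of $\MR_S(w)$ via Proposition~\ref{propositionPartition} (this is precisely the paper's argument, and your invocation of Proposition~\ref{propositionOddReturns} is right on target), observe that the even subgroup has index $2$ and is therefore maximal, and finish with Proposition~\ref{propSymBasis}.

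The one place where you are working harder than necessary, and where your sketch is not quite correct as stated, is the inclusion $H\subset\langle\MR_S(w)\rangle$. You describe $\MR_S(w)$ as ``containing all $\RR_S(w)$-type returns, possibly conjugated by $w$'', and then propose to pass through $w^{-1}\langle\CR_S(\{w,w^{-1}\})\rangle w$. Neither of these is the right bookkeeping: a general $r\in\RR_S(w)$ need \emph{not} lie in $\MR_S(w)$ (because $wr$ may contain internal occurrences of $w^{-1}$), and the complete return words to $\{w,w^{-1}\}$ overlap on their boundary copies of $w^{\pm1}$, so $wr$ is not a group product of them and the conjugation route does not assemble cleanly.

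The paper's fix is a one-line observation that you have all the ingredients for but do not state: since $w$ and $w^{-1}$ do not overlap, one has
\[
\RR_S(w)\subset \MR_S(w)^*.
\]
Indeed, for $r\in\RR_S(w)$ the word $wr$ begins and ends with $w$ and its only internal markers from $\{w,w^{-1}\}$ are copies of $w^{-1}$; applying the $N$-factorization (this is exactly the picture in Figure~\ref{figureFact}) writes $r$ as a concatenation of mixed return words. Once you have this inclusion, Theorem~\ref{theoremReturns} gives $H=\langle\RR_S(w)\rangle\subset\langle\MR_S(w)\rangle$ immediately, with no conjugation or coset bookkeeping needed.
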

\begin{proof}
Since $w$ and $w^{-1}$ do not overlap, we have $\RR_S(w)\subset \MR_S(w)^*$.
Thus, by Theorem~\ref{theoremReturns}, the group $\langle\MR_S(w)\rangle$
contains the even subgroup. But $\MR_S(w)$ always contains odd
words. Indeed, assume that $w\in S_{i,j}$. Then $w^{-1}\in S_{1-j,1-i}$
and thus any $u\in\MR_S(w)$ such that $wuw^{-1}\in S$ is odd.
Since the even group is a maximal subgroup of $G_\theta$, 
this implies that $\MR_S(w)$ generates
the group $G_\theta$. Finally since $\MR_S(w)$ has $\Card(A)$ elements
by Theorem~\ref{theoremCardFirstMixed}, we obtain the conclusion
by Proposition~\ref{propSymBasis}.
\end{proof}
\begin{example}
Let $S$ be the specular set of Example~\ref{exampleFiboDouble}. We have
seen in Example~\ref{exampleCardFirstMixed} that
\begin{displaymath}
\MR_S(b)=\{c,cab,dab,dac\}.
\end{displaymath}
This set is a monoidal basis of $G_\theta$ in agreement with Theorem~\ref{theoremFirstMixed}.
\end{example}
Since, in the free group, a reduced word $w$ and its inverse do not overlap, we have the following corollary of Theorem~\ref{theoremFirstMixed}
in the case where the involution $\theta$ has no fixed points.
A geometric proof and interpretation  is
 given in~\cite{BertheDelecroixDolcePerrinReutenauerRindone2014}.
\begin{corollary}
Let $S$ be the natural coding of a linear involution without
connections  on the alphabet $A=B\cup B^{-1}$.
For any $w\in S$, the set
$\MR_S(w)$
is a monoidal basis of $F_B$.
\end{corollary}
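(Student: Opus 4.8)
The plan is to obtain the corollary as an immediate consequence of Theorem~\ref{theoremFirstMixed}, so the only thing to do is check that its hypotheses hold for the natural coding of a linear involution $T$ without connections over an alphabet of the form $A=B\cup B^{-1}$. First I would note that $S=\LL(T)$ is a specular set by Theorem~\ref{theoremInvolutionSpecular}, and that it is recurrent: since $T$ has no connections, Proposition~\ref{propBL} gives that $T$ is minimal (in the nonorientable case) or minimal on each component of $\hat I$ (in the orientable case), and in either case every factor of an infinite natural coding of $T$ recurs with bounded gaps, so $S$ is uniformly recurrent, hence recurrent. Next, because $A=B\cup B^{-1}$ the involution $\theta$ has no fixed point, so $G_\theta$ is exactly the free group $F_B$; in particular $G_\theta$ is torsion-free. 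By definition every word of $S$ is reduced.

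The one genuinely substantive step is the combinatorial observation, used in the statement, that in a free group a nonempty reduced word $w$ and its inverse $w^{-1}$ do not overlap. I would argue as follows. Suppose a nonempty suffix $s$ of $w$ is also a prefix of $w^{-1}$, and write $w=ys$; then the reduced word representing $w^{-1}$ is $s^{-1}y^{-1}$. Since $|s^{-1}|=|s|\le|w|=|w^{-1}|$, the prefix of $w^{-1}$ of length $|s|$ is precisely $s^{-1}$, whence $s=s^{-1}$ as words, i.e.\ $s^2=\varepsilon$ in $G_\theta=F_B$; torsion-freeness then forces $s=\varepsilon$, contradicting that $s$ is nonempty. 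The remaining case, in which a nonempty suffix of $w^{-1}$ is a prefix of $w$, follows by applying the same argument with $w^{-1}$ in place of $w$, using $(w^{-1})^{-1}=w$. This is exactly the point at which the hypothesis $A=B\cup B^{-1}$ is needed: were $\theta$ to have a fixed point $a$, the one-letter word $w=a$ would satisfy $w=w^{-1}$ and would overlap itself.

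Having checked that $S$ is a recurrent specular set and that $w,w^{-1}$ do not overlap for every $w\in S$, I would simply invoke Theorem~\ref{theoremFirstMixed} to conclude that $\MR_S(w)$ is a monoidal basis of $G_\theta=F_B$, which is the assertion. I do not expect any real obstacle here: the whole content of the corollary is the elementary non-overlapping lemma above, whose only subtlety is that it relies on the absence of elements of order two, that is, on $\theta$ having no fixed point. (A geometric proof and interpretation of this statement is given in~\cite{BertheDelecroixDolcePerrinReutenauerRindone2014}.)
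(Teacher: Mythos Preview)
Your proposal is correct and follows exactly the paper's route: the paper also derives the corollary directly from Theorem~\ref{theoremFirstMixed}, remarking only that in a free group a reduced word and its inverse do not overlap. Your write-up is in fact more detailed than the paper's one-line justification, since you spell out the recurrence check via Proposition~\ref{propBL} and give an explicit torsion-freeness argument for the non-overlapping claim.
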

\begin{example}
Let $T$ be the linear involution of Example~\ref{exampleInvolution3}.
We have seen in Example~\ref{exampleMixedReturn} that
$\MR_S(b)=\{a^{-1}cb,a^{-1}c,c^{-1}a,b^{-1}cb,b^{-1}c^{-1}a,b^{-1}c^{-1}b\}$.
It is a monoidal basis of the free group on $\{a,b,c\}$.
\end{example}

\section{Freeness and Saturation Theorems}
\label{sec:groups}
In this section we consider two notions concerning sets of generators of a subgroup $H$ in a specular group, namely free subsets and the set of prime words with respect to $H$.
We prove that a set closed by taking inverses is acyclic if and only if any symmetric bifix code is free (Theorem~\ref{theoremFreeness}).
Moreover, we prove that in such a set, for any finite symmetric bifix code $X$, the free monoid $X^*$ and the free subgroup $\langle X \rangle$ have the same intersection with $S$ (Theorem~\ref{theoremSaturation}).
To prove the last result we use the notion of coset automaton.

\subsection{Freeness Theorem}
Let $\theta$ be an involution on $A$ and let $G_\theta$ be the corresponding specular group.
A symmetric set $X$ is free if it is a monoidal basis of a subgroup $H$ of the  group $G_\theta$.
Thus a symmetric set $X\subset G_\theta$ is \emph{free} if for $x_1,x_2,\ldots,x_n\in X$, the product $x_1x_2\cdots x_n$ cannot reduce to $1$ unless $x_i = x_{i+1}^{-1}$ for some $i$ with $1 \leq i<n$.

The following is essentially Theorem 5.1 in~\cite{BertheDeFeliceDolceLeroyPerrinReutenauerRindone2013a}.

\begin{theorem}[Freeness Theorem]
\label{theoremFreeness}
A laminary set $S$ is acyclic if and only if any symmetric bifix code $X\subset S$ is free.
\end{theorem}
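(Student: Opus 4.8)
## Proof Proposal for the Freeness Theorem

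The plan is to relate the combinatorial acyclicity condition on $S$ to the algebraic freeness of symmetric bifix codes via the Schreier-basis machinery developed in Section~\ref{subsec:basis}, following the pattern of Theorem 5.1 in~\cite{BertheDeFeliceDolceLeroyPerrinReutenauerRindone2013a}. For the easy direction, I would show that if some symmetric bifix code $X\subset S$ is not free, then $S$ has a cycle in some extension graph. A nontrivial reduction $x_1x_2\cdots x_n=1$ with no cancellation of the form $x_i=x_{i+1}^{-1}$ produces, after choosing such a relation of minimal total length, a closed path; the key point is that the ``overlap structure'' of the $x_i$ — which words are prefixes or suffixes of which — is constrained because $X$ is a bifix code, so that the cancellations forced by the relation must occur ``internally'' and exhibit a biextension $awb\in S$ lying on a cycle of $\E_S(w)$ for a suitable internal factor $w$. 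This is exactly the argument that, in the tree-set case, shows acyclicity is necessary for freeness of bifix codes, and it transfers verbatim to laminary sets since the reasoning never uses characteristic $1$ — only the bifix property and the fact that words of $S$ are reduced in $G_\theta$.

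For the converse — $S$ acyclic implies every symmetric bifix code $X\subset S$ is free — I would argue as follows. Let $X\subset S$ be a symmetric bifix code and let $H=\langle X\rangle$ be the subgroup it generates in $G_\theta$. I want to show $X$ is a monoidal basis of $H$. Take a Schreier transversal $Q$ for $H$ consisting of reduced words (prefix-closed), and let $Y$ be the associated Schreier basis from~\eqref{setSchreier}; by the results of Section~\ref{subsec:basis}, $Y$ is a symmetric monoidal basis of $H$. The strategy is to show that $X$ and $Y$ coincide, or more precisely that each element of $X$ is a Schreier generator with respect to a suitably chosen transversal. The natural choice is to take $Q$ to be (a prefix-closed completion of) the set of proper prefixes of words of $X$; since $X$ is a prefix code, distinct words of $X$ have distinct such prefix-decompositions, and since $X$ is also a suffix code the ``tail'' part is controlled symmetrically. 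The acyclicity of $\E_S(w)$ for every $w\in S$ is what guarantees that no nontrivial coincidences occur among the cosets $H\cdot p$ for $p\in Q$ beyond those dictated by membership $x\in X$; a cycle in some $\E_S(w)$ is precisely what would be needed to force an unexpected relation $H p_1 = H p_2$, i.e. a dependency among the generators.

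The main obstacle, and the step I would spend the most care on, is the bookkeeping that converts ``a reduction $x_1\cdots x_n = 1$ in $G_\theta$'' into ``a cycle in an extension graph $\E_S(w)$.'' In the specular setting one must handle the elements of order two (fixed points of $\theta$): a product can now reduce because some $x_i$ is its own inverse, which does not happen in free groups, so the definition of ``free'' was stated to allow $x_i = x_{i+1}^{-1}$ as the only permitted cancellation and one must check this is the right notion and that the minimal-length-relation argument still isolates a genuine graph cycle rather than a spurious order-two collapse. Concretely, one tracks, for each adjacent pair $x_i x_{i+1}$, the maximal cancelling segment; biextendability and the bifix property of $X$ force the surviving overlaps to nest into a single reduced word $w\in S$ together with two distinct paths in $\E_S(w)$ joining the same pair of vertices, contradicting acyclicity. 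Once this correspondence is set up cleanly, both implications follow, and I would note explicitly that the argument is the laminary analogue of~\cite[Theorem 5.1]{BertheDeFeliceDolceLeroyPerrinReutenauerRindone2013a}, with the only new ingredient being the careful treatment of order-two letters.
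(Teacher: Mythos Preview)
Your proposal has a structural error: the two directions you describe are the \emph{same} implication. ``If some symmetric bifix code $X\subset S$ is not free, then $S$ has a cycle'' is exactly the contrapositive of ``$S$ acyclic implies every symmetric bifix code $X\subset S$ is free.'' You have sketched this one implication twice---once by a minimal-relation argument, once by Schreier-basis machinery---and never addressed the genuine converse: if $S$ is \emph{not} acyclic, exhibit a symmetric bifix code in $S$ that is not free. That direction is in fact the easier one: given a cycle $(a_1,b_1,a_2,b_2,\ldots,a_p,b_p,a_1)$ in $\E_S(w)$, the set $X=(AwA\cup Aw^{-1}A)\cap S$ is a symmetric bifix code, and the product
\[
(a_1wb_1)(a_2wb_1)^{-1}(a_2wb_2)(a_3wb_2)^{-1}\cdots(a_pwb_p)(a_1wb_p)^{-1}
\]
reduces to $1$ with no consecutive pair of inverses, so $X$ is not free. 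This is exactly the computation used later in the proof of Theorem~\ref{propositionConverseFIB}.

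For the direction you do treat, the paper does not use Schreier transversals at all; it invokes the \emph{incidence graph} $\G_X$ and Proposition~\ref{proposition6.6}. Your Schreier-basis plan has a real gap: taking $Q$ to be the proper prefixes of $X$ gives no reason for $Q$ to be a transversal for $H=\langle X\rangle$ (its cardinality is unrelated to $[G_\theta:H]$, which may even be infinite), and there is no mechanism that forces $X$ to coincide with a Schreier basis for \emph{some} transversal. Your first sketch---extract a cycle in some $\E_S(w)$ from a minimal nontrivial relation---is closer in spirit to the correct argument, but the bookkeeping you allude to is precisely what the incidence-graph lemmas package: acyclicity of $S$ forces $\G_X$ to be acyclic, which in turn controls the possible cancellations in a product $x_1\cdots x_n$ and rules out reduction to $1$ unless some $x_ix_{i+1}=1$. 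You should rebuild the argument around $\G_X$ and Proposition~\ref{proposition6.6} rather than around Schreier bases.
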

The proof is identical with that of Theorem 5.1 in \cite{BertheDeFeliceDolceLeroyPerrinReutenauerRindone2013a}, using the \emph{incidence graph} of a set $X$, which is the undirected graph $\G_X$ defined as follows.
Let $P$ be the set of proper prefixes of $X$ and let $Q$ be the set of its proper suffixes.
Set $P'=P\setminus\{\varepsilon\}$
and $Q'=Q\setminus \{\varepsilon\}$.
The set of vertices of $\G_X$ is the disjoint union of $P'$ and $Q'$.
The edges of $\G_X$ are the pairs $(p,q)$ for $p \in P'$ and $q \in Q'$ such that $pq \in X$.
As for the extension graph, we sometimes denote $1\otimes P',Q'\otimes 1$ the copies of $P',Q'$ used to define the set of vertices of $\G_X$.

\begin{example}
Let $S$ be a laminary set and let $X=S\cap A^2$ be the bifix code formed of the words of $S$ of length $2$.
The incidence graph of $X$ is identical with the extension graph $\E(\varepsilon)$.
\end{example}

The following statement is proved in \cite[Proposition 5.6]{BertheDeFeliceDolceLeroyPerrinReutenauerRindone2013a}.
Recall that a path in an undirected graph is reduced if it does not use twice consecutively the same edge.

\begin{proposition}
\label{proposition6.6}
Let $S$ be an acyclic set.
For any bifix code  $X\subset S$, the following assertions hold.
\begin{enumerate}
\item[\rm (i)] The incidence graph $\G_X$ is acyclic.
\item[\rm (ii)] The intersection of $P' = P\setminus\{\varepsilon\}$ (resp. $Q'=Q\setminus\{\varepsilon\}$) with each connected component of $\G_X$ is a suffix (resp. prefix) code.
\item[\rm (iii)] For every reduced path $(v_1,u_1,\ldots,u_n,v_{n+1})$ in $\G_X$ with $u_1,\ldots,u_n\in P'$ and $v_1,\ldots,v_{n+1}\in Q'$, the longest common prefix of $v_1,v_{n+1}$ is a proper prefix of all $v_1,\ldots,v_n,v_{n+1}$.
\item[\rm(iv)] Symmetrically, for every reduced path $(u_1,v_1,\ldots,v_n,u_{n+1})$ in $\G_X$ with $u_1,\ldots,$ $u_{n+1}\in P'$ and $v_1,\ldots,v_n\in Q'$, the longest common suffix of $u_1,u_{n+1}$ is a proper suffix of $u_1,u_2,\ldots,u_{n+1}$.
\end{enumerate}
\end{proposition}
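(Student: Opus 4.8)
The plan is to reproduce the argument of~\cite{BertheDeFeliceDolceLeroyPerrinReutenauerRindone2013a}, whose engine is a translation of the combinatorics of the incidence graph $\G_X$ into the extension graphs $\E_S(w)$ of $S$, where acyclicity of $S$ is available. The elementary observation is that an edge $\{1\otimes p,\,q\otimes 1\}$ of $\G_X$ records a word $pq\in X\subseteq S$, and that given several edges of $\G_X$ sharing a vertex, peeling off the common part of their labels realizes them as edges of one extension graph $\E_S(w)$. All four assertions will be read off from this principle, proved in the order (i), (ii), (iii)--(iv), each using the previous ones.

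For (i), I would argue by contradiction. The base case is immediate: if $\max_{x\in X}|x|\le 2$ then $P',Q'\subseteq A$ and $\G_X$ is a subgraph of $\E_S(\varepsilon)$, which is acyclic since $S$ is. In general, take a shortest cycle $(v_1,u_1,v_2,\dots,v_n,u_n,v_1)$ of $\G_X$, so that $u_iv_i,u_iv_{i+1}\in X$ with indices read mod $n$. Let $\alpha$ be the longest common prefix of $v_1,\dots,v_n$; since consecutive $v_i,v_{i+1}$ are incomparable for the prefix order ($X$ being a prefix code), $\alpha$ is a proper prefix of every $v_i$, so $v_i=\alpha a_i\cdots$ with $a_i\in A$. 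Symmetrically, with $\mu$ the longest common suffix of $u_1,\dots,u_n$, each $u_i$ has the form $\cdots x_i\mu$ with $x_i\in A$ (the degenerate possibility that some $u_i$ equals $\mu$, hence is a proper suffix of another $u_j$, is ruled out by part (ii), which one proves jointly for vertices of a single component). Then $\mu\alpha$ is a factor of $u_iv_i\in S$, so $\mu\alpha\in S$, and for each edge $(u_i,v_j)$ of the cycle the word $x_i\mu\alpha a_j$ is a factor of $u_iv_j\in S$, hence $\{1\otimes x_i,\,a_j\otimes 1\}$ is an edge of $\E_S(\mu\alpha)$. Thus the cycle of $\G_X$ transports to a closed walk in $\E_S(\mu\alpha)$; since that graph is a forest this walk reduces to the trivial one, which is impossible unless two of the transported edges coincide, forcing $x_i=x_j$ (and a corresponding coincidence among the $a$'s) for some $i\ne j$. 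Tracing these coincidences back through the construction contradicts the minimality of the chosen cycle, so $\G_X$ is acyclic.

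Part (ii) follows in the same spirit: if $p,p'\in P'$ lie in one connected component $C$ of $\G_X$ with $p$ a proper suffix of $p'$, then by (i) there is a unique reduced path joining $p$ and $p'$ in the tree $C$; combining this path with the suffix relation and transporting as above yields a non-trivial closed walk in some $\E_S(w)$, contradicting acyclicity of $S$. The statement for $Q'$ and the prefix order is obtained by passing to reversals. For (iii) and (iv) I would induct on the length $n$ of the path, using (i) and (ii). For $n=1$ the path is $(v_1,u_1,v_2)$ with $u_1v_1,u_1v_2\in X$, and since $X$ is a prefix code neither $v_1$ nor $v_2$ is a prefix of the other, so $\lcp(v_1,v_2)$ is a proper prefix of both; (iv) is dual, using that $X$ is a suffix code. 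For the inductive step, put $z=\lcp(v_1,v_{n+1})$; if $z=\varepsilon$ there is nothing to prove, so assume $z\ne\varepsilon$ and suppose some $v_k$ fails to have $z$ as a proper prefix. Choosing $k$ so that $\lcp(v_1,v_k)$ is shortest, one gets $\lcp(v_1,v_k)=\lcp(v_k,v_{n+1})\subsetneq z$, and the induction hypothesis applied to the two sub-paths $v_1\to v_k$ and $v_k\to v_{n+1}$ shows this word is a proper prefix of all $v_i$; feeding the edges incident to $u_{k-1}$ and $u_k$ into the peeling argument of part (i) then produces a four-cycle in a suitable $\E_S(w)$, which is impossible. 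Hence no such $v_k$ exists, which gives (iii), and (iv) is symmetric.

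The step I expect to be the main obstacle is the bookkeeping in part (i) (and its reuse in (iii)): choosing the correct word $w$ on which to center the extension graph, handling the degenerate cases where a common prefix or suffix exhausts one of the vertices involved, and above all controlling precisely how the cancellations that necessarily occur in the forest $\E_S(\mu\alpha)$ reflect back into coincidences among the $u_i$ and $v_i$ — this is delicate for cycles of length greater than four, where one must argue that such coincidences always allow the cycle of $\G_X$ to be strictly shortened.
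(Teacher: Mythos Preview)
The paper does not give its own proof of this proposition; it merely cites \cite[Proposition 5.6]{BertheDeFeliceDolceLeroyPerrinReutenauerRindone2013a}. So let me assess your sketch on its own merits. Your guiding intuition --- that paths and cycles in $\G_X$ can be ``peeled'' and transported into some extension graph $\E_S(w)$, where acyclicity of $S$ is available --- is exactly the right engine. But the specific implementation you propose for (i) does not close.

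The gap is the one you yourself flag, and it is fatal rather than merely delicate. Having chosen a shortest cycle $(v_1,u_1,\dots,v_n,u_n,v_1)$ and peeled to obtain the closed walk $(a_1,x_1,\dots,a_n,x_n,a_1)$ in $\E_S(\mu\alpha)$, you are right that this walk must backtrack. But a backtrack, say $a_i=a_{i+1}$, only tells you that $v_i$ and $v_{i+1}$ share the prefix $\alpha a_i$; it does \emph{not} force $v_i=v_{i+1}$, so it gives no shorter cycle in $\G_X$. The peeling map $v_j\mapsto a_j$, $u_j\mapsto x_j$ is lossy, and there is no mechanism by which collapses in $\E_S(\mu\alpha)$ propagate back to collapses in $\G_X$. (Your attempt to invoke (ii) inside the proof of (i) is also circular, as you half-acknowledge.) The same lossiness undermines the ``four-cycle'' step in your induction for (iii).

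The remedy in \cite{BertheDeFeliceDolceLeroyPerrinReutenauerRindone2013a} is to reverse the order: prove (iii) and (iv) \emph{first}, by simultaneous induction on the path length, and then read off (i) and (ii) as corollaries. For (i), a simple cycle in $\G_X$ can be opened into a reduced path $(v_1,u_1,\dots,u_n,v_{n+1})$ with $v_{n+1}=v_1$, and then (iii) forces $\lcp(v_1,v_1)=v_1$ to be a proper prefix of $v_1$, a contradiction; (ii) follows the same way from (iv). The induction for (iii)/(iv) is where the transport to extension graphs is actually used, but locally (at a single interior vertex of the path, using the inductive hypothesis on the two sub-paths it separates), not globally over the whole cycle. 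This local use avoids the lossiness problem, because one controls only two edges at a time rather than an entire closed walk.
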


\subsection{Cosets}
Let $X$ be a symmetric set.
We use the incidence graph to define an equivalence relation $\gamma_X$ on the set $P$ of proper prefixes of $X$, called the \emph{coset equivalence} of $X$, as follows. 
It is the relation defined by $p\equiv q\bmod \gamma_X$ if there is a path (of even length) from $1\otimes p$ to $1\otimes q$ or a path (of odd length) from $1\otimes p$ to $q^{-1}\otimes 1$ in the incidence graph $\G_X$.
It is easy to verify that, since $X$ is symmetric, $\gamma_X$
is indeed an equivalence. The class of the empty word $\varepsilon$
is reduced to $\varepsilon$.
  This definition is an extension
to symmetric sets of the equivalence denoted $\theta_X$
introduced in~\cite{BerstelDeFelicePerrinReutenauerRindone2012}.

The following statement is the generalization to symmetric bifix codes
of Proposition 6.3.5 in~\cite{BerstelDeFelicePerrinReutenauerRindone2012}.
We denote by $\langle X\rangle$ the subgroup generated by $X$.
\begin{proposition}\label{propositionTheta}
Let $X$ be a symmetric bifix code and let $P$ be the set of its proper
prefixes. Let $\gamma_X$ be the coset equivalence of $X$ and 
let $H=\langle X\rangle$.
For any $p,q\in P$, if $p\equiv q\bmod\gamma_X$, then $Hp=Hq$.
\end{proposition}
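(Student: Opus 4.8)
The statement to prove is: if $p \equiv q \bmod \gamma_X$ then $Hp = Hq$, where $H = \langle X \rangle$. The plan is to argue by induction on the length of the path in the incidence graph $\G_X$ witnessing $p \equiv q \bmod \gamma_X$. First I would reduce to the case of a path of length one, i.e.\ a single edge, since the equivalence relation is generated by the "one-step" relations coming from single edges of $\G_X$ (and by the inversion $1 \otimes p \leftrightarrow p^{-1} \otimes 1$ that is implicit in the definition). Concretely, unwinding the definition: a path of even length from $1 \otimes p$ to $1 \otimes q$ decomposes into consecutive edges, and each edge of $\G_X$ joining $1 \otimes u \in P'$ to $v \otimes 1 \in Q'$ corresponds to a relation $uv \in X$; a path of odd length from $1\otimes p$ to $q^{-1}\otimes 1$ likewise decomposes. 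So it suffices to show: if $u \in P'$, $v \in Q'$ and $uv \in X$, then $Hu = H v^{-1}$ (this is the "single edge" claim), and then compose.

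The single-edge claim is the heart. Suppose $uv \in X$. Since $X$ is symmetric, $(uv)^{-1} = v^{-1}u^{-1} \in X \subseteq H$. Hence in $G_\theta$ we have $v^{-1} u^{-1} \in H$, so $Hv^{-1} = H v^{-1} u^{-1} \cdot u = H u$ — wait, I need to be careful with the reduced-word bookkeeping: $v^{-1}u^{-1}$ need not be reduced as written, but as an element of the group it equals the reduced word representing $(uv)^{-1}$, and that element lies in $H$ because $X \subseteq H$ and $H$ is a subgroup closed under inverses. Therefore $H u = H (v^{-1}u^{-1}) u = H v^{-1}$, using associativity in $G_\theta$ and $v^{-1}u^{-1} \in H$. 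Thus the single edge joining $1 \otimes u$ to $v \otimes 1$ identifies the cosets $Hu$ and $Hv^{-1}$, exactly as required. The remaining bookkeeping is purely formal: for the inductive step on a longer reduced path, each intermediate vertex is either of the form $1 \otimes r$ with $r \in P'$ or $s^{-1}\otimes 1$ with $s \in Q'$, and consecutive vertices are joined by an edge, so applying the single-edge claim at each step and chaining the coset equalities $Hp = Hr_1 = \cdots = Hq$ (interpreting a vertex $s^{-1} \otimes 1$ as contributing the coset $Hs^{-1}$) yields $Hp = Hq$. One also has to note the trivial base case: the empty path gives $p = q$, and the case of the class of $\varepsilon$ is consistent since $\varepsilon \in P$ has $H\varepsilon = H$ and its $\gamma_X$-class is a singleton.

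The main obstacle I anticipate is not conceptual but notational: keeping straight the identification of the two "sides" $1 \otimes P'$ and $Q' \otimes 1$ of the incidence graph with cosets, since a vertex on the $Q'$ side represents a suffix $s$ but the coset attached to it should be $Hs^{-1}$, and the parity of the path (even vs.\ odd) exactly tracks which side one lands on. Once one fixes the convention "vertex $1 \otimes r \mapsto Hr$, vertex $s^{-1} \otimes 1 \mapsto Hs^{-1}$", every edge of $\G_X$ preserves the attached coset by the single-edge claim, and the proposition follows immediately. This is the natural generalization to the symmetric setting of the argument for Proposition 6.3.5 in~\cite{BerstelDeFelicePerrinReutenauerRindone2012}, the only new point being that symmetry of $X$ is what makes $v^{-1}u^{-1} \in H$ and hence makes the two ends of an edge land in the same coset.
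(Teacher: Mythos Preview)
Your proof is correct and follows essentially the same line as the paper's: both argue by induction on the length of a path in $\G_X$, reducing to the observation that a single edge between $1\otimes u$ and $v\otimes 1$ (i.e.\ $uv\in X\subset H$) forces $Hu=Hv^{-1}$, and then chaining. One small correction: symmetry of $X$ is not what makes $v^{-1}u^{-1}\in H$ (that already follows from $H$ being a subgroup, since $uv\in H$); rather, symmetry is what ensures $q^{-1}\in Q'$ whenever $q\in P'$, so that the odd-length clause in the definition of $\gamma_X$ is meaningful and $\gamma_X$ is genuinely an equivalence relation.
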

\begin{proof}
Assume that there is a path of even length from $p$ to $q$.
If the path has length $2$, then we have $pr,qr\in X$ for some suffix $r$
of $X$. This implies $pq^{-1}\in H$ and thus $Hp=Hq$. The general case
follows by induction. In the case where there is a path of odd
length from $p$ to $q^{-1}$, there is a path of even length from
$p$ to $r$ and an edge from $r$ to $q^{-1}$ for some $r\in P$. Then $Hp=Hr$ by the 
preceding argument. Since $rq^{-1}\in X$, we have $Hr=Hq$ and the conclusion
follows.
\end{proof}

We now use the coset equivalence $\gamma_X$ to define the \emph{coset automaton} $\C_X$ of a symmetric bifix code $X$ as follows. 
The vertices of $\C_X$ are the equivalence classes of $\gamma_X$. We denote by $\hat{p}$ the class of $p$.
There is an edge labeled $a\in A$ from $s$ to $t$ if for some $p\in s$ and $q\in t$ (that is, $s=\hat{p}$ and $t=\hat{q}$), one of the following cases occurs (see Figure~\ref{figureCoset}):

\begin{enumerate}
\item[(i)] $pa\in P$ and $pa\equiv q\bmod\gamma_X$,
\item[(ii)] or $pa\in X$ and $q=\varepsilon$.
\end{enumerate}

\begin{figure}[hbt]
\gasset{Nadjust=wh}
\centering
\begin{picture}(60,10)
\put(0,0){
\begin{picture}(25,10)
\node(1)(0,5){$\hat{\varepsilon}$}
\node(p)(15,5){$\hat{p}$}
\node(pa)(25,5){$\hat{pa}$}
\drawedge(1,p){$p$}
\drawedge(p,pa){$a$}
\node[Nframe=n](i)(12,0){(i)}
\end{picture}
}

\put(40,0){
\begin{picture}(20,10)
\node(p)(0,5){$\hat{p}$}
\node(1)(10,5){$\hat{\varepsilon}$}
\drawedge(p,1){$a$}
\node[Nframe=n](iii)(5,0){(ii)}
\end{picture}
}

\end{picture}
\caption{The edges of the coset automaton.}\label{figureCoset}
\end{figure}
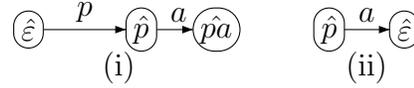

\begin{proposition}\label{propositionEdge}
Let $X$ be a symmetric bifix code, let $P$ be its set of proper prefixes and let $H = \langle X \rangle$.
If for $p,q \in P$ and a word $w \in A^*$ there is a path labeled $w$ from the class $\hat{p}$ to the class $\hat{q}$, then $Hpw = Hq$.
\end{proposition}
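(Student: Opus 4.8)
The plan is to argue by induction on the length of $w$ (equivalently, on the length of the path in $\C_X$), with Proposition~\ref{propositionTheta} as the essential input.

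For the base case $w=\varepsilon$, a path of length $0$ from $\hat p$ to $\hat q$ forces $\hat p=\hat q$, i.e. $p\equiv q\bmod\gamma_X$, and Proposition~\ref{propositionTheta} gives $Hp=Hq$, which is exactly $Hpw=Hq$. For the inductive step, write $w=w'a$ with $a\in A$, so the path factors as a path labeled $w'$ from $\hat p$ to some class $\hat r$ followed by a single edge labeled $a$ from $\hat r$ to $\hat q$. By the induction hypothesis $Hpw'=Hr$, hence $Hpw=Hpw'a=Hra$, using that right multiplication by $a$ is well defined on right cosets ($Hr=Hr'$ implies $Hra=Hr'a$). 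It thus remains to show that the edge from $\hat r$ to $\hat q$ labeled $a$ entails $Hra=Hq$.

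By the definition of the coset automaton (Figure~\ref{figureCoset}) there are $r'\in\hat r$ and $q'\in\hat q$ in one of two situations. In case (i), $r'a\in P$ and $r'a\equiv q'\bmod\gamma_X$, so $H(r'a)=Hq'$ by Proposition~\ref{propositionTheta}; combining with $Hr=Hr'$ and $Hq=Hq'$ (again Proposition~\ref{propositionTheta}, since $r\equiv r'$ and $q\equiv q'$) gives $Hra=Hr'a=Hq'=Hq$. In case (ii), $r'a\in X\subseteq H$ and $q'=\varepsilon$, so $Hr'a=H=Hq'$, and the same coset identities yield $Hra=Hr'a=H=Hq$. In either case $Hpw=Hra=Hq$, which closes the induction.

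I do not expect a genuine obstacle here: the statement is a routine transfer of the path relation in $\C_X$ to a coset identity. The only point requiring care is the bookkeeping of right cosets when one passes between a vertex of $\C_X$ (an equivalence class) and the particular proper prefixes $r',q'$ that witness a given edge; this is handled uniformly by invoking Proposition~\ref{propositionTheta} to replace any prefix by any $\gamma_X$-equivalent one without changing its right coset modulo $H$.
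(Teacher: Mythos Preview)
Your proof is correct and follows essentially the same approach as the paper's own proof: an induction on the length of $w$, with the single-edge case handled by checking the two clauses in the definition of $\C_X$ against Proposition~\ref{propositionTheta}. The paper merely sketches this (``It is easy to verify\ldots The general case follows easily by induction''), whereas you have spelled out the base case $w=\varepsilon$ and the representative-independence bookkeeping explicitly.
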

\begin{proof}
Assume first that $w$ is a letter $a\in A$.
It is easy to verify using Proposition~\ref{propositionTheta} that in the two
 cases of the definition
of an edge $(\hat{p},a,\hat{q})$, one has $Hpa=Hq$. Since the coset does
not depend on the representative in the class, this implies the conclusion.
The general case follows easily by induction.
\end{proof}
Let $A$ be an alphabet with an involution $\theta$.
A directed graph with edges labeled in $A$ is called \emph{symmetric} if there
is an edge from $p$ to $q$ labeled $a$ if and only if there is
an edge from $q$ to $p$ labeled $a^{-1}$.

If $\G$ is a symmetric graph and $v$ is a vertex of $\G$, the set of
reductions of the labels of paths from $v$ to $v$ is a subgroup of
$G_\theta$ called the subgroup \emph{described} by $\G$ with respect
to $v$.

A symmetric graph  is called  \emph{reversible}
if for every pair  of edges of the form $(v,a,w),(v,a,w')$,
 one has $w=w'$ (and the symmetric implication since the graph is symmetric).

\begin{proposition}\label{propositionCoset}
Let $S$ be a specular set and let $X\subset S$ be a finite symmetric bifix code.
The coset automaton $\C_X$ is reversible.
Moreover the subgroup described by $\C_X$ with respect to the class of the empty word is the group generated by $X$.
\end{proposition}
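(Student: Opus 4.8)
The plan is to establish the two assertions separately, both by analysing the two types of edges in $\C_X$ and exploiting the structure given by Proposition~\ref{proposition6.6}. For reversibility, I would suppose there are two edges $(\hat p,a,\hat q)$ and $(\hat p,a,\hat q')$ and argue that $\hat q=\hat q'$. Here one must distinguish according to which of the cases (i), (ii) each edge comes from. If both are of type (i), coming from representatives $p_1\equiv p_2\equiv p\bmod\gamma_X$ with $p_1a,p_2a\in P$ and $p_ia\equiv q^{(i)}$, then the key point is that a path in $\G_X$ witnessing $p_1\equiv p_2$ can be extended by the letter $a$ on the right to a path witnessing $p_1a\equiv p_2a$; this uses that $p_1a,p_2a$ are still proper prefixes and that $X$ is a bifix code, so appending $a$ preserves the incidence-graph adjacencies along the path. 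The case where one edge is of type (i) and the other of type (ii) is handled similarly: $pa\in P$ and $pa\in X$ cannot both hold for the same $pa$ (since $X$ is prefix-free, no proper prefix of a word of $X$ lies in $X$), so the two edges must use different representatives $p_1,p_2$ of $\hat p$, and again one extends the $\gamma_X$-path connecting them; the endpoint on the $X$-side gets identified with $\hat\varepsilon$. The case of two type-(ii) edges is immediate since both land at $\hat\varepsilon$.

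For the second assertion, that the subgroup described by $\C_X$ relative to $\hat\varepsilon$ equals $\langle X\rangle=H$, I would prove the two inclusions. The inclusion ``described subgroup $\subseteq H$'' is essentially Proposition~\ref{propositionEdge}: a closed path at $\hat\varepsilon$ labelled $w$ gives $H\varepsilon w=H\varepsilon$, i.e.\ $w\in H$ after reduction. For the reverse inclusion, given $x\in X$ written as $x=a_1a_2\cdots a_n$ with $a_i\in A$, I would exhibit an explicit closed path at $\hat\varepsilon$ spelling $x$: take the classes $\widehat{a_1\cdots a_i}$ for $i=0,1,\ldots,n-1$ (each $a_1\cdots a_i$ is a proper prefix of $x$, hence lies in $P$), with type-(i) edges between consecutive ones and a final type-(ii) edge from $\widehat{a_1\cdots a_{n-1}}$ back to $\hat\varepsilon$ because $a_1\cdots a_{n-1}\cdot a_n=x\in X$. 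Since $X$ is symmetric, $x^{-1}\in X$ as well, so $\langle X\rangle$ is generated as a group by these path-labels, giving $H\subseteq$ described subgroup.

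The main obstacle I anticipate is the reversibility argument in the mixed case, specifically controlling what happens when the two edges out of $\hat p$ use \emph{different} representatives $p_1\ne p_2$ of the class $\hat p$: one has to check that extending the connecting $\gamma_X$-path by the common letter $a$ really does produce a valid path in $\G_X$ and does not accidentally create a cycle or leave the graph. This is where acyclicity of $S$ (hence of $\G_X$, by Proposition~\ref{proposition6.6}(i)) and the fact that $S$ is a \emph{specular} set (so that the incidence graph is a disjoint union of trees with the symmetry of Proposition~\ref{propositionClelia}) will be used: in a forest, the path connecting $p_1$ to $p_2$ is unique and reduced, and Proposition~\ref{proposition6.6}(iii)--(iv) guarantee that prepending/appending letters behaves compatibly with the prefix and suffix codes sitting in the connected components. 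I would treat the odd-length versus even-length path cases in the definition of $\gamma_X$ uniformly by observing that an odd-length path from $1\otimes p_1$ ends at some $q^{-1}\otimes 1$ and that the symmetry of $\G_X$ (coming from $X=X^{-1}$) converts it back into an even-length statement after multiplying by the central letter, exactly as in the proof of Proposition~\ref{propositionClelia}.
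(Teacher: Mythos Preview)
Your treatment of the second assertion (the described subgroup equals $\langle X\rangle$) matches the paper's: one inclusion comes from Proposition~\ref{propositionEdge}, the other from spelling out each $x\in X$ as a closed path at $\hat\varepsilon$ through its successive proper prefixes.

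The reversibility argument, however, has a genuine gap. The equivalence $\gamma_X$ has \emph{two} kinds of witnesses: an even-length path $1\otimes p\leadsto 1\otimes p'$ in $\G_X$, or an odd-length path $1\otimes p\leadsto p'^{-1}\otimes 1$. Your plan is to ``extend the witnessing path by the letter $a$'' and then treat the odd and even cases ``uniformly via the symmetry of $\G_X$, as in Proposition~\ref{propositionClelia}.'' This does not work. First, Proposition~\ref{propositionClelia} is about $\E(\varepsilon)$, not about $\G_X$; the symmetry it provides does not convert an odd-length path $p\leadsto p'^{-1}$ into an even-length path $p\leadsto p'$. Second, and more seriously, in the odd-length case there is no way to push the letter $a$ through: if $p'^{-1}$ is a proper \emph{suffix} vertex, appending $a$ on the right of $p'$ has no meaning in the incidence graph, and Proposition~\ref{proposition6.6}(iii)--(iv) give no traction. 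The paper's proof does \emph{not} handle the two cases uniformly; instead it \emph{eliminates} the odd-length case outright, using the parity partition $(S_{i,j})$ of Proposition~\ref{propositionPartition}: if $p\in S_{i,j}$ and $a\in S_{j,k}$ (so that $pa\in S$), then following an odd-length path from $1\otimes p$ forces $p'^{-1}\in S_{j,\ell}$, hence $p'\in S_{1-\ell,1-j}$, and then $p'a\notin S$ since $a$ begins in $S_{j,\cdot}$. This is precisely where the specular structure enters; acyclicity alone is not enough.

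Once the odd case is excluded, your idea in the even case is roughly right but needs to be made precise. One does not literally ``append $a$'' to each vertex of the path. Rather, choosing $u,v$ with $pau,p'av\in X$, the even path $(p,\ldots,p')$ extends to $(au,p,\ldots,p',av)$, and Proposition~\ref{proposition6.6}(iii) forces every suffix-vertex along this path to begin with $a$; stripping that $a$ from the suffix side and attaching it to the prefix side produces the desired even path from $pa$ to $p'a$. The mixed case (one edge of type (i), one of type (ii)) is handled by the same parity argument followed by a direct contradiction from Proposition~\ref{proposition6.6}(iii), since the terminal suffix would have to be the single letter $a$.
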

\begin{proof}
It is easy to verify that  the words of $X$ are labels of paths from $\hat{\varepsilon}$ to $\hat{\varepsilon}$ which do not pass by $\hat{\varepsilon}$ in between.
Thus the group described by $\C_X$
with respect to $\hat{\varepsilon}$ contains $H=\langle X\rangle$. 

By Proposition~\ref{propositionEdge}, if there is a path from the class of $p$
to the class of $q$ labeled $w$, then $Hpw=Hq$. Thus if $w$ belongs to the group
described by $\C_X$ (w.r.t. $\hat{\varepsilon}$), it is in $H$. We have thus proved that the coset automaton describes
$H$. 

Let us show now that $\C_X$ is reversible.
First, it is symmetric since $X$ is symmetric. Let
us show that if $(v,a,w)$ and $(v,a,w')$ are edges of $\C_X$, then 
$w=w'$.
Consider $p,p'\in P$ such that $p\equiv p'\bmod \gamma_X$. Assume that
there is an edge labeled $a$ from $\hat{p}=\hat{p'}$ to $\hat{q}$
and  to  $\hat{q'}$. 
\paragraph{Case 1} Suppose that $pa,p'a\in P$. We have to show that
$pa\equiv p'a\bmod \gamma_X$. Let $u,v$ be such that
$pau,p'av\in X$. It is not possible that
there exists a path of odd length from $p$ to $p'^{-1}$ in the incidence graph $\G_X$.
Indeed, assume that $p\in S_{i,j}$ and $a\in S_{j,k}$. Let $(p,u_1,\ldots,u_{2m},p'^{-1})$ with $m\ge 0$ be a path of odd length from $p$ to $p'^{-1}$.
Then each $u_{2t}$ for $1\le t\le m$ is in $S_{i_t,j}$ and each $u_{2t+1}$
for $0\le t\le m-1$ is in $S_{j,\ell_t}$ for some $i_t,\ell_t\in\{0,1\}$.
Then $p'^{-1}\in S_{j,\ell_m}$ and thus
$p'\in S_{1-\ell_m,1-j}$. But then we cannot have $p'a\in S$.
Thus there is a path of even length from $p$ to $p'$ in $\G_X$.
This implies that there is a path of even length of the
form $(au,p,\ldots,p',av)$. 
Thus by Proposition~\ref{proposition6.6} (iii), there is a path
of even length from $pa$ to $p'a$. 
This implies that $pa\equiv p'a\bmod\gamma_X$.

\paragraph{Case 2}
Assume now that $pa\in P$ and $p'a\in X$.
For the same reason as in Case 1, there cannot exist a path of odd length from $p$ to $p'$.
Thus there is a path of even length from $p$ to $p'$.
By Proposition~\ref{proposition6.6} (iii), this is not possible since otherwise we would have for some word $u$, a path $(au,p,\ldots,p',a)$ and $a$ is not a proper prefix of the last term of the sequence.

The case where $pa\in X$ and $p'a\in P$ is symmetrical.
Finally, if $pa,p'a\in X$, we have $q=q'=\varepsilon$.

This shows that if $(v,a,w)$ and $(v,a,w')$ are edges of $\C_X$, then
$w=w'$. Since $\C_X$ is symmetric, it follows that 
if $(v,a,w)$ and $(v',a,w)$ are edges of $\C_X$, then $v=v'$. Thus
$\C_X$ is reversible.
\end{proof}
\begin{example}
Let $T$ be the linear involution of Example~\ref{exampleInvolution3}
and let $S=\LL(T)$. Let $X$ be the set of words of length $3$ of
$S$ (see Figure~\ref{figureSetS}), which is a symmetric bifix code.
The incidence graph $\G_X$ is represented in Figure~\ref{figureIncidence}.
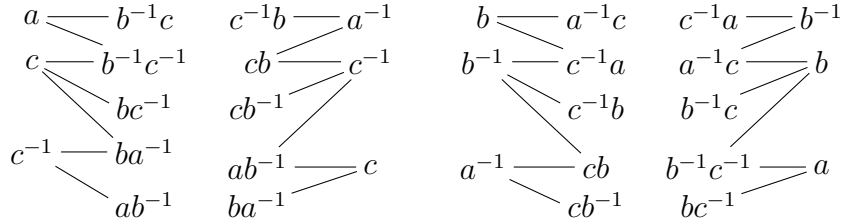
\begin{figure}[hbt]
\centering
\gasset{Nadjust=wh,AHnb=0,Nframe=n}
\begin{picture}(120,35)
\put(0,0){
\begin{picture}(20,30)
\node(a)(0,30){$a$}\node(bbarc)(15,30){$b^{-1}c$}
\node(c)(0,24){$c$}\node(bbarcbar)(15,24){$b^{-1}c^{-1}$}
\node(bcbar)(15,18){$bc^{-1}$}
\node(cbar)(0,12){$c^{-1}$}\node(babar)(15,12){$ba^{-1}$}
\node(abbar)(15,5){$ab^{-1}$}

\drawedge(a,bbarc){}\drawedge(a,bbarcbar){}
\drawedge(c,bbarcbar){}\drawedge[eyo=-2](c,bcbar){}\drawedge[eyo=-2](c,babar){}
\drawedge(cbar,babar){}\drawedge[eyo=-2](cbar,abbar){}
\end{picture}
}
\put(30,0){
\begin{picture}(20,30)
\node(abar)(15,30){$a^{-1}$}\node(cbarb)(0,30){$c^{-1}b$}
\node(cbar)(15,24){$c^{-1}$}\node(cb)(0,24){$cb$}
\node(cbbar)(0,18){$cb^{-1}$}
\node(c)(15,10){$c$}\node(abbar)(0,10){$ab^{-1}$}
\node(babar)(0,5){$ba^{-1}$}

\drawedge(cbarb,abar){}\drawedge(cb,abar){}
\drawedge(cb,cbar){}\drawedge(cbbar,cbar){}\drawedge(abbar,cbar){}
\drawedge(abbar,c){}\drawedge(babar,c){}
\end{picture}
}
\put(60,0){
\begin{picture}(20,30)
\node(b)(0,30){$b$}\node(abarc)(15,30){$a^{-1}c$}
\node(bbar)(0,24){$b^{-1}$}\node(cbara)(15,24){$c^{-1}a$}
\node(cbarb)(15,18){$c^{-1}b$}
\node(abar)(0,10){$a^{-1}$}\node(cb)(15,10){$cb$}
\node(cbbar)(15,5){$cb^{-1}$}

\drawedge(b,abarc){}\drawedge(b,cbara){}
\drawedge(bbar,cbara){}\drawedge[eyo=-2](bbar,cbarb){}\drawedge(bbar,cb){}
\drawedge(abar,cb){}\drawedge[eyo=-2](abar,cbbar){}
\end{picture}
}
\put(90,0){
\begin{picture}(20,30)
\node(bbar)(15,30){$b^{-1}$}\node(cbara)(0,30){$c^{-1}a$}
\node(b)(15,24){$b$}\node(abarc)(0,24){$a^{-1}c$}
\node(bbarc)(0,18){$b^{-1}c$}
\node(a)(15,10){$a$}\node(bbarcbar)(0,10){$b^{-1}c^{-1}$}
\node(bcbar)(0,5){$bc^{-1}$}

\drawedge(cbara,bbar){}\drawedge(abarc,bbar){}
\drawedge(abarc,b){}\drawedge(bbarc,b){}\drawedge(bbarcbar,b){}
\drawedge(bbarcbar,a){}\drawedge(bcbar,a){}
\end{picture}
}
\end{picture}
\caption{The incidence graph of $X$.}\label{figureIncidence}
\end{figure}
The coset automaton $\C_X$ is represented in Figure~\ref{figureCosetAutomaton}
(we only represent one of the edges labeled $a$ and $a^{-1}$, the other one
is understood).
The vertex $2$ is the class corresponding to the first two trees
in Figure~\ref{figureIncidence}. The vertex $3$ corresponds to the two
last ones.
\begin{figure}[hbt]
\centering
\gasset{Nadjust=wh}
\begin{picture}(30,35)
\node(1)(0,15){$1$}\node(2)(30,30){$2$}\node(3)(25,0){$3$}

\drawedge[curvedepth=3](1,2){$a,c$}\drawedge[curvedepth=3](2,1){$c$}
\drawedge[curvedepth=3](1,3){$b$}\drawedge[curvedepth=3](3,1){$a,b$}
\drawedge[curvedepth=3](2,3){$a$}\drawloop[loopangle=0](2){$b$}
\drawloop[loopangle=0](3){$c$}
\end{picture}
\caption{The coset automaton.}\label{figureCosetAutomaton}
\end{figure}
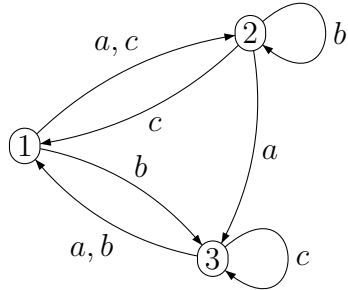
\end{example}

\subsection{Saturation Theorem}
Let $H$ be a subgroup of the specular group $G_\theta$ and let $S$
be a specular set on $A$ relative to $\theta$. The set of \emph{prime}
words in $S$ with respect to $H$
 is the set of nonempty  words in $H\cap S$ without a proper 
nonempty prefix in $H\cap S$. Note that the set of prime words with
respect to $H$
 is a symmetric  bifix code. One may verify that
it is
actually the unique bifix code $X$ such that $X\subset S\cap H\subset X^*$.

The following statement is a generalization of Theorem 5.2 in~\cite{BertheDeFeliceDolceLeroyPerrinReutenauerRindone2013a} (Saturation Theorem).

\begin{theorem}[Saturation Theorem]
\label{theoremSaturation}
Let $S$ be an acyclic laminary set.
Any finite symmetric bifix code $X\subset S$ is the set of prime words in $S$ with respect to the subgroup $\langle X\rangle$.
Moreover $\langle X\rangle\cap S=X^*\cap S$.
\end{theorem}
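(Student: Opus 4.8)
The plan is to prove the ``moreover'' part $\langle X\rangle\cap S=X^{*}\cap S$ first and then deduce that $X$ is the set of prime words in $S$ with respect to $H=\langle X\rangle$. The second deduction is immediate: one always has $X\subseteq\langle X\rangle$ and $X\subseteq S$, so $X\subseteq\langle X\rangle\cap S$; once the equality is known this gives $X\subseteq\langle X\rangle\cap S\subseteq X^{*}$, and since $X$ is a bifix code and the set of prime words with respect to $H$ is \emph{the} unique bifix code $Y$ with $Y\subseteq S\cap H\subseteq Y^{*}$, we conclude $X=Y$. As for the equality itself, the inclusion $X^{*}\cap S\subseteq\langle X\rangle\cap S$ is trivial since $X^{*}\subseteq\langle X\rangle$, so the whole content is the inclusion $\langle X\rangle\cap S\subseteq X^{*}$.

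To prove this I would use the coset automaton $\C_X$, whose relevant properties carry over to acyclic laminary sets by the arguments of Proposition~\ref{proposition6.6} and Proposition~\ref{propositionCoset}: $\C_X$ is reversible and the subgroup it describes with respect to the class $\hat\varepsilon$ of the empty word is $H$. Let $w\in\langle X\rangle\cap S$. Being an element of $S$, it is a reduced word. In a reversible symmetric graph the label of a reduced path is a reduced word (a factor $a\theta(a)$ in the label would force an immediate backtrack, by reversibility), while reducing a closed path leaves it closed at $\hat\varepsilon$; hence $w$, which is the reduction of the label of some closed path at $\hat\varepsilon$, is in fact the label of a reduced closed path at $\hat\varepsilon$ (its reduced representative is unique). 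Cutting this path at its successive visits to $\hat\varepsilon$ factors $w=y_{1}y_{2}\cdots y_{r}$, where each $y_{j}$ is a nonempty factor of $w$ --- hence an element of $S$, since $S$ is factorial --- and each $y_{j}$ labels a closed path at $\hat\varepsilon$ meeting $\hat\varepsilon$ nowhere in between. Everything thus reduces to showing that every such $y_{j}$ lies in $X$, which gives $w\in X^{*}$.

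To prove that claim, read the path of $y=y_{j}$, say $\hat\varepsilon=\hat{p_{0}}\xrightarrow{a_{1}}\hat{p_{1}}\to\cdots\xrightarrow{a_{m}}\hat{p_{m}}=\hat\varepsilon$ with $\hat{p_{i}}\neq\hat\varepsilon$ for $0<i<m$, and show by induction on $i$ that, for $i<m$, the prefix $a_{1}\cdots a_{i}$ is a proper prefix of $X$ lying in the class $\hat{p_{i}}$, and that $a_{1}\cdots a_{m}\in X$. If $m=1$ the first edge is already of type (ii) and $y=a_{1}\in X$ directly. Otherwise, for $0<i<m$ the edge $\hat{p_{i}}\xrightarrow{a_{i+1}}\hat{p_{i+1}}$ cannot be of type (ii) (its target is $\neq\hat\varepsilon$), so it is of type (i): some representative $p\equiv a_{1}\cdots a_{i}$ of $\hat{p_{i}}$ has $pa_{i+1}$ a proper prefix of $X$. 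One then transfers this to $a_{1}\cdots a_{i+1}$ itself: using that $a_{1}\cdots a_{i+1}$ is a factor of $y\in S$ and that $p\equiv a_{1}\cdots a_{i}\bmod\gamma_X$, the incidence-graph estimates of Proposition~\ref{proposition6.6}(iii)--(iv), applied exactly as in Cases 1 and 2 of the proof of Proposition~\ref{propositionCoset}, show that $a_{1}\cdots a_{i+1}$ is again a proper prefix of $X$, $\gamma_X$-equivalent to $pa_{i+1}$, hence in $\hat{p_{i+1}}$. At the final step the edge $\hat{p_{m-1}}\xrightarrow{a_{m}}\hat\varepsilon$ is of type (ii), giving a representative $q\equiv a_{1}\cdots a_{m-1}$ with $qa_{m}\in X$, and the same transfer, now together with the fact that $X$ is a prefix code, identifies $a_{1}\cdots a_{m}=y$ with that element of $X$ (Proposition~\ref{propositionEdge} already guarantees $y\in H$, which is consistent).

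The main obstacle is precisely this ``representative transfer'' step --- passing from the existence of \emph{some} $\gamma_X$-representative $p$ of a class with $pa$ a proper prefix of (respectively a word of) $X$ to the assertion that the specific prefix $a_{1}\cdots a_{i}$ extended by $a$ is again such and remains in the correct class. This is the mechanism already present in the reversibility proof (Proposition~\ref{propositionCoset}, Cases 1 and 2), and it is where acyclicity enters decisively, through Proposition~\ref{proposition6.6}; in the background the Freeness Theorem (Theorem~\ref{theoremFreeness}) ensures that $\langle X\rangle$ is freely generated by $X$, which is what makes the factorization $w=y_{1}\cdots y_{r}$ unambiguous. Once the transfer step is in place, the claim $y_{j}\in X$ follows, hence $\langle X\rangle\cap S=X^{*}\cap S$, and the first assertion follows as explained above.
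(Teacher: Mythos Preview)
Your scaffolding is sound and tracks the paper's use of the coset automaton: reversibility of $\C_X$, the reduction argument showing that a reduced $w\in H$ actually labels a closed reduced path at $\hat\varepsilon$, and the cut into first returns $y_1\cdots y_r$ are all correct. The problem is the inductive ``transfer'' step. Cases~1 and~2 in the proof of Proposition~\ref{propositionCoset} start from the hypothesis that \emph{both} $pa$ and $p'a$ already lie in $P\cup X$, and then either identify their classes or derive a contradiction. In your situation you only know $pa_{i+1}\in P\cup X$ for \emph{some} representative $p$ of $\hat p_i$; for the specific representative $a_1\cdots a_i$ you only know $a_1\cdots a_{i+1}\in S$. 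Nothing in Proposition~\ref{proposition6.6}(iii)--(iv) produces the conclusion $a_1\cdots a_{i+1}\in P\cup X$ from that weaker input, so the induction does not advance. The appeal to ``exactly as in Cases~1 and~2'' therefore hides the missing implication rather than supplying it.

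The paper avoids this obstacle by reversing your order and, crucially, applying the Freeness Theorem not to $X$ but to the prime code $Y$. Since $Y$ is a symmetric bifix code in the acyclic laminary set $S$, Theorem~\ref{theoremFreeness} makes $Y$ free. One then gets $X\subset Y$ from uniqueness of reduced paths in $\C_X$: a proper nonempty prefix $p$ of some $x\in X$ lying in $H$ would label both a closed path at $\hat\varepsilon$ and the initial segment of the canonical $x$-path, which ends at $\hat p\neq\hat\varepsilon$. For the reverse inclusion, each $y\in Y\subset H$ is a reduced product $x_1\cdots x_n$ with $x_i\in X\subset Y$; freeness of $Y$ forces $n=1$, hence $y\in X$. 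Once $X=Y$, the identity $\langle X\rangle\cap S=X^*\cap S$ is immediate from the definition of prime words. So the decisive idea you are missing is to bring in the freeness of the \emph{prime} code, which replaces your unjustified transfer by a one-line uniqueness argument.
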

\begin{proof}
Let $H=\langle X\rangle$ and let $Y\subset S$ be the set of prime words with
respect to $H$.
Then $Y$ is a symmetric bifix code and thus it is free by Theorem~\ref{theoremFreeness}.
Since, by Proposition~\ref{propositionCoset}, the coset automaton $\C_X$ is reversible, any reduced word
is the label of at most one reduced path in $\C_X$. Since any word
of $X$ is the label of a  reduced  path from $\hat{\varepsilon}$ to $\hat{\varepsilon}$ in $\C_X$ which does not pass by $\hat{\varepsilon}$ inbetween, this implies that $X\subset Y$.
But any $y\in Y$ is the reduction of some product
$x_1x_2\cdots x_n$ with $x_i\in X$. Since $Y$ is free and contains $X$, this
implies $n=1$ and $y\in X$. Thus $X=Y$.

The last assertion follows from the fact that, since
$X$ is the set of prime words in $S$ with respect to $H$, one has $H\cap S\subset X^*$.
\end{proof}

Note  that the hypothesis that $X$ is symmetric is necessary, as shown in the following example.
\begin{example}
Let $A=\{a,b,a^{-1},b^{-1}\}$. 
Let $S$ be the set of factors of  $(ab^{-1})^\omega\cup (a^{-1}b)^\omega$
(we denote as usual by $x^\omega$ the infinite word $xxx\cdots$).
Then $S$ is an acyclic laminary set. The set $X=\{a,ba^{-1}\}$ is a  bifix
code but it is not the set of prime words with respect to
$\langle X\rangle$ since $b\in \langle X\rangle\cap S$.
\end{example}

\section{Bifix codes and monoidal bases}
\label{sectionFiniteIndex}
In this section we prove the Finite Index Basis Theorem (Theorem~\ref{theoremFiniteIndex}) and a converse (Theorem~\ref{propositionConverseFIB}).

\subsection{Finite Index Basis Theorem}
The following result is the counterpart for specular sets of the result holding for recurrent tree sets of characteristic $1$ (see~\cite[Theorem 4.4]{BertheDeFeliceDolceLeroyPerrinReutenauerRindone2013b}).
The proof is very similar to that of Theorem 4.4 in~\cite{BertheDeFeliceDolceLeroyPerrinReutenauerRindone2013b} and we omit some details.

\begin{theorem}[Finite Index Basis Theorem]
\label{theoremFiniteIndex}
Let $S$ be a recurrent specular set and let $X\subset S$ be a finite symmetric bifix code.
Then $X$ is an $S$-maximal bifix code of $S$-degree $d$ if and only if it is a monoidal basis of a subgroup of index $d$.
\end{theorem}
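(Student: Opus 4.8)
The statement is the Finite Index Basis Theorem: a finite symmetric bifix code $X\subset S$ is $S$-maximal of $S$-degree $d$ if and only if $X$ is a monoidal basis of a subgroup of index $d$ in $G_\theta$. I would prove the two implications separately, mirroring the structure of the characteristic-$1$ result \cite[Theorem 4.4]{BertheDeFeliceDolceLeroyPerrinReutenauerRindone2013b} but replacing free-group arguments by their specular counterparts established earlier in the paper.

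\textbf{Direction 1: maximal bifix code of degree $d$ $\Rightarrow$ monoidal basis of a subgroup of index $d$.} Suppose $X$ is a finite $S$-maximal symmetric bifix code of $S$-degree $d$. Since $S$ is specular, hence acyclic, the Freeness Theorem (Theorem~\ref{theoremFreeness}) tells us $X$ is free, i.e. a monoidal basis of the subgroup $H=\langle X\rangle$. It remains to show $H$ has index exactly $d$ in $G_\theta$. Here I would use the coset automaton $\C_X$ of Section~\ref{sec:groups}: by Proposition~\ref{propositionCoset} it is reversible and describes $H$ with respect to $\hat\varepsilon$. The key point, exactly as in the characteristic-$1$ proof, is that the number of states of $\C_X$ equals the index $[G_\theta:H]$, because reversibility plus the fact that every letter of $A$ labels an edge out of every state (this uses that $S$ is recurrent and biextendable, so the underlying graph is complete and deterministic in the appropriate sense) makes $\C_X$ a transitive permutation representation of $G_\theta$ on the cosets of $H$. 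One then counts states: the states of $\C_X$ are the $\gamma_X$-classes of proper prefixes $P$ of $X$, and I would argue via parses that the number of such classes equals $d=d_X(S)$ — each class corresponds bijectively to one of the $d$ ``slots'' in a parse of a word of maximal degree, using that $w\in S$ has $d_X(w)<d$ iff $w$ is an internal factor of $X$. Then $[G_\theta:H]=d$ by Proposition~\ref{propositionEdge}.

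\textbf{Direction 2: monoidal basis of a subgroup of index $d$ $\Rightarrow$ maximal bifix code of degree $d$.} Conversely suppose $X$ is a finite symmetric bifix code that is a monoidal basis of a subgroup $H$ of index $d$. By the Saturation Theorem (Theorem~\ref{theoremSaturation}), $X$ is the set of prime words in $S$ with respect to $H$, and $\langle X\rangle\cap S=X^*\cap S$. To see $X$ is $S$-maximal: given any $w\in S$, recurrence gives a return-type argument — some $wuw\in S$ — and using that $H$ has finite index one finds a prefix or suffix of a suitable element of $H\cap S=X^*\cap S$ comparable with $w$, so $X$ is not properly contained in a bifix code of $S$. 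Hence by \cite[Theorem 4.2.8]{BerstelDeFelicePerrinReutenauerRindone2012} the $S$-degree $d'=d_X(S)$ is finite. Then Direction~1 applies to $X$ (now known to be finite $S$-maximal of degree $d'$), giving that $X$ is a monoidal basis of a subgroup of index $d'$; but $X$ already freely generates $H$ of index $d$, and the index is determined by the subgroup, so $d'=d$. Alternatively, one compares cardinalities directly: by Theorem~\ref{corollaryCardinality}, $\Card(X)=d'(\Card(A)-2)+2$, while by the specular Schreier formula \eqref{SchreierSpecular} a monoidal basis of an index-$d$ subgroup has $d(\Card(A)-2)+2$ elements; equating gives $d=d'$.

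\textbf{Main obstacle.} The delicate step is the index count in Direction~1: establishing that the coset automaton $\C_X$ has exactly $d$ states and genuinely realizes the permutation action of $G_\theta$ on $G_\theta/H$. This requires knowing that $\C_X$ is complete and deterministic — that from each state every letter of $A$ leads somewhere — which in turn rests on recurrence of $S$ and on the structure of prefixes of the $S$-maximal code $X$ (every proper prefix of $X$ extends within $S$). Matching the state count to the number of parses $d$ is the combinatorial heart, and it is exactly the point where the characteristic-$2$, symmetric setting must be checked to behave like the classical free-group case; the symmetry of $X$ and Proposition~\ref{propositionPartition} (parity constraints on factorizations) are what make the argument go through. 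I would therefore write this step carefully and refer to the parallel argument in \cite{BertheDeFeliceDolceLeroyPerrinReutenauerRindone2013b} for the routine verifications, spelling out only the places where the involution $\theta$ and the even/odd dichotomy enter.
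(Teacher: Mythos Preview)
Your Direction~1 takes a genuinely different route from the paper, and the route has a real gap. The paper does \emph{not} argue via completeness and state-counting of $\C_X$. Instead it fixes a word $u\in S$ with $d_X(u)=d$, chosen \emph{odd}, lets $Q$ be the $d$ suffixes of $u$ lying in the set $P$ of proper prefixes of $X$, and shows first (via the Saturation Theorem) that the cosets $Hq$, $q\in Q$, are pairwise distinct. It then introduces $V=\{v\in G_\theta\mid Qv\subset HQ\}$, checks $V$ is a subgroup, and observes that every right return word to $u$ lies in $V$. The crucial step is now the First Return Theorem (Theorem~\ref{theoremReturns}): $\RR_S(u)$ generates the even subgroup, so $V$ contains the even subgroup. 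Finally one exhibits an \emph{odd} element of $V$ by taking $v$ with $uvu^{-1}\in S$ (odd by Proposition~\ref{propositionOddReturns}); since the even subgroup is maximal, $V=G_\theta$, hence $G_\theta=HQ$ and $[G_\theta:H]=d$. The basis claim then follows by combining the Cardinality Theorem with Proposition~\ref{propSymBasis}.

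Your coset-automaton plan sidesteps exactly this machinery, and that is where it breaks. You assert that $\C_X$ is complete (every letter acts at every state) and has exactly $d$ states, but neither is proved in the paper nor ``routine'': for a given proper prefix $p$ and letter $a$ one need not have $pa\in S$, so completeness requires a nontrivial argument at the level of $\gamma_X$-classes, and identifying the class count with $d_X(S)$ is essentially equivalent to the index statement you are trying to prove. The parallel argument you invoke from \cite{BertheDeFeliceDolceLeroyPerrinReutenauerRindone2013b} also goes through return words and a $V$-type subgroup, not through a direct state count, so there is nothing to cite for the ``routine verifications''. In short, you are missing the First Return Theorem, which is the specular-specific engine of the proof.

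Your Direction~2 is closer in spirit but also misses a key step. The paper does not argue $S$-maximality of $X$ directly; instead it invokes Theorem~\ref{theoremSymCompletion} to embed $X$ in a finite \emph{symmetric} $S$-maximal bifix code $Y$ of some $S$-degree $e$, applies Direction~1 to $Y$ to get $[G_\theta:\langle Y\rangle]=e$, and then compares: $X\subset Y$ gives $d\le e$ via cardinalities, while $H=\langle X\rangle\subset\langle Y\rangle$ forces $e\mid d$, whence $d=e$ and $X=Y$. Your cardinality comparison at the end is the right idea, but you need the symmetric completion to have a symmetric $Y$ to which Direction~1 applies.
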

The following result is a complement to Theorem 4.4.3 in~\cite{BerstelDeFelicePerrinReutenauerRindone2012}, asserting that if $S$
is a recurrent set, any finite bifix code $X\subset S$ is contained in a finite $S$-maximal bifix code $Z$.
It shows that when $X$ is symmetric, then $Z$ can be chosen symmetric.

\begin{theorem}
\label{theoremSymCompletion}
Let $S$ be a recurrent laminary set.
Any finite symmetric bifix code $X\subset S$ is contained in a finite symmetric $S$-maximal bifix code.
\end{theorem}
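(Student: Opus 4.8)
The plan is to reduce the statement to prefix codes and then to re-run the bifix-completion procedure of Theorem 4.4.3 in~\cite{BerstelDeFelicePerrinReutenauerRindone2012} in an inverse-closed way.

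\emph{Reduction.} A symmetric prefix code $Z\subset S$ is automatically a suffix code: if $u$ is a suffix of $v$ with $u,v\in Z$, then $u^{-1}$ is a prefix of $v^{-1}$, and $u^{-1},v^{-1}\in Z^{-1}=Z$, so $u^{-1}=v^{-1}$ and $u=v$; hence such a $Z$ is a symmetric bifix code. Moreover, a bifix code which is $S$-maximal as a prefix code is $S$-maximal as a bifix code, since a bifix code properly containing it would be a prefix code properly containing it. So it suffices to enlarge the symmetric bifix (hence prefix) code $X$ into a finite symmetric $S$-maximal prefix code.

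\emph{Construction.} Call $w\in S$ \emph{free} over a symmetric prefix code $Y$ if $w$ is prefix-incomparable and suffix-incomparable to every word of $Y$. Since inversion exchanges prefixes and suffixes and $Y=Y^{-1}$, the set of words free over $Y$ is closed under inversion, and $w\notin Y$ for every such $w$. Assume $X\subseteq Y$ and $Y$ is not an $S$-maximal prefix code. Then there is $v\in S$ prefix-incomparable to all of $Y$, and extending it on the right (which preserves prefix-incomparability to $Y$) we may assume $v$ is longer than every word of $Y$, so that $v$ has no prefix in $Y$. Since a finite bifix code which is not $S$-maximal is not $S$-maximal as a suffix code either, there is $b\in S$, again longer than every word of $Y$, with no suffix in $Y$; by recurrence of $S$ there is $c$ with $vcb\in S$, and one checks that $w:=vcb$ is free over $Y$. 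Choosing such a $w$ of minimal length and setting $Y':=Y\cup\{w,w^{-1}\}$ (or $Y\cup\{w\}$ when $w=w^{-1}$) we obtain, because $w$ and $w^{-1}$ have equal length and are both free over $Y$, a symmetric prefix code strictly containing $Y$. Iterating this starting from $X$, and invoking the length bound implicit in the proof of Theorem 4.4.3 in~\cite{BerstelDeFelicePerrinReutenauerRindone2012} to see that the adjoined words can be kept of length below a fixed $M=M(X,S)$, the process must stop — $S$ has only finitely many words of length $\le M$ and the prefix code grows at each step — ending at a finite symmetric $S$-maximal prefix code containing $X$. By the Reduction, this is the required finite symmetric $S$-maximal bifix code.

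\emph{Main obstacle.} The delicate point is exactly this finiteness bookkeeping: showing that restricting the completion to inverse-closed additions does not force arbitrarily long words. I expect it to follow from the combinatorial estimates already present in the proof of Theorem 4.4.3 in~\cite{BerstelDeFelicePerrinReutenauerRindone2012}, which are insensitive to replacing a word by the pair formed with its inverse (since $|w^{-1}|=|w|$); one only has to reorganize that argument so that at each stage the "uncovered" words are handled in inverse-closed batches. By contrast, starting instead from a non-symmetric completion $Z\supseteq X$ together with $Z^{-1}\supseteq X^{-1}=X$ (both finite $S$-maximal bifix codes of the same $S$-degree, by symmetry of $S$) and trying to merge them does not seem to help, because the usual ways of combining two $S$-maximal codes fail to produce an $S$-maximal one.
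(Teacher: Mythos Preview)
Your proposal has a genuine gap, which you yourself flag in the ``Main obstacle'' paragraph: the finiteness of the iterative completion is not established. The hope that the estimates ``already present in the proof of Theorem~4.4.3'' can be reorganized to bound the lengths of the adjoined words is not substantiated, and in fact that proof does not proceed by the greedy word-by-word extension you describe, so there is nothing there to lift directly. Without such a bound, nothing prevents your sequence of symmetric prefix codes from growing indefinitely.

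The paper avoids this difficulty altogether by a structural argument. Rather than completing $X$ incrementally, it sets $d=\max\{d_X(w)\mid w\in X\}$ and invokes Theorem~4.3.12 of~\cite{BerstelDeFelicePerrinReutenauerRindone2012}: there is an $S$-maximal bifix code $Z$ of $S$-degree $d+1$ whose kernel is exactly $X$, and this $Z$ is finite by Theorem~4.4.3 of the same reference. Symmetry of $Z$ then comes for free. By Theorem~4.3.11 one has $d_Z(w)=\min\{d+1,d_X(w)\}$, and $d_X(w)=d_X(w^{-1})$ since $(q,x,p)\mapsto(p^{-1},x^{-1},q^{-1})$ is a bijection between the parses of $w$ and those of $w^{-1}$ with respect to the symmetric code $X$. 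Hence $d_Z(w)=d_Z(w^{-1})$ for all $w\in S$, which forces $Z=Z^{-1}$ (for instance because $Z^{-1}$ is then a finite $S$-maximal bifix code of the same degree and kernel as $Z$).

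The insight you are missing is that the canonical bifix completion supplied by the kernel theory is \emph{automatically} symmetric when $X$ is; no inverse-closed bookkeeping or ad hoc symmetrization is needed. Your reduction to $S$-maximal prefix codes, while correct as stated, trades the well-developed bifix machinery (degrees, kernels, parse functions) for a bare-hands construction in which termination is precisely the hard part.
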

\begin{proof}
Let $X\subset S$ be a finite symmetric bifix code which is not $S$-maximal.
Since $X$ is finite, the number $d=\max\{d_X(w)\mid w\in X\}$ is finite.
By Theorem 4.3.12 of~\cite{BerstelDeFelicePerrinReutenauerRindone2012}, $X$ is the kernel of some $S$-maximal bifix code $Z$ of $S$-degree $d+1$.
Since $S$ is recurrent, by Theorem 4.4.3 of~\cite{BerstelDeFelicePerrinReutenauerRindone2012}, $Z$ is finite.
Let us show that $Z$ is symmetric. Indeed, we have by Theorem 4.3.11 in~\cite{BerstelDeFelicePerrinReutenauerRindone2012}, $d_Z(w)=\min\{d+1,d_X(w)\}$. Since $X$ is symmetric, we have $d_X(w)=d_X(w^{-1})$ for any $w\in S$.
Indeed, $(q,x,p)$ is a parse of $w$ if and only if $(p^{-1},x^{-1},q^{-1})$ is a parse of $w^{-1}$.
Thus $d_Z(w)=d_Z(w^{-1})$. This implies that $Z$ is symmetric.
\end{proof}

\begin{proofof}{of Theorem~\ref{theoremFiniteIndex}}
Assume first that $X$ is a finite symmetric  $S$-maximal bifix code of $S$-degree  $d$.
Let $P$ be the set of proper prefixes of $X$.
Let $H$ be  the subgroup generated by $X$.

Let $u\in S$ be a word such that $d_X(u)=d$, or, equivalently, which is not an internal factor of $X$.
Since $u$ can be replaced by any of its right extensions, we may assume that $u$ is odd.
Let $Q$ be the set formed of the $d$ suffixes of $u$ which are in $P$.

Let us first show that the cosets $Hq$ for $q\in Q$ are
disjoint. Indeed, $Hp\cap Hq\ne\emptyset$ implies $Hp=Hq$.
Any $p,q\in Q$ are comparable for the suffix order. 
Assuming that $q$ is longer than $p$, we have
$q=tp$ for some $t\in P$. Then $Hp=Hq$ implies $Ht=H$ and thus $t\in
H\cap S$. By Theorem~\ref{theoremSaturation}, since $S$ is acyclic
and $X$ is symmetric,
 this implies $t\in X^*$
and thus $t=\varepsilon$. Thus $p=q$.

Let
\begin{displaymath}
  V=\{v\in G_\theta\mid Qv\subset HQ\}
\end{displaymath}
where the products $Qv$ and $HQ$ are understood in the  group $G_\theta$
(that is, with
reduction).

For any $v\in V$ the map $p\mapsto q$ from $Q$ into itself
defined by $pv\in Hq$
is a permutation of $Q$. Indeed, suppose that for
$p,q\in Q$, one has $pv,qv\in Hr$ for some $r\in Q$. Then $rv^{-1}$ is
in $Hp\cap Hq$ and thus $p=q$ by the above argument.

The set $V$ is a subgroup of $G_\theta$. Indeed, $1\in V$. Next, let $v\in V$. Then
for any $q\in Q$, since $v$ defines a permutation of $Q$, there
is a $p\in Q$ such that $pv\in
Hq$. Then $q v^{-1} \in Hp$. This shows that $v^{-1}\in V$.
Next, if $v,w\in V$, then $Qvw\subset HQw\subset HQ$ and thus $vw\in
V$.

We show that the set $\RR_S(u)$ is contained in $V$. 
Let $y\in\RR_S(u)$.
Since $uy$ ends with $u$, and since $u$ is not an internal
factor of $X$, for any $p\in Q$, we have
$py=xq$ for some $x\in X^*$ and $q\in Q$. Therefore $y\in V$.

By Theorem~\ref{theoremReturns}, the group generated by $\RR_S(u)$ is the even
subgroup.  Thus $V$ contains the even subgroup. But $V$ contains odd
words. Indeed, let $v\in S$ be such that $uvu^{-1}\in S$. Then $v$
is odd by Proposition~\ref{propositionOddReturns}. Moreover, for any
$p\in Q$ there is some $q\in Q$ such that $pvq^{-1}\in X^*$. This implies
that $pv\in X^*q$ and thus $v$ is in $V$. Since the even subgroup is of index $2$, it is maximal in $G_\theta$ and we conclude that $V=G_\theta$.

Thus $Qw\subset HQ$ for any $w\in G_\theta$. 
Since $\varepsilon \in Q$, we have in particular $w\in HQ$ for any $w\in G_\theta$.
Thus $G_\theta=HQ$. Since $\Card(Q)=d$, and since the right cosets $Hq$ for
$q\in Q$ are pairwise disjoint, this shows that $H$ is a subgroup of
index $d$.
By Theorem~\ref{corollaryCardinality}, we have $\Card(X)-2=
d(\Card(A)-2)$. But since $X$ generates $H$,   and since $X$ contains
the inverses of its elements, this implies by Proposition~\ref{propSymBasis}
that $X$ is a monoidal basis of $H$.

Assume conversely that the finite bifix code $X\subset F$ is a monoidal
basis of the
group $H=\langle X\rangle$ and that $\langle X\rangle$ has index $d$. Since $X$ is a monoidal
basis, by Schreier's Formula, we have $\Card(X)= (k-2)d+2$, where
$k=\Card(A)$. The case $k=1$ is straightforward; thus we assume
$k\ge2$. 
By
Theorem~\ref{theoremSymCompletion},  there is a finite symmetric $S$-maximal bifix
code $Y$ containing $X$. Let $e$ be the $S$-degree of $Y$. By the
first part of the proof, $Y$ is a monoidal basis of a subgroup $K$ of index $e$
of $G_\theta$.  In particular, it has $(k-2)e+2$ elements. Since
$X\subset Y$, we have $(k-2)d+2\le (k-2)e+2$ and thus $d\le e$. On the
other hand, since $H$ is included in $K$, $d$ is a multiple of $e$ and
thus $e\le d$. We conclude that $d=e$ and thus that $X=Y$.
\end{proofof}

Note that when $X$ is not symmetric, the index of the subgroup generated
by $X$ may be  different of $d_S(X)$, as shown in the following example.
\begin{example}
Let $T$ be as in  Example~\ref{exampleInvolution3}
and let $S=\LL(T)$. The set $X=\{a,ba^{-1},bc^{-1},b^{-1}c,b^{-1}c^{-1},
a^{-1}c,cb,cb^{-1},c^{-1}ab^{-1},c^{-1}b\}$ is an $S$-maximal bifix
code of $S$-degree $2$.
Since $b,c\in\langle X\rangle$, the group generated by $X$ is the free group on $A$.
\end{example}

The following consequence of Theorem~\ref{theoremFiniteIndex} is the
counterpart for specular sets of Theorem 5.10 in~\cite{BertheDeFeliceDolceLeroyPerrinReutenauerRindone2013m}. We give in~\cite{BertheDelecroixDolcePerrinReutenauerRindone2014} a geometric proof
and interpretation of Theorem~\ref{theoremGroupCode} for the natural
coding of a linear involution.

\begin{theorem}\label{theoremGroupCode}
Let $S$ be a recurrent specular set.
For any subgroup $H$ of finite index of the group $G_\theta$, the set of 
prime
words in $S$ with respect to $H$  is a monoidal basis of $H$.
\end{theorem}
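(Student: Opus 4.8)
The plan is to deduce this from the Finite Index Basis Theorem (Theorem~\ref{theoremFiniteIndex}) together with the Saturation Theorem (Theorem~\ref{theoremSaturation}). Let $H$ be a subgroup of finite index $d$ in $G_\theta$, and let $X\subset S$ be the set of prime words in $S$ with respect to $H$, that is, the nonempty words of $S\cap H$ with no proper nonempty prefix in $S\cap H$. As noted just before Theorem~\ref{theoremSaturation}, $X$ is a symmetric bifix code and it is the unique bifix code with $X\subset S\cap H\subset X^*$; in particular $\langle X\rangle\subset H$. The first step is to show $X$ is finite and $S$-maximal. For finiteness and $S$-maximality at once, I would argue that every long enough word of $S$ must have a prefix in $S\cap H$: fixing coset representatives $g_1=1,g_2,\dots,g_d$ of $H$, a word $w\in S$ together with sufficiently many right extensions inside $S$ (available by recurrence) produces, among the prefixes $w, wu_1, wu_1u_2,\dots$, two lying in the same coset $Hg_i$; then a suffix-type argument on $S$ shows some nonempty factor lies in $H\cap S$. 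More cleanly, one shows directly that $X=S\cap H$ restricted to "first entries" is $S$-maximal as a prefix code: any $w\in S$ is comparable to an element of $X$, because otherwise the infinitely many prefixes of arbitrarily long extensions of $w$ would hit all cosets without ever lying in $H$, which is impossible once we have $d+1$ of them mapping to the $d$ cosets and use that $S\cap H$ is closed under the appropriate concatenations. Being an $S$-maximal prefix code that is also a bifix code, $X$ is an $S$-maximal bifix code; its $S$-degree is finite by Theorem 4.2.8 of~\cite{BerstelDeFelicePerrinReutenauerRindone2012}, hence $X$ is finite.

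The second step is to identify $\langle X\rangle$ with $H$ and to compute the degree. By Theorem~\ref{theoremFiniteIndex}, since $X$ is a finite symmetric $S$-maximal bifix code of some $S$-degree $e$, it is a monoidal basis of the subgroup $\langle X\rangle$, and $\langle X\rangle$ has index $e$ in $G_\theta$. On the other hand, by the Saturation Theorem (Theorem~\ref{theoremSaturation}, applicable since $S$ is specular hence acyclic and $X$ is symmetric), $X$ is exactly the set of prime words with respect to $\langle X\rangle$ and $\langle X\rangle\cap S=X^*\cap S$. It remains to see $\langle X\rangle=H$. The inclusion $\langle X\rangle\subset H$ is clear. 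For the reverse inclusion, I would use that $X$ is also the set of prime words with respect to $H$ by construction, so $S\cap H\subset X^*\subset\langle X\rangle$; since $S$ is recurrent, $S\cap H$ generates $H$ (every element of $H$ is a product of elements of $S\cap H$: write $g\in H$ as a word, pad it on both sides and between letters by connecting words from $S$ so that the resulting word lies in $S$, then it lies in $S\cap H$ and equals $g$ times correction terms also in $S\cap H$). Hence $H\subset\langle X\rangle$, so $H=\langle X\rangle$ and $e=d$. This shows $X$ is a monoidal basis of $H$.

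I expect the main obstacle to be the argument that $S\cap H$ generates $H$ and, relatedly, that the set of first-entry prime words is $S$-maximal — both require the recurrence of $S$ to "glue" group elements represented by reduced words into words that actually lie in $S$, and some care is needed because specular sets consist of reduced words while products in $G_\theta$ involve cancellation. The cleanest route is probably to phrase the $S$-maximality via the coset representation: define an automaton on the $d$ cosets with transitions given by right multiplication by letters of $A$, observe that a word of $S$ read from the coset of $\varepsilon$ returns to that coset precisely when it lies in $H$, and use recurrence and finiteness of the state set to force a return within bounded length — this simultaneously gives $S$-maximality, finiteness of the $S$-degree, and the generation statement. Once these structural facts are in place, the conclusion follows immediately by combining Theorems~\ref{theoremFiniteIndex} and~\ref{theoremSaturation} as above.
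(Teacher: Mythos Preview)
Your overall plan---reduce to the Finite Index Basis Theorem---is exactly the paper's strategy, but your execution diverges at the two crucial points and one of them is a real gap.

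For $S$-maximality, the paper does not argue via a pigeonhole on cosets of prefixes. Instead it bounds the number of \emph{parses} directly: if $(v,x,u)$ and $(v',x',u')$ are parses of $w\in S$ and $v,v'$ lie in the same left coset $vH=v'H$, then writing $v=v's$ one gets $s\in H\cap S\subset X^*$, and since $v$ has no suffix in $X$ this forces $s=\varepsilon$. Thus distinct parses have their left parts in distinct cosets, so there are at most $d$ parses; hence the $S$-degree $e$ is finite with $e\le d$, and $X$ is $S$-maximal by Theorem~4.2.8 of~\cite{BerstelDeFelicePerrinReutenauerRindone2012}. Your pigeonhole sketch, by contrast, only produces two prefixes $p,ps$ in the same coset, which gives $s\in p^{-1}Hp$, not $s\in H$; so it does not yield a prefix in $H$ without further work, and in any case it does not deliver the inequality $e\le d$ that the paper actually needs.

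That inequality is the key to closing the argument. After invoking Theorem~\ref{theoremFiniteIndex} to get that $X$ is a monoidal basis of $K=\langle X\rangle$ of index $e$, the paper simply observes $K\subset H$, so $d\mid e$, and combined with $e\le d$ this forces $e=d$ and $K=H$. You instead try to show $H\subset\langle X\rangle$ by claiming that $S\cap H$ generates $H$, with a sketch about padding an arbitrary $g\in H$ by connecting words from $S$. This is the gap: inserting words $u_i$ between the letters of $g$ changes the group element, and there is no reason the ``correction terms'' lie in $H$. The statement that $S\cap H$ generates $H$ is in fact a \emph{consequence} of the theorem, not an available ingredient, and your justification does not establish it. Replacing your generation argument by the parse-count bound $e\le d$ and the divisibility $d\mid e$ fixes the proof and brings it in line with the paper.
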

\begin{proof}
Let $X$ be the set of prime words in $S$ with respect to $H$. The set $X$
is a symmetric bifix code 
and the number of parses of a word of $S$ is at most equal to the
index $d$ of $H$ in $G_\theta$. Indeed, let $(v,x,u)$ and $(v',x',u')$ be
two parses of a word $w\in S$. If $v,v'$ are in the same left coset of $H$,
then the two interpretations are equal. Indeed, assume that $|v|\ge|v'|$
and set $v=v's$. Then $s\in H$ and thus $s\in X^*$, which implies $s=1$
by definition of a parse. Therefore $X$ is an $S$-maximal bifix
code by~\cite[Theorem 4.2.8]{BerstelDeFelicePerrinReutenauerRindone2012}.

By Theorem~\ref{theoremFiniteIndex}, $X$ is a monoidal basis of
a subgroup $K$ of index $e$. Since $K\subset H$, the index of $K$
is a multiple of the index of $H$. Since $e\le d$, we conclude that
$e=d$ and that $K=H$.
\end{proof}
We illustrate Theorem~\ref{theoremGroupCode} with the following 
interesting example.
\begin{example}
Let $T$ be as in Example~\ref{exampleInvolution3}
and let $S=\LL(T)$. Let $G$ be the group of even words in $F_A$.
It is a subgroup of index $2$. The set of prime words in $S$ 
with respect to $G$ 
is the set $Y=X\cup X^{-1}$ with
\begin{displaymath}
X=\{a,ba^{-1}c,bc^{-1},b^{-1}c^{-1},b^{-1}c\}.
\end{displaymath}
Actually, the transformation induced by $T$ on the set $I\times \{0\}$
(the upper part of $\hat{I}$ in Figure~\ref{figureLinear3}) is the
interval exchange transformation represented in Figure~\ref{figureRetourHaut}.
Its upper intervals are the $I_x$ for $x\in X$.
\begin{figure}[hbt]
\centering
\gasset{AHnb=0,Nadjust=wh}
\begin{picture}(100,20)
\node(h0)(0,10){}\node(babarc)(23.6,10){}\node(bcbar)(47.2,10){}
\node(bbarcbar)(61.8,10){}\node(bbarc)(85.4,10){}\node(h1)(100,10){}
\node(b0)(0,0){}\node(cb)(14.1,0){}\node(cbarb)(38.2,0){}\node(cbarabbar)(52.6,0){}
\node(abar)(76.4,0){}\node(b1)(100,0){}

\drawedge[linecolor=red,linewidth=1](h0,babarc){$a$}
\drawedge[linecolor=blue,linewidth=1](babarc,bcbar){$ba^{-1}c$}
\drawedge[linecolor=forestgreen,linewidth=1](bcbar,bbarcbar){$bc^{-1}$}
\drawedge[linecolor=golden,linewidth=1](bbarcbar,bbarc){$b^{-1}c^{-1}$}
\drawedge[linecolor=gray,linewidth=1](bbarc,h1){$b^{-1}c$}
\drawedge[linecolor=light-gray,linewidth=1](b0,cb){$c^{-1}b$}
\drawedge[linecolor=yellow,linewidth=1](cb,cbarb){$cb$}
\drawedge[linecolor=green,linewidth=1](cbarb,cbarabbar){$cb^{-1}$}
\drawedge[linecolor=cyan,linewidth=1](cbarabbar,abar){$c^{-1}ab^{-1}$}
\drawedge[linecolor=magenta,linewidth=1](abar,b1){$a^{-1}$}
\end{picture}
\caption{The transformation induced on the upper level.}\label{figureRetourHaut}
\end{figure}
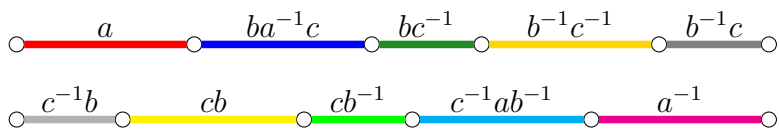
This corresponds to the fact that the words of $X$ correspond to
the first returns to $I\times\{0\}$ while the words of $X^{-1}$ correspond
to the first returns to $I\times\{1\}$.
\end{example}

\subsection{A converse of the Finite Index Basis Theorem}
The following is a converse of Theorem~\ref{theoremFiniteIndex}.

\begin{theorem}
\label{propositionConverseFIB}
Let $S$ be a recurrent laminary set of factor complexity $p_n=n(\Card(A)-2)+2$.
If $S\cap A^n$ is a monoidal basis of the subgroup $\langle A^n\rangle$ for all $n\ge 1$, then $S$ is a specular set.
\end{theorem}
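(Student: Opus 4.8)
The plan is to mimic the proof of Theorem~\ref{theoremConverseCard} (which uses Proposition~\ref{propositionConverseCard}) but now exploiting the stronger hypothesis that each $S\cap A^n$ is a \emph{monoidal basis} of $\langle A^n\rangle$, not merely of the right cardinality. First I would observe that $S$ is a laminary set, so it is symmetric, biextendable and formed of reduced words; what remains is to show that $S$ is a tree set of characteristic $2$, i.e.\ that $\E_S(w)$ is a tree for every nonempty $w\in S$ and that $\E_S(\varepsilon)$ is a union of two trees. Since the factor complexity is $p_n=n(\Card(A)-2)+2$, we have $t_n=p_{n+1}-2p_n+p_{n-1}=0$ for $n\ge 1$ and $t_0=p_1-2p_0=\Card(A)-2$; by Proposition~\ref{propCANT} this gives $\sum_{w\in S\cap A^n}m(w)=0$ for $n\ge1$ and $m_S(\varepsilon)=\Card(A)-2-(\text{something})$ — more directly, $\chi(S)=1-m_S(\varepsilon)$ and the complexity formula forces $\chi(S)=2$, so $m_S(\varepsilon)=-1$. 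Thus the total defect at each level vanishes, and it suffices to prove acyclicity: an acyclic biextendable set with $\sum_w m(w)=0$ at every nonempty level has every $\E_S(w)$ connected (hence a tree), and $\E_S(\varepsilon)$ with $m_S(\varepsilon)=-1$ and acyclic is a union of exactly two trees.

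So the real work is to show that $S$ is acyclic, and this is where the monoidal basis hypothesis enters, via Theorem~\ref{theoremFreeness} (the Freeness Theorem): a laminary set $S$ is acyclic if and only if every symmetric bifix code $X\subset S$ is free. I would argue the contrapositive is blocked: suppose $S$ is not acyclic; I want to produce a symmetric bifix code contained in $S$ that is not free, contradicting the hypothesis through $A^n$. The natural candidate is $X=S\cap A^n$ for suitable $n$: it is a symmetric bifix code (closed under inverse since $S$ is laminary and reversal-closed in the sense required), and by hypothesis it is a monoidal basis of $\langle A^n\rangle=\langle X\rangle$, hence free. The point is that if $\E_S(w)$ contains a cycle for some $w$ of length, say, $m$, then taking $n$ large compared to $|w|$ and using the incidence graph $\G_X$ (which, as noted in the excerpt's example, for $X=S\cap A^2$ coincides with $\E_S(\varepsilon)$, and more generally relates to extension graphs of longer words) one detects a nontrivial relation among elements of $X$ — i.e.\ $X$ is not free. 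Concretely: a cycle in $\E_S(w)$ yields two distinct reduced paths between the same pair of vertices, which translates (via the correspondence between incidence-graph cycles and group relations used in the proof of Theorem~\ref{theoremFreeness} and Proposition~\ref{proposition6.6}) into a product $x_1\cdots x_k$ of elements of $S\cap A^n$ reducing to $\varepsilon$ without an adjacent cancelling pair. This contradicts freeness of $S\cap A^n$, hence $S$ is acyclic.

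The main obstacle, and the step needing the most care, is the passage from a cycle in an extension graph $\E_S(w)$ to a genuine nontrivial relation among the words of $X=S\cap A^n$ for an appropriate $n$ — that is, reconstructing the ``if $S$ is not acyclic then some $S\cap A^n$ is not free'' direction cleanly. One must choose $n$ so that all the words $awb$ witnessing the edges of the cycle in $\E_S(w)$ are themselves factors of length-$n$ words of $S$, use recurrence to glue consecutive edge-words into a closed walk in $S$, and then read off the corresponding product in $G_\theta$, checking that it is reduced as a product of elements of $X$ (no two consecutive factors are mutually inverse) precisely because the path in $\E_S(w)$ was chosen reduced and $w$ is not a palindrome-obstruction — here one can invoke an argument in the spirit of Proposition~\ref{propositionClelia} to rule out the self-inverse central edge. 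Once acyclicity is established, I would close by noting: $S$ acyclic and laminary, with $m_S(\varepsilon)=-1$, forces $\E_S(\varepsilon)$ to be a disjoint union of two trees, while for nonempty $w$ the vanishing $\sum_{w\in S\cap A^n}m(w)=0$ together with acyclicity ($m(w)\le0$ pointwise) forces $m(w)=0$ and $\E_S(w)$ connected, hence a tree; therefore $S$ is a tree set of characteristic $2$, i.e.\ a specular set, completing the proof.
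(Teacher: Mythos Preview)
Your overall strategy matches the paper's: show acyclicity of every $\E_S(w)$ using the monoidal-basis hypothesis, then use the complexity formula and Proposition~\ref{propCANT} to conclude $m_S(w)=0$ for nonempty $w$ and $m_S(\varepsilon)=-1$, so each $\E_S(w)$ is a tree and $\E_S(\varepsilon)$ is a union of two trees. That part is fine.

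Where you diverge is in the acyclicity step, and there you are making it much harder than necessary. You propose to take $n$ ``large'' compared to $|w|$, embed the edge-words $awb$ as \emph{factors} of words of length $n$, use recurrence to glue them into a closed walk in $S$, and then extract a relation, worrying about palindrome obstructions and invoking something ``in the spirit of Proposition~\ref{propositionClelia}''. None of this is needed, and some of it is misguided: the product witnessing non-freeness does not have to be a word in $S$, so there is nothing to glue; and Proposition~\ref{propositionClelia} is a statement about specular sets, which is what you are trying to prove.

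The paper's argument is direct. For $w\in S$ with $|w|=m$, take $n=m+2$. The edge-words $a_iwb_j$ of a putative cycle $(a_1,b_1,a_2,b_2,\ldots,a_p,b_p,a_1)$ in $\E_S(w)$ are themselves elements of $Y=S\cap A^{m+2}$ (and so are their inverses $b_j^{-1}w^{-1}a_i^{-1}$, since $S$ is laminary). Then one simply writes down the telescoping product
\[
(a_1wb_1)(a_2wb_1)^{-1}(a_2wb_2)(a_3wb_2)^{-1}\cdots(a_pwb_p)(a_1wb_p)^{-1}=\varepsilon
\]
in $G_\theta$. Consecutive factors are never mutually inverse because the cycle is reduced: $a_jwb_j\ne a_{j+1}wb_j$ since $a_j\ne a_{j+1}$, and $a_{j+1}wb_j\ne a_{j+1}wb_{j+1}$ since $b_j\ne b_{j+1}$. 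This contradicts the hypothesis that $Y$ is a monoidal basis. No recurrence, no gluing, no choice of large $n$, no auxiliary propositions.
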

\begin{proof}
Consider $w\in S$ and set $m=|w|$.
The set $X=(AwA \cup Aw^{-1}A) \cap S$ is closed by taking inverses and it is included in $Y=S\cap A^{m+2}$.
Since $Y$ is a monoidal basis of a subgroup, $X\subset Y$ is a monoidal basis of the subgroup  $\langle X\rangle$.

This implies that the graph $\E(w)$ is acyclic.
Indeed, assume that the path $(a_1,b_1,\ldots, a_p,b_p,a_1)$ is a cycle in $\E(w)$ with $p\ge 2$, $a_i\in L(w)$, $b_i\in R(w)$ for $1\le i\le p$ and $a_1\ne a_p$.
Then $a_1wb_1,a_2wb_1,\ldots,$ $a_pwb_p,a_1wb_p\in X$.
But
\begin{displaymath}
a_1wb_1(a_2wb_1)^{-1}a_2wb_2\cdots a_pwb_p(a_1wb_p)^{-1} = \varepsilon,
\end{displaymath}
with $a_jwb_j(a_{j+1}wb_j)^{-1} = a_ja^{-1}_{j+1} \not = \varepsilon$ (otherwise $a_j = a_{j+1}$), contradicting the fact that $X$ is a monoidal basis.

Since $p_n=n(\Card(A)-2)+2$, we have $s_n=\Card(A)-2$ and $b_n=0$ for all $n>0$.
By Proposition~\ref{propCANT}, it implies that $m(w)=0$ for all nonempty words $w$.
Since $\E(w)$ is acyclic, we conclude that $\E(w)$ is a tree.

Finally, since $\E(\varepsilon)$ is acyclic, and since $m(\varepsilon)=-1$, the graph $\E(\varepsilon)$ has two connected components which are trees.
\end{proof}

\section*{References}
\addcontentsline{toc}{section}{References}

\bibliographystyle{plain}
\bibliography{specular}

\end{document}